\setlist{topsep=3pt, itemsep=0pt}
\newtheorem{theorem}{Theorem}
\newtheorem{lemma}[theorem]{Lemma}
\newtheorem{coro}[theorem]{Corollary}
\newtheorem{claim}[theorem]{Claim}
\newcommand{\Z}{\mathbb{Z}}
\newcommand{\R}{\mathbb{R}}
\newcommand{\calP}{\mathcal{P}}
\newcommand{\calS}{\mathcal{S}}
\newcommand{\calT}{\mathcal{T}}
\newcommand{\calX}{\mathcal{X}}
\newcommand{\ceil}[1]{\left\lceil#1\right\rceil}
\newcommand{\opt}{\mathrm{opt}}
\newcommand{\rmin}{\mathrm{in}}
\newcommand{\rmout}{\mathrm{out}}
\newcommand{\supp}{\mathrm{supp}}
\newcommand{\poly}{\mathrm{poly}}
\newcommand{\bfT}{\mathbf{T}}
\DeclareMathOperator{\union}{\bigcup}
\newcommand{\remove}[1]{}
\title{Complexity and Approximation Algorithms for Fixed Charge Transportation Problems}
\author{ 
Yong Chen \\ Department of Mathematics \\ Hangzhou Dianzi University, \\ Hangzhou, Zhejiang Province, China \\ 
\href{mailto:chenyong@hdu.edu.cn}{chenyong@hdu.edu.cn} 
\and Shi Li  \\ School of Computer Science, \\ Nanjing University, \\ Nanjing, Jiangsu Province, China \\ \href{mailto:shili@nju.edu.cn}{shili@nju.edu.cn} \and Zihao Liang \footnotemark[\value{footnote}] \\ School of Computer Science, \\ Nanjing University, \\ Nanjing, Jiangsu Province, China \\\href{zhliang@smail.nju.edu.cn}{zhliang@smail.nju.edu.cn}
}
\date{}
\begin{document}
    \maketitle
    \begin{abstract}
    The Fixed Charge Transportation (FCT) problem models transportation scenarios where we need to send a commodity from $n$ sources to $m$ sinks, and the cost of sending a commodity from a source to a sink consists of a linear component and a fixed component. Despite extensive research on exponential time exact algorithms and heuristic algorithms for FCT and its variants, their approximability and computational complexity are not well understood.
    
    In this work, we initiate a systematic study of the approximability and complexity of these problems. When there are no linear costs, we call the problem the Pure Fixed Charge Transportation (PFCT) problem.  We also distinguish between cases with general, sink-independent, and uniform fixed costs; we use the suffixes ``-S'' and ``-U'' to denote the latter two cases, respectively.  This gives us six variants of the FCT problem.
    
    We give a complete characterization of the existence of $O(1)$-approximation algorithms for these variants.  In particular, we give $2$-approximation algorithms for FCT-U and PFCT-S, and a $(6/5 + \epsilon)$-approximation for PFCT-U. On the negative side, we prove that FCT and PFCT are NP-hard to approximate within a factor of $O(\log^{2-\epsilon} (\max\{n, m\}))$ for any constant $\epsilon > 0$, FCT-S is NP-hard to approximate within a factor of $c\log (\max\{n, m\})$ for some constant $c> 0$, and PFCT-U is APX-hard. 
    Additionally, we design an Efficient Parameterized Approximation Scheme (EPAS) for PFCT when parameterized by the number $n$ of sources, and an $O(1/\epsilon)$-bicriteria approximation for the FCT problem, when we are allowed to violate the demand constraints for sinks by a factor of $1\pm \epsilon$.
\end{abstract}

\thispagestyle{empty}\newpage\setcounter{page}{1}

\section{Introduction}
\label{sec:intro}
In 1954, Hirsch and Dantzig  \cite{HD54, HD68} introduced the Fixed Charge Transportation (FCT) problem, where the cost of transportation from a source to a sink contains a linear component, which is proportional to the amount transported, and a fixed cost, which is incurred whenever the transportation occurs. 
The objective is to allocate all supplies of the commodity from sources to sinks in a way that minimizes the total cost, which is the sum of linear and fixed costs.

The FCT problem is a special case of the single-commodity uncapacitated fixed-charge network problem \cite{RW, OW}, which itself belongs to a broader class of network design problems \cite{A, FG, RKOW}. The single-commodity uncapacitated fixed-charge network problem is defined on an arbitrary directed graph \( (V, E) \), where each edge \( e \in E \) has a linear cost and a fixed cost as in the FCT problem. The commodity can be transported only along edges in \( E \). 
The goal is to match the given supplies and demands so as to minimize the sum of variable and fixed costs. The FCT problem can be seen as a special case of this problem on complete bipartite directed graphs from sources to sinks. As a byproduct in this paper, we show that the two problems are indeed equivalent.


When there are only fixed costs (i.e., linear costs are zero), the FCT problem is called the Pure Fixed Charge Transportation (PFCT) problem \cite{FM, GL}. The PFCT problem arises in scenarios where transportation costs are negligible compared to the costs of establishing transportation routes, such as pipelines and electric power systems. Fisk and McKeown \cite{FM} proposed a search algorithm to solve the PFCT problem exactly, and Göthe-Lundgren and Larsson \cite{GL} reformulated it as a set-cover problem, and designed heuristic algorithms by relating it to the maximum flow problem. 

Despite considerable research on exact algorithms \cite{M,G,AA,ZLRP,MR}, branch-and-bound algorithms \cite{KU,BGK,CE,RBM}, and heuristic algorithms \cite{WH,SAMD,A09,BRT} for the FCT and PFCT problems, their computational complexity and approximation algorithms were not well studied. To the best of our knowledge, the only known results in this direction are that both FCT \cite{GP} and PFCT \cite{K05,SFC} are NP-hard.

In this paper, we initiate a systematic study of the approximability and complexity of FCT problems. Before we state our results, we give formal definitions of the problems. 
    
\subsection{Formal Definitions of Problems}

\label{subsec:formal-definition}

In the Fixed Charge Transportation (FCT) problem, we are given a set \( S \) of \( n \) sources and a set \( T \) of \( m \) sinks. Each source \( i \in S \) has a supply of \( a_i \in \mathbb{Z}_{>0} \) units for a commodity, and each sink \( j \in T \) has a demand of \( b_j \in \mathbb{Z}_{>0} \) units. It is guaranteed that \( \sum_{i \in S} a_i = \sum_{j \in T} b_j \). For every pair \( i \in S \) and \( j \in T \), there is an edge \( ij \) with a fixed cost \( f_{ij} \in \mathbb{R}_{\geq 0} \) and a linear cost ratio \( c_{ij} \in \mathbb{R}_{\geq 0} \). 
The objective is to satisfy all demands by sending the commodity from sources to sinks while minimizing the total cost. The formal mathematical formulation is as follows:
\[\textstyle
\min \quad \sum_{i \in S, j \in T} \left( 1_{x_{ij} > 0} \cdot f_{ij} + c_{ij} \cdot x_{ij} \right) \qquad \text{s.t.} \qquad x \in \calX,
\]
where throughout the paper we use 
$$ \textstyle
\calX := \left\{x \in \R_{\geq 0}^{S \times T}\quad :\quad
\sum_{j \in T} x_{ij} = a_i, \forall i \in S; \quad 
\sum_{i \in S} x_{ij} = b_j, \forall j \in T 
\right\}
$$
to denote the set of all feasible solutions to the given FCT instance. \medskip

The Pure Fixed Charge Transportation (PFCT) problem is the special case where there are no linear costs, i.e., \( c_{ij} = 0 \) for every \( i \in S \) and \( j \in T \). We shall show that FCT and PFCT problems capture the Directed Steiner Tree (DST) and Set Cover problems respectively, and thus do not admit constant approximation unless P $=$ NP. To gain a more refined understanding of its complexity, we study two special cases of the fixed cost vector \( f \):  
\begin{itemize}
    \item \emph{Sink-independent fixed costs}. The fixed costs \( (f_{ij})_{i \in S, j \in T} \) are said to be sink-independent if there exists a cost vector \( f \in \mathbb{R}_{\geq 0}^S \) such that \( f_{ij} = f_i \) for all \( i \in S \) and \( j \in T \). We denote this special case using the suffix ``-S''.  
    \item \emph{Uniform fixed costs}. The fixed costs are uniform if \( f_{ij} = 1 \) for all \( i \in S \) and \( j \in T \). This case is denoted by the suffix ``-U''.
\end{itemize}

So, combining these variations—whether linear costs are present and the properties of the fixed costs—results in six variants of the FCT problem, denoted as FCT,  FCT-S, FCT-U, PFCT,  PFCT-S and PFCT-U.

\subsection{Our Results: $O(1)$-Approximability}

We provide a complete characterization of the existence of $O(1)$-approximation algorithms for the six variants of FCT. The results are listed in Table~\ref{tab:results} for convenience.
\def\arraystretch{1.5}
\begin{table}[h]
    \centering
    \begin{tabular}{||c||c|c||c|c||} \hhline{#=#==#==#}
       \multirow{2}{*}{}  & \multicolumn{2}{c||}{FCT} & \multicolumn{2}{c||}{PFCT}\\\cline{2-5}
         & Approx.~Ratio & Hardness & Approx.~Ratio & Hardness\\\hhline{#=#=|=#=|=#}
        general &   & $\Omega(\log^{2-\epsilon} (mn))$ &  & $\Omega(\log^{2-\epsilon} (mn))$ \\\hline
        -S &  & $\Omega(\log (mn))$ & $2$ & APX-hard \\\hline
        -U & $2$ & APX-hard & $\frac65 + \epsilon$ & APX-hard \\\hhline{#=#=|=#=|=#}
    \end{tabular}
    \caption{Approximability and Hardness of Variants of FCT.}
    \label{tab:results}
\end{table}

We first discuss the approximation results. For both PFCT-S and FCT-U, we give $2$-approximation algorithms:
\begin{theorem}
    \label{thm:PFCT-S-2}
    There is a $2$-approximation algorithm for the Pure Fixed Charge Transportation with Sink-Independent Fixed Costs (PFCT-S) problem. 
\end{theorem}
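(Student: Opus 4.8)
The plan is to reduce the objective to a structural quantity, run a greedy routine that processes sources in decreasing order of fixed cost, and compare it with the optimum prefix by prefix. First, an optimal solution may be assumed to have \emph{forest} support: whenever a feasible solution's support contains a cycle, pushing flow around the cycle zeroes out one of its edges without increasing the cost (all fixed costs are nonnegative) while preserving feasibility. Hence $\OPT=\min_{x\in\calX}\sum_{i\in S}f_i\,d_i(x)$, where $d_i(x):=|\{j:x_{ij}>0\}|$, and on a forest support the transportation is determined. I would also record two facts for later: a source $i$ must send flow to sinks of total demand at least $a_i$, so $d_i(x)\ge\delta_i$, where $\delta_i$ is the least number of sinks whose demands sum to $\ge a_i$; and every one of the $m$ sinks is touched, so $\sum_i d_i(x)\ge m$.

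The algorithm: let $i^\star$ be the cheapest source, order the sources as $s_1,\dots,s_n$ with $f_{s_1}\ge\cdots\ge f_{s_n}$ (so $s_n=i^\star$), and process them in this order while maintaining residual sink demands. When source $s_t$ is processed, route its $a_{s_t}$ units into as few sinks as possible, filling the sinks of largest residual capacity first (completely, except for the last one). Since $s_1$ meets a full residual vector, it gets $d_{s_1}=\delta_{s_1}$ sinks; in general $d_{s_t}\ge\delta_{s_t}$, and the last source $i^\star$ simply absorbs whatever residual capacity is left, with $d_{i^\star}\le m$.

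For the analysis, let $x^\star$ be optimal and $d^\star_i:=d_i(x^\star)$, and fix any $k$. On one hand, $\sum_{t\le k}d_{s_t}$ is exactly the number of edges produced by running the greedy routine on the sub-instance with supplies $a_{s_1},\dots,a_{s_k}$ and the $m$ sinks. On the other hand, in $x^\star$ the sources $s_1,\dots,s_k$ jointly route those same supplies into the $m$ sinks respecting their capacities ($\sum_{t\le k}x^\star_{s_t,j}\le b_j$), so $\sum_{t\le k}d^\star_{s_t}$ is at least the minimum edge count $g_k$ of that sub-instance. The heart of the proof is the claim that the greedy routine uses at most $2g_k$ edges, i.e., ``largest‑residual‑first'' splittable assignment is a $2$-approximation for the min-edge splittable assignment problem (for any processing order). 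Granting this, $\sum_{t\le k}d_{s_t}\le 2\sum_{t\le k}d^\star_{s_t}$ for every $k$; and since the weights $f_{s_1}\ge\cdots\ge f_{s_n}$ are nonincreasing, Abel summation turns these prefix inequalities into $\sum_t f_{s_t}d_{s_t}\le 2\sum_t f_{s_t}d^\star_{s_t}=2\,\OPT$.

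The main obstacle is precisely the factor-$2$ claim for greedy splittable assignment. Greedy can be wasteful only through ``boundary misalignment'': it fills the largest residual sink completely and may thereby spoil a residual capacity that a later source needed exactly, forcing one extra edge, and such effects can cascade. One must show the cascade costs at most a factor $2$ — a charging argument against the sub-instance optimum, essentially bounding the number of sinks that get ``opened'' and later ``re-closed''. If the clean constant turns out to exceed $2$, the fallback is to refine where the greedy leftover is deposited (best-fit into a residual it exactly consumes), which keeps the residual vector aligned and should restore the bound; alternatively, one strengthens the prefix lower bound by arguing that the capacity contention forcing a split is shared by the optimum.
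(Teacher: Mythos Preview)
Your high-level skeleton is right and matches the paper: sort sources by decreasing $f_i$, route greedily, and then compare costs prefix-by-prefix via Abel summation. But the core of your argument—the claim that largest-residual-first greedy is a $2$-approximation for the min-edge splittable assignment sub-problem—is exactly what you leave unproven. You flag it as ``the main obstacle,'' offer only a loose charging sketch, and propose fallbacks in case the constant exceeds $2$. That is a genuine gap: without this lemma, nothing is established, and your largest-residual greedy does \emph{not} have an obvious structural invariant to hang a proof on (a source can spread across many sinks that the paper's greedy would never touch).

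The paper closes this gap with two small but decisive changes. First, it uses a slightly different greedy: process sinks in a fixed decreasing-$b_j$ order and \emph{never} re-sort by residual; each source simply continues from wherever the previous source left off. This ``no-crossing'' greedy has a clean invariant your variant lacks: the first $i$ sources touch only the first $\pi(a([i]))$ sinks, where $\pi(t)$ is the minimum number of (largest) sinks with total demand $\geq t$. Since the support is a forest on those $i + \pi(a([i]))$ vertices, the edge count out of $[i]$ is at most $\pi(a([i])) + i - 1$. Second, the paper never compares to the sub-instance optimum $g_k$; it simply splits the upper bound $\pi(a([i])) + (i-1)$ into two pieces and matches each against a \emph{separate} one-line lower bound on the full $\OPT$: the edges out of $[i]$ in any feasible solution reach at least $\pi(a([i]))$ distinct sinks, and $\OPT \geq \sum_i f_i$, which after Abel summation absorbs the $\sum_{i\geq 2} f_i$ overhead coming from the $(i-1)$ terms. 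Both facts are immediate, and together they give the factor $2$ without any splittable-assignment lemma. Your route through $g_k$ asks for a strictly stronger (and harder) comparison than is needed.
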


\begin{theorem}
    \label{thm:FCT-U-2}
    There is a $2$-approximation algorithm for the Fixed Charge Transportation with Uniform Fixed Costs (FCT-U) problem. 
\end{theorem}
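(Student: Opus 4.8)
The plan is to show that a \emph{basic optimal solution of the linear (min-cost) transportation problem}, obtained by simply discarding the fixed costs, is already a $2$-approximation for FCT-U. Concretely, the algorithm solves $\min\{\sum_{ij} c_{ij} x_{ij} : x \in \calX\}$ — a polynomially solvable min-cost flow problem — and returns a basic feasible solution $x^*$ attaining the minimum linear cost $C^* := \sum_{ij} c_{ij} x^*_{ij}$ (any optimal solution can be turned into a basic one in polynomial time, since $\calX$ is a bounded polytope). Because every fixed cost equals $1$, the FCT-U objective value of $x^*$ is exactly $|\supp(x^*)| + C^*$, so it remains to bound $|\supp(x^*)|$ from above and $\OPT$ from below.

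Two facts drive the analysis. First, the support of a basic feasible solution of a transportation polytope on the vertex set $S \cup T$ is a forest, hence $|\supp(x^*)| \le n + m - 1$, and thus $\mathrm{cost}(x^*) \le (n+m-1) + C^*$. Second, I claim that \emph{every} feasible $x \in \calX$ satisfies $|\supp(x)| \ge \max\{n,m\}$: since $a_i, b_j > 0$, every vertex of the bipartite support graph has degree at least $1$; within any connected component of that graph, flow conservation forces the total supply of its sources to equal the total demand of its sinks, a positive quantity, so each component contains at least one source and at least one sink. Hence there are at most $\min\{n,m\}$ components, giving $|\supp(x)| \ge (n+m) - \min\{n,m\} = \max\{n,m\}$. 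Applied to an optimal FCT-U solution, whose linear cost is at least $C^*$, this yields $\OPT \ge \max\{n,m\} + C^*$.

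Combining the two bounds, $\mathrm{cost}(x^*) \le (n+m-1) + C^* \le 2\max\{n,m\} + C^* \le 2\bigl(\max\{n,m\} + C^*\bigr) \le 2\,\OPT$, using $n+m \le 2\max\{n,m\}$ and $C^* \ge 0$; this proves the theorem (one even gets a ratio strictly below $2$). I do not expect a real technical obstacle here: the only points needing (routine) care are that the linear relaxation can be solved to a \emph{basic} optimum in polynomial time and that such a solution is supported on a forest — both are classical facts about transportation polytopes and network simplex. The conceptual content is simply that, for uniform fixed costs, the fixed-cost part of $\OPT$ can be charged against the structural lower bound $\max\{n,m\}$ while the linear-cost part is charged against $C^*$, and the forest bound $n+m-1 \le 2\max\{n,m\}$ closes the gap with a loss of exactly a factor $2$.
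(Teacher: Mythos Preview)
Your proof is correct and follows essentially the same approach as the paper: solve the linear-cost transportation problem, observe that a basic (forest-supported) optimum uses at most $n+m-1$ edges, lower-bound the fixed cost of any feasible solution by $\max\{n,m\}$, and combine with the optimality of $C^*$ for the linear part. The only cosmetic difference is that the paper justifies the forest structure via cycle-cancelling rather than invoking basic feasibility, and states the $\max\{n,m\}$ lower bound without your component-counting argument.
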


For the most restricted problem PFCT-U, we present a better approximation algorithm:
\begin{theorem}
    \label{thm:PFCT-U-65}
    For any constant $\epsilon > 0$, there is a $(\frac65 + \epsilon)$ approximation algorithm for the Pure Fixed Charge Transportation with Uniform Fixed Costs (PFCT-U) problem.
\end{theorem}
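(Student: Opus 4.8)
Here is the approach I would take.

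\textbf{Step 1: reduction to a zero-sum partition problem.} In PFCT-U the cost of a solution $x$ is exactly $|\supp(x)|$, and by repeatedly cancelling flow around cycles we may assume $\supp(x)$ is a spanning forest of the complete bipartite graph on $S\cup T$. Two facts make the problem combinatorial: (i) in \emph{any} feasible flow, each connected component of $\supp(x)$ is \emph{balanced} (its total supply equals its total demand), which is an immediate accounting identity; and (ii) conversely, for any balanced vertex set the complete bipartite graph on it supports a feasible flow (a standard transportation-feasibility fact, via the cut condition). Combining (i), (ii) and cycle-cancelling yields $\OPT=(n+m)-c^\star$, where $c^\star$ is the maximum number of parts in a partition of $S\cup T$ into balanced parts; identifying each source $i$ with $+a_i$ and each sink $j$ with $-b_j$, $c^\star$ is the maximum number of groups in a partition of this multiset into zero-sum groups. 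I work with this reformulation.

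\textbf{Step 2: a structural lemma isolating small parts.} Starting from an optimal balanced partition, merge all of its parts of size $\ge 6$ into a single (still balanced) ``big'' part. Each such part contributes $|P|-1\ge 5$ to $\OPT$, so if there are $r$ of them then $\OPT\ge 5r$ and the merge increases the cost by only $r-1\le\tfrac15\OPT$; hence there is always a solution of cost $\le\tfrac65\OPT$ using only parts of size $\le 5$ plus one big part. Writing $n_v$ for the number of size-$v$ parts of the optimal partition and using $\sum_v v\,n_v=n+m$ and $c^\star=\sum_v n_v$, a short calculation shows that a partition consisting of $k'$ balanced parts of size $\le 5$ together with one big part (so $k'+1$ parts in all) achieves ratio $\tfrac65+\epsilon$ as soon as $k'+1\ \ge\ \tfrac45 n_2+\tfrac35 n_3+\tfrac25 n_4+\tfrac15 n_5$ --- the coefficient $\tfrac{6-v}{5}$ of $n_v$ is non-positive for $v\ge 6$, so large parts of the optimum cost us nothing. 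Thus it suffices to pack disjoint balanced parts of size $\le 5$ meeting this weighted bound.

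\textbf{Step 3: the algorithm.} Size-$2$ balanced parts are exactly the edges $ij$ with $a_i=b_j$, so we pack them \emph{optimally} by computing a maximum matching $M$ in this ``equality graph'' (giving $|M|\ge n_2$). There are only $O((n+m)^5)$ candidate balanced parts of sizes $3,4,5$, which we enumerate; on the vertices left free by $M$ we run a Hurkens--Schrijver-style local-search packing with an exchange neighbourhood of constant size $t=t(\epsilon)$, processing the sizes in increasing order (pack size-$3$ parts to local optimality, then size-$4$, then size-$5$). Everything still unused becomes the one big part. For fixed $\epsilon$ all of this is polynomial.

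\textbf{Analysis and the main obstacle.} The matching covers the $\tfrac45 n_2$ term of Step 2 with room to spare, and the local-search guarantees do the rest: a $t$-locally optimal size-$v$ packing recovers at least a $\tfrac23(1-\epsilon)>\tfrac35$, $\tfrac12(1-\epsilon)>\tfrac25$, and $\tfrac25(1-\epsilon)>\tfrac15$ fraction of the optimal size-$3,4,5$ parts respectively, which are precisely the fractions needed. The delicate point --- and where I expect the real work to be --- is the \emph{interaction between the phases}: the matching and each packing phase occupy vertices that may block optimal parts of later phases, so the clean per-size bounds above must be replaced by a charging argument that trades a blocked size-$(v{+}1)$ part against the more valuable size-$v$ part that blocked it, and that controls the regime where $n_2,n_3,n_4,n_5$ are all comparable. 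One clean way to organise this is to fold the matching step into the local-search move, allowing unbounded re-optimisation of the size-$2$ sublayer together with bounded swaps among the larger parts, so that the output is a single local optimum; proving that any such local optimum satisfies the weighted inequality of Step 2 is the technical heart of the proof, and is where the extra $\epsilon$ in the ratio is absorbed.
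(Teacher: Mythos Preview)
Your Steps~1 and~2 are correct and match the paper's setup: the reformulation of PFCT-U as maximising the number of balanced parts, and the derived sufficient condition
\[
k'+1 \;\ge\; \tfrac45 n_2 + \tfrac35 n_3 + \tfrac25 n_4 + \tfrac15 n_5
\]
is exactly the right target (the paper obtains the same inequality implicitly through its factor-revealing LP).

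The gap is in Step~3 and its analysis. You propose a \emph{sequential} algorithm---maximum matching for size-2 parts, then local search for size~3, then size~4, then size~5---and you yourself identify the unresolved difficulty: vertices consumed in earlier phases may destroy the optimal parts needed in later phases, so the clean per-size local-search guarantees do not directly add up. You sketch a possible fix (``fold the matching into the local search and prove any local optimum satisfies the weighted inequality'') but do not carry it out; as written, the technical heart of the proof is missing.

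The paper sidesteps this interaction problem entirely with a much simpler algorithm. After peeling off size-2 balanced pairs (which can always be assumed to lie in an optimal partition, so afterwards $n_2=0$), it runs the $\frac{k+1+\epsilon}{3}$-approximate $k$-set-packing algorithm \emph{independently} for each $k\in\{3,4,5\}$ on the full remaining instance, and simply outputs the best of the three solutions. There are no phases to reconcile. Since the optimal partition's parts of size $\le k$ form a feasible $k$-set packing, the three runs produce at least roughly $\tfrac34 n_3$, $\tfrac35(n_3+n_4)$, and $\tfrac12(n_3+n_4+n_5)$ parts respectively. A small factor-revealing LP (equivalently, the convex combination with weights $\tfrac{4}{15},\tfrac13,\tfrac25$) shows that the \emph{maximum} of these three quantities is always at least $\tfrac35 n_3+\tfrac25 n_4+\tfrac15 n_5$---precisely your sufficient condition from Step~2. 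So your target inequality is attainable, but by taking a best-of-three rather than by the sequential scheme you outline.
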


Then we complement the results by showing the other variants are NP-hard to approximate within a constant factor.  First, we give a reduction from the Directed Steiner Tree (DST) problem to the PFCT problem. As the polynomial-time approximability of DST remains open,  we include the reduction in the theorem statement; any future improvements in the hardness of DST automatically carry over to PFCT. 
\begin{theorem}
    \label{thm:PFCT-DST-hard}
    Let $\alpha: \Z_{> 0} \to \R_{> 0}$ be a monotone function.  If there is a polynomial-time $\alpha(\max\{n, m\})$-approximation algorithm for the Pure Fixed Charge Transportation (PFCT) problem in polynomial time, then there is a polynomial-time $\alpha(n)$-approximation for the Directed Steiner Tree (DST) problem. 
\end{theorem}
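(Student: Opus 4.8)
The plan is to reduce the Directed Steiner Tree (DST) problem to PFCT in an approximation-preserving way, where a DST instance on a graph with $n$ vertices maps to a PFCT instance with $n$ sources (and $m$ polynomially bounded, or comparable to $n$). Recall a DST instance consists of a directed graph $G = (V, E)$ with edge costs, a root $r$, and a set of terminals; the goal is a minimum-cost arborescence rooted at $r$ spanning all terminals. The key idea is to encode the "which vertices are used" decisions of a DST solution via the support of a transportation plan: an edge $ij$ in the PFCT instance will be "open" (i.e. $x_{ij} > 0$) exactly when a corresponding edge is bought in the DST solution, and the fixed cost $f_{ij}$ will mirror the DST edge cost.

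The main steps I would carry out: (1) Build a PFCT instance whose sources correspond to vertices of $G$ (or to edges/layers of a layered version of $G$, since DST hardness instances are typically layered). For each directed edge $(u,v) \in E$ with cost $c_{uv}$, create an edge in the bipartite PFCT graph between a source associated with $u$ and a sink associated with $v$, with fixed cost $f = c_{uv}$; edges not present in $G$ get fixed cost $+\infty$ (or a prohibitively large value, or simply are forced unusable by the supply/demand structure). (2) Choose supplies $a_i$ and demands $b_j$ so that any feasible transportation plan must route a positive flow along a connected structure that, when we take the support, contains an $r$-to-terminals arborescence — i.e. each terminal's demand can only be satisfied by "pulling" flow transitively from the root through a sequence of open edges. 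This is where one exploits that flow conservation at intermediate sources forces a path back to a supply. (3) Show the two directions: given a DST arborescence of cost $C$, construct a feasible PFCT solution whose support is contained in the arborescence's edges, hence of fixed cost $\le C$; conversely, given a PFCT solution of cost $C'$, its support must contain an $r$-rooted arborescence reaching all terminals (by a flow-decomposition / connectivity argument), so we extract a DST solution of cost $\le C'$. Together these give that an $\alpha(\max\{n,m\})$-approximation for PFCT yields an $\alpha(n)$-approximation for DST, using monotonicity of $\alpha$ and $m = \poly(n)$ in the constructed instance, and noting DST is typically analyzed on instances where $m$ can be taken comparable to $n$ (or reducible to such).

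I expect the main obstacle to be step (2): arranging the supplies and demands, and possibly auxiliary "dummy" sources/sinks, so that the correspondence between PFCT supports and DST arborescences is exact — in particular ensuring that a cheap PFCT solution cannot "cheat" by using a support that is not a valid arborescence (e.g. by creating cycles, or by satisfying a terminal through an unintended route that bypasses the intended cost structure). A clean way to handle this is to work with the \emph{layered} form of DST (as in the classical hardness constructions), where vertices are partitioned into levels $L_0 = \{r\}, L_1, \dots, L_k$ with edges only between consecutive levels; then the PFCT instance naturally inherits a layered flow structure, each unit of demand at a terminal traces back level by level to the root, and the support of any feasible flow restricted to the relevant edges is forced to connect every terminal to $r$. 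One must also verify that the "equivalence of PFCT and the single-commodity uncapacitated fixed-charge network problem" (alluded to earlier in the paper) is not circular here — but since we only need the forward reduction DST $\to$ PFCT, and DST is itself such a fixed-charge network problem on a specific graph, the reduction amounts to transforming a layered graph into a bipartite one while preserving fixed-charge costs, which is routine once the layering is in place.
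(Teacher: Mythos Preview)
Your high-level plan (reduce DST to PFCT, match fixed costs to edge costs, argue that the support of an optimal transportation must contain an arborescence) is the right shape, but the proposal is missing the one concrete idea the reduction hinges on, and the layering workaround you suggest does not fill the gap. The issue is step (2): in PFCT a vertex is either a source or a sink, so an intermediate vertex $v$ of the DST graph (which must both receive flow from above and pass it on) cannot be modeled directly. Layering does not help here, because even in a layered DST instance the internal layers are transit vertices; you still need some mechanism to let ``flow in $=$ flow out'' at those vertices inside a purely bipartite $S\times T$ structure. The paper's key trick is vertex splitting: each non-root, non-terminal $v$ becomes a source $v_{\rm out}$ with supply $D$ and a sink $v_{\rm in}$ with demand $D$ (for $D=\sum_i a_i$), joined by a zero-cost edge $(v_{\rm out},v_{\rm in})$; incoming DST edges go to $v_{\rm in}$, outgoing ones leave $v_{\rm out}$. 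The free edge absorbs the slack so that flow conservation at $v$ in the digraph is exactly equivalent to feasibility at $v_{\rm out}$ and $v_{\rm in}$ in the bipartite instance. Without this device (or something equivalent), your ``pull flow transitively from the root'' argument has no formal content in the bipartite setting.

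There is also a quantitative mismatch. The theorem promises an $\alpha(n)$-approximation for DST from an $\alpha(\max\{n,m\})$-approximation for PFCT, so the constructed PFCT instance must have at most $n$ sources and at most $n$ sinks (where $n=|V|$ in the DST instance). The splitting construction gives exactly this. Your suggestion of passing through layered DST would blow up the vertex count (standard height-$h$ layering multiplies $|V|$ by $h$), so with monotone $\alpha$ you would only recover an $\alpha(\mathrm{poly}(n))$ guarantee, which is strictly weaker than what the theorem claims.
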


Using the $\Omega(\log^{2 - \epsilon} (n))$-hardness of DST \cite{HK03}, this gives us
\begin{coro}
    \label{coro:PFCT-DST-hard}
    It is NP-hard to approximate the Pure Fixed Charge Transportation (PFCT) problem within a factor of $O(\log^{2 - \epsilon} (\max\{n, m\}))$, for any constant $\epsilon > 0$.
\end{coro}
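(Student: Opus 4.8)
The plan is to compose Theorem~\ref{thm:PFCT-DST-hard} with the known inapproximability of DST due to Halperin and Krauthgamer~\cite{HK03}; no new combinatorial argument is required. Suppose towards a contradiction that for some constant $c>0$ there is a polynomial-time algorithm approximating PFCT within a factor of $c\log^{2-\epsilon}(\max\{n,m\})$. I would set $\alpha(N):=\max\{1,\,c\log^{2-\epsilon}(N)\}$, which is a monotone function $\Z_{>0}\to\R_{>0}$ that upper-bounds the assumed ratio, so the hypothesis of Theorem~\ref{thm:PFCT-DST-hard} is satisfied. Invoking that theorem gives a polynomial-time $\alpha(N_{\mathrm{DST}})$-approximation for DST, where $N_{\mathrm{DST}}$ denotes the size of the DST instance used in the reduction; since the reduction is polynomial-size, $\max\{n,m\}$ is polynomially bounded in $N_{\mathrm{DST}}$, and hence this is an $O(\log^{2-\epsilon}(N_{\mathrm{DST}}))$-approximation for DST.

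It remains to contradict this with \cite{HK03}, which rules out a polynomial-time $O(\log^{2-\epsilon}(\cdot))$-approximation for DST under the appropriate complexity assumption. The one point that needs care is the alignment of size parameters: the lower bound of \cite{HK03} is most naturally phrased in terms of the number $k$ of terminals, while the statement above is in terms of a vertex/instance-size count. In the hard instances of \cite{HK03} the quantity $k$ is polynomially related to the number of vertices, and the reduction of Theorem~\ref{thm:PFCT-DST-hard} is polynomial-size, so $k$, $N_{\mathrm{DST}}$ and $\max\{n,m\}$ all differ only by polynomial factors; hence $\log^{2-\epsilon}$ evaluated at any one of them is within a constant factor of $\log^{2-\epsilon}$ evaluated at any other, and the constant $c$ together with these polynomial slacks can be absorbed into the leading constant (or, more conservatively, by shrinking $\epsilon$ to some $\epsilon'\in(0,\epsilon)$). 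This produces the contradiction and proves the corollary.

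The main obstacle is thus purely bookkeeping: carefully threading the polynomial relationships between the three instance sizes through the two results, and making sure the complexity hypothesis advertised for the corollary matches the one actually used by \cite{HK03}. There is no technical difficulty beyond that, since all of the real work lives in the reduction already established in Theorem~\ref{thm:PFCT-DST-hard}.
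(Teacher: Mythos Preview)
Your proposal is correct and follows essentially the same route as the paper: take the contrapositive of Theorem~\ref{thm:PFCT-DST-hard} and combine it with the $\Omega(\log^{2-\epsilon} n)$-hardness of DST from~\cite{HK03}. The paper dispatches this in a single line, and your additional bookkeeping about parameter alignment (terminals vs.\ vertices vs.\ $\max\{n,m\}$) is more care than the paper takes but is not wrong; note that Theorem~\ref{thm:PFCT-DST-hard} already gives the conclusion directly in terms of the DST vertex count, so the polynomial-relationship argument is largely unnecessary once you quote~\cite{HK03} as stated in the paper.
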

Since PFCT is a special case of FCT, the results extend to FCT as well.  \medskip

We provide a reduction from the Set Cover problem to FCT-S, proving the following result:
\begin{theorem}
    \label{thm:FCT-S-Set-Cover-hard}
    There is a constant $c > 0$ such that the following holds. It is NP-hard to approximate the Fixed Charge Transportation with Sink-Independent Fixed Costs (FCT-S) problem within a factor of $c\ln (\max\{n, m\})$. 
\end{theorem}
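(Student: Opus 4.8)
The plan is to reduce from the Set Cover problem, which is NP-hard to approximate within $c'\ln N$ for some absolute constant $c'>0$, where $N$ is the size of the universe. So let an instance of Set Cover be given by a universe $U$ with $|U|=N$ and a family $\{S_1,\dots,S_M\}$ of subsets of $U$; in the hard instances we may assume $M=\poly(N)$. I would first encode it as an instance of the single-commodity uncapacitated fixed-charge network problem on a three-layer directed acyclic graph: a single hub vertex $h$ with supply $N$; a transshipment vertex $w_\ell$ (zero net supply) for each set $S_\ell$; and a vertex $z_e$ with demand $1$ for each element $e\in U$. For every $\ell$ add an arc $h\to w_\ell$ with fixed cost $1$ and linear cost $0$, and for every pair $e\in S_\ell$ add an arc $w_\ell\to z_e$ with fixed cost $0$ and linear cost $0$; all arcs are uncapacitated. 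Then the optimal cost of this network instance equals $\OPT_{\mathrm{SC}}$: a cover $\mathcal C$ gives a feasible flow that sends one unit from $h$ to each $z_e$ through some chosen $w_\ell$ with $e\in S_\ell$ and uses at most $|\mathcal C|$ of the arcs out of $h$; conversely, in any feasible solution the $w$-vertices receiving flow from $h$ cover $U$, and their number equals the fixed cost paid.

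Next I would invoke the equivalence between FCT and the single-commodity uncapacitated fixed-charge network problem established earlier in the paper to turn the network instance above into an FCT instance with the same optimal value (up to a transparent, instance-independent transformation) and with $n+m=\poly(N)$. The point to verify is that this FCT instance is in fact an FCT-S instance, i.e.\ its fixed costs are sink-independent. In our network instance the fixed cost of an arc depends only on its \emph{tail} ($1$ if the tail is $h$, and $0$ otherwise), and one checks that the FCT $\equiv$ network transformation carries such tail-dependent network fixed costs to source-dependent (hence sink-independent) FCT fixed costs; alternatively, one can write down directly an FCT-S instance emulating this particular layered network, which is where essentially all the bookkeeping lives. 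Composing the two reductions, a polynomial-time $c\ln(\max\{n,m\})$-approximation for FCT-S, for a small enough constant $c>0$, would yield a polynomial-time $c'\ln N$-approximation for Set Cover, which is impossible unless P $=$ NP. Since $\ln(\max\{n,m\})=\Theta(\ln N)$, taking $c$ to be $c'$ divided by the hidden constant in this $\Theta$ proves the theorem.

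The main obstacle is the ``pay once per chosen set'' phenomenon. Because FCT-S charges a fixed cost on \emph{every} edge carrying flow, the obvious encoding --- a source for each set, adjacent to the element-sinks it contains --- pays the fixed cost of $S_\ell$ once for each element covered, inflating the optimum to $\Theta(N)$ and washing out the logarithmic gap; and patching this runs into a second difficulty, that every source has positive supply and so always carries flow, which tends to pin the optimum to a quantity independent of the chosen cover. The economy of scale that Set Cover relies on --- one payment enabling the coverage of arbitrarily many elements --- is exactly what the additively separable, per-edge cost of FCT-S lacks, and it can be recovered only through an intermediate ``set'' layer of transshipment vertices, which has no place in a two-layer complete-bipartite graph; this is why the argument must route through the fixed-charge network formulation. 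What remains is routine: checking that the FCT $\equiv$ network transformation preserves the tail-dependent fixed-cost structure on layered instances and distorts costs by at most a transparent constant, so that the logarithmic gap survives into the constant $c$.
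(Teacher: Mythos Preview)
Your proposal is correct and takes essentially the same approach as the paper. Both reduce from Set Cover through a three-layer network (hub $\to$ set vertices $\to$ element vertices) where only the hub's outgoing arcs carry fixed cost $1$, and then apply the vertex-splitting transformation of Section~\ref{subsec:digraph-equivalence}; your key observation that tail-dependent fixed costs become source-dependent (hence sink-independent) under splitting is exactly what the paper exploits, and the paper simply writes out the resulting FCT-S instance concretely (with $c_{ij}\in\{0,\infty\}$ encoding the allowed edges) rather than invoking the generic transformation.
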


Finally, we rule out the possibility of a PTAS for the PFCT-U problem, the most restricted variant we study. This is achieved via a reduction from the 3-Dimensional Matching with Bounded Frequency (3DM-B) problem.
\begin{theorem}
    \label{thm:PFCT-U-APX-hard}
    There is an absolute constant $c > 1$ such that Pure Fixed Charge Transportation with Uniform Fixed Costs (PFCT-U) problem does not have a $c$-approximation, unless NP $=$ BPP. 
\end{theorem}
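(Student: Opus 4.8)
The plan is to reduce from the gap version of $3$DM-B: there are absolute constants $B\in\Z_{>0}$ and $\delta\in(0,1)$ such that it is NP-hard, given three disjoint $\ell$-element sets $X,Y,Z$ and a family $\mathcal{M}\subseteq X\times Y\times Z$ of $t$ triples in which every element occurs in at least one and at most $B$ triples (so $\ell\le t\le B\ell$), to decide whether $\mathcal{M}$ has a perfect matching or every matching in $\mathcal{M}$ has size at most $(1-\delta)\ell$; this is (a standard form of) the APX-hardness of bounded-occurrence $3$-dimensional matching. From such an instance I would build, in randomized polynomial time, a PFCT-U instance together with thresholds $U<L$ satisfying $L/U\ge 1+\Omega(\delta/B)$, such that $\OPT\le U$ in the ``perfect matching'' case and $\OPT\ge L$ with high probability in the ``$\le(1-\delta)\ell$'' case. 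Since the reduction is randomized, this yields the theorem with the stated ``NP $=$ BPP'' caveat.

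\emph{The instance.} For each element $e\in X\cup Y\cup Z$ draw a supply $a_e$ uniformly and independently from $\{N,\dots,2N\}$, where $N=2^{\Theta(\ell)}$ is chosen large enough below; all numbers then have polynomially many bits. The PFCT-U instance has a source $s_e$ of supply $a_e$ for each element $e$; a source $g_M$ of supply $a_x+a_y+a_z$ for each triple $M=(x,y,z)$; a ``choice'' sink $c_{e,M}$ of demand $a_e$ for each incident pair $e\in M$; and one ``garbage'' sink $t^\star$ of demand $\sum_e a_e$. Total supply equals total demand independently of $\mathcal{M}$, so the instance is valid; it has $n=3\ell+t$ sources and $m=3t+1$ sinks, and as all fixed costs equal $1$, the objective is the number of edges carrying flow. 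An optimal solution may be assumed acyclic (cancelling flow around a cycle never increases the cost and can shrink the support); hence its cost equals $n+m$ minus the number of connected components of its (forest) support, and each component is \emph{balanced} (total supply equals total demand inside it).

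\emph{Completeness.} Given a perfect matching $\mathcal{M}^\star$, route each $s_e$ entirely into $c_{e,M_e}$ where $M_e\in\mathcal{M}^\star$ is $e$'s matching triple; for each non-matching $M=(x,y,z)$ fill $c_{x,M},c_{y,M},c_{z,M}$ entirely from $g_M$; and route each $g_M$ with $M\in\mathcal{M}^\star$ entirely into $t^\star$. This is feasible (using that $\mathcal{M}^\star$ is perfect), and the support consists of the $3\ell$ edges $\{s_e,c_{e,M_e}\}$, the $t-\ell$ triangles-of-edges $\{g_M,c_{x,M},c_{y,M},c_{z,M}\}$, and one component $\{t^\star\}\cup\{g_M:M\in\mathcal{M}^\star\}$: that is $2\ell+t+1$ components and $3t+\ell$ edges, so $\OPT\le 3t+\ell=:U$.

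\emph{Soundness and the main obstacle.} Let $r$ be the size of a maximum matching in $\mathcal{M}$. I claim that with high probability over the $a_e$'s, every feasible solution uses at least $3t+\tfrac{3\ell-r}{2}$ edges; in the ``$\le(1-\delta)\ell$'' case this is $\ge 3t+\ell+\tfrac{\delta}{2}\ell=:L$, and since $U=3t+\ell\le(3B+1)\ell$ we get $L/U\ge 1+\tfrac{\delta}{2(3B+1)}$, as needed. Two ingredients. First, a \emph{Sidon-type} event: by a union bound over the $O(B)^{O(\ell)}$ integer vectors of $\infty$-norm at most $B+1$, each vanishing with probability $\le 1/(N+1)$, with probability $\to 1$ there is no nonzero such $\nu$ with $\sum_e\nu_e a_e=0$. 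Conditioned on this, writing the balance equation of any component $A$ and collecting coefficients of the $a_e$ forces, for every element $e$,
\[
\mathbf 1[s_e\in A]+\#\{M\ni e:g_M\in A\}=\#\{M:c_{e,M}\in A\}+\mathbf 1[t^\star\in A],
\]
since otherwise the difference of the two sides is a forbidden vanishing combination. Second, I would use this to bound the number of components by $t+\tfrac{3\ell+r}{2}+1$, which rearranges to the edge bound because $n+m=3\ell+4t+1$. The route: (i) for any component $A$ other than the one containing $t^\star$, the displayed identity says its $c_{e,\cdot}$-sinks can be distributed so that each $s_e\in A$ receives exactly one and each $g_M\in A$ receives exactly one $c_{e,\cdot}$ per element $e\in M$; re-routing accordingly splits $A$ into $\sigma_A+\gamma_A\ge1$ ``atomic'' components of the two shapes $\{s_e,c_{e,M}\}$ and $\{g_M,c_{x,M'},c_{y,M''},c_{z,M'''}\}$, never decreasing the component count — so an optimum in this normal form may be assumed; (ii) for a normal-form solution, letting $\bar B$ be the set of triples not used as a $\{g_M,\cdots\}$-component, there are $t-|\bar B|$ such components, and a $\{s_e,c_{e,M}\}$-component needs $e\in M$ with $c_{e,M}$ unclaimed, hence $e\in\bigcup_{M\in\bar B}M$, so there are at most $|\bigcup_{M\in\bar B}M|$ of them; the count is thus at most $t+\bigl(|\bigcup_{M\in\bar B}M|-|\bar B|\bigr)+1$, and the elementary inequality $|\bigcup_{M\in\bar B}M|-|\bar B|\le\tfrac{3\ell+r}{2}$ (proved by playing off $|\bar B|$ against $3\ell-|\bar B|$ and noting a maximal sub-matching of $\bar B$ has size $\le r$ while the remaining triples add $\le2$ new elements each) closes the argument. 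The technical heart is the combination of the Sidon property with the bookkeeping of step (i): one must be sure that ``mixed'' components (an $s_e$ together with some $g_M$'s, or several $g_M$'s) can never let the optimum beat the normal-form bound — this is where the exact choice of the gadget (choice sinks indexed by incident pairs, not just by triples) is essential, and verifying it rigorously is the main work; completeness and the counting in (ii) are short.
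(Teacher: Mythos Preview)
Your reduction is correct, and the three technical pieces all go through: the $\infty$-norm Sidon property holds with high probability for $N=2^{\Theta(\ell)}$; the per-element balance identity you derive from it does allow any non-$t^\star$ component to be split into atoms of the two shapes (for each $e$ the number of $c_{e,\cdot}$ sinks in the component exactly matches the slots demanded by $s_e$ and the $g_M$'s containing $e$, so the bipartite assignment exists and the resulting pieces are balanced); and the inequality $|\bigcup_{M\in\bar B}M|-|\bar B|\le(3\ell+r)/2$ follows by averaging $r+|\bar B|$ (from a maximal sub-matching argument) with $3\ell-|\bar B|$. The gap $L/U\ge 1+\delta/(2(3B+1))$ then drops out.

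That said, your route differs substantially from the paper's. The paper uses a \emph{smaller} gadget: triples become sources, elements become sinks (with random demands), plus one dummy sink --- no choice sinks at all. Its Sidon property is stated in the $1$-norm (vectors with $|h|_1\le B'$ for a fixed constant $B'$), and the structural lemma is simply ``every nonempty balanced set is a disjoint union of canonical $\{i,j,k,ijk\}$'s or has size $\ge B'/3$.'' This immediately caps the number of non-canonical components by $O(\epsilon n)$, so the optimum partition has at most $(\alpha+\epsilon)n + O(\epsilon n)$ parts in the no-case, versus $n+1$ in the yes-case. Your construction trades the size-dichotomy for an exact per-element bookkeeping identity, which is elegant but requires the extra layer of choice sinks and the normal-form reduction; the paper's argument avoids step~(i) entirely at the cost of a slightly coarser (but still constant) gap. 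Both approaches are randomized in the same way and yield the same ``unless $\mathrm{NP}=\mathrm{BPP}$'' conclusion; the paper's construction also keeps the weights polynomially bounded (its $\Delta$ is $n^{O(1)}$ rather than $2^{\Theta(\ell)}$), though this is not needed for the theorem as stated.
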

The APX-hardness also extends to the more general problems FCT-U and PFCT-S. \medskip

\subsection{Our Results: Efficient Polynomial Time Approximation Scheme (EPAS) and Bicriteria Approximation Algorithm}

We then consider approximation algorithms parameterized by the number $n$ of sources. We show that there is an Efficient Parameterized Approximation Scheme (EPAS) for the PFCT problem: 
\begin{theorem}
    \label{thm:PFCT-EPAS}
    Given any constant $\epsilon > 0$, there is a $g(n, \epsilon) \cdot \poly(m)$-time $(1+\epsilon)$-approximation algorithm for the Pure Fixed Charge Transportation (PFCT) problem, for some efficiently computable function $g$. 
\end{theorem}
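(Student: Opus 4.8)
The plan is to show that an optimal solution has a rigid forest structure with only $O(n)$ ``complicated'' pieces, to enumerate those pieces in FPT time, and to reduce the remainder to a min‑cost assignment of ``small'' sinks that can be solved near‑optimally.

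\emph{Structure.} First, there is an optimal solution whose support is a forest: repeatedly cancelling flow around a cycle of the support deletes an edge and never increases the cost, since every $f_{ij}\ge 0$. Fix such an optimal $x^*$ with forest support $F^*$. A forest on the $n+m$ vertices $S\cup T$ has at most $n+m-1$ edges, and every sink has degree at least $1$ in $F^*$ (its demand is positive), so a degree count forces at most $n-1$ sinks to have degree $\ge 2$ in $F^*$; likewise every component contains a source (a sink‑only component violates flow conservation), so $F^*$ has at most $n$ components.

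\emph{Guessing the complicated part.} Write $A=\sum_{i\in S}a_i$, and fix a large parameter $g_0=g_0(n,\epsilon)$. Call a sink \emph{special} if it has degree $\ge 2$ in $F^*$, or if $b_j$ is among the $\Theta(g_0/\epsilon)$ largest demands, or if $\min_i f_{ij}$ is among the $\Theta(g_0/\epsilon)$ largest such values. Since $\sum_j b_j=A$ and $\sum_j\min_i f_{ij}\le\OPT$, every non‑special sink has $b_j\le\epsilon A/(n g_0)$ and is servable at cost $\min_i f_{ij}\le\epsilon\OPT/g_0$, while there are only $O(g_0)$ special sinks. We enumerate the set $T'$ of special sinks (the ``large'' and ``hard'' ones are determined by ranking; the $\le n-1$ high‑degree ones are identified via color coding / perfect hashing, keeping the count $g(n,\epsilon)\cdot\poly(m)$ rather than $m^{O(n)}$) and, for each of them, its neighborhood among the $n$ sources ($(2^n)^{O(g_0)}=g(n,\epsilon)$ choices). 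A correct guess reveals the restriction of $F^*$ to $S\cup T'$, which partitions $S$ into at most $n$ \emph{clusters} and, for each cluster $P$, specifies the amount $D_P$ of special demand absorbed in $P$, hence the amount $R_P:=\sum_{i\in P}a_i-D_P\ge 0$ of non‑special demand that must still be routed into $P$.

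\emph{The residual assignment.} After the guesses, each non‑special sink $j$ must be assigned to one source $\sigma(j)$ at cost $f_{\sigma(j)j}$. Writing $L_i=\sum_{j:\sigma(j)=i}b_j$, a feasible flow exists iff within every cluster $P$ we have $\sum_{i\in P}L_i=R_P$ and $\sum_{i\in I'}L_i\le c_{I'}$ for all $I'\subseteq P$, where $c_{I'}:=\sum_{i\in I'}a_i-\sum_{j\in T'\cap P:\,N(j)\subseteq I'}b_j$ comes from the Hall condition for the transportation problem on the guessed special edges ($O(2^n)$ constraints per cluster). We solve the LP relaxation of this min‑cost assignment ($z_{ij}\ge0$, $\sum_i z_{ij}=1$, the above constraints on $L_i=\sum_j z_{ij}b_j$); its value is at most what $x^*$ pays on non‑special sinks, hence $\le\OPT$, because $\sigma^*$ restricted to the non‑special sinks is LP‑feasible for the correct guess. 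An extreme point leaves only $O(n2^n)$ sinks fractional; rounding each such sink to its cheapest fractional source only decreases the cost, and since every non‑special demand is $\le\epsilon A/(ng_0)$ with $g_0$ large, this perturbs every $L_i$, and every cluster's absorbed non‑special demand, by at most $\epsilon A/n$.

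\emph{Repair, and the main obstacle.} The rounded assignment costs $\le\OPT$ but may be slightly infeasible: a source may be overloaded by $\le\epsilon A/n^2$ and the non‑special demand entering a cluster may be off by $\le\epsilon A/n$ (total $\le\epsilon A$). We repair this by rerouting only $O(\epsilon A)$ units of non‑special demand --- shifting demand within a cluster to undo source overloads, and, across clusters, adding $O(n)$ extra edges through non‑special sinks that merge the mis‑balanced clusters into larger balanced components --- and charge each rerouted or added unit to $\OPT$ via $\min_i f_{ij}\le\epsilon\OPT/g_0$, so the extra cost is $O(\epsilon\OPT)$. Outputting the cheapest feasible solution over all guesses and rescaling $\epsilon$ yields the claimed $(1+\epsilon)$‑approximation in time $g(n,\epsilon)\cdot\poly(m)$. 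I expect the repair step to be the crux: one must simultaneously restore per‑cluster balance and all Hall‑type constraints $c_{I'}$ while keeping the extra cost $O(\epsilon\OPT)$, which is precisely why the largest‑demand and hardest‑to‑serve sinks were declared special up front; a secondary technical point is pinning down the cluster feasibility polytope (the $c_{I'}$'s are submodular) and verifying the $O(n2^n)$ bound on fractional sinks at an extreme point.
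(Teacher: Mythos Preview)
Your approach differs substantially from the paper's, and I see two gaps that are not closed.

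\textbf{Identifying the high-degree sinks.} You invoke ``color coding / perfect hashing'' to locate the $\le n-1$ degree-$\ge 2$ sinks of $F^*$ in $g(n,\epsilon)\cdot\poly(m)$ time, but color coding only guarantees that a random $(n{-}1)$-coloring makes the target set colorful with probability $\ge (n{-}1)^{-(n-1)}$; it does not tell you \emph{which} sink within each color class is the target. To exploit colorfulness one normally runs a DP over subsets of colors, but here the choices across colors are fully entangled: the high-degree sinks jointly determine the cluster partition of $S$, the residuals $R_P$, and the Hall constants $c_{I'}$, which in turn define the LP you compare against $\OPT$. I do not see a per-color subproblem or a DP state of size $g(n,\epsilon)$ that decouples this, so as written the enumeration is still $m^{\Theta(n)}$. (The large-demand and large-$\min_i f_{ij}$ sinks are fine, since they are determined by ranking.)

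\textbf{Repair cost.} The repair step charges each new edge to $\min_i f_{ij}\le\epsilon\,\OPT/g_0$, but the edge you must add is $(i',j)$ for a \emph{specific} source $i'$ in the cluster you are trying to reach (or on the far side of a violated Hall cut $I'$), and $f_{i'j}$ can be arbitrarily larger than $\min_i f_{ij}$. If every non-special sink's cheapest incident source happens to lie in its own post-rounding cluster, there is no cheap cross-cluster edge and merging mis-balanced clusters can cost $\gg\epsilon\,\OPT$. Declaring sinks with large $\min_i f_{ij}$ special does not help, since the relevant quantity is $\min_{i'\in P'} f_{i'j}$ for a particular target cluster $P'$. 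You correctly flag repair as the crux, but the obstacle is not only the Hall polytope bookkeeping---it is that the cheap edges guaranteed by ``non-special'' may all be useless for the rebalancing you need.

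For contrast, the paper avoids both issues entirely. It discretizes the $f_{ij}$'s to $O((\log m)/\epsilon)$ values and partitions the sinks into at most $M=(O((\log m)/\epsilon))^n$ classes by their cost vector to the $n$ sources; each class is effectively a PFCT-S sub-instance. A structural lemma for PFCT-S then shows there is a $(1+\epsilon)$-approximate optimum whose $n/\epsilon$ most expensive edges are incident only to the $O(n/\epsilon)$ largest-demand sinks of that class. This shrinks the candidate set for the $n/\epsilon$ globally most expensive edges to $O(Mn^2/\epsilon)$, so enumerating them costs $(n/\epsilon)^{O(n^2/\epsilon)}\cdot\poly(m)$; after a correct guess, the residual is solved by the PFCT-S greedy/LP, with excess at most $n$ times the cheapest guessed $f$-value, hence $\le\epsilon\,\OPT$. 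No high-degree sinks are guessed, no clustering is needed, and there is no repair step.
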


For the FCT problem, we show a simple $(O(1/\epsilon), 1\pm\epsilon)$-bicriteria approximation algorithm in polynomial time. Such algorithms output solutions which may violate the sink demand constraints by a factor of $1\pm \epsilon$. 
\begin{theorem}
    \label{thm:FCT-bicriteria}
    Given a Fixed Charge Transportation (FCT) instance defined by $S, T, (a_i)_{i}, (b_j)_{j}, (f_{ij})_{i, j}, (c_{ij})_{i, j}$, and a constant $\epsilon > 0$, we can efficiently output a vector $x \in \R_{\geq 0}^{S \times T}$ subject to $\sum_{j \in T}x_{ij} = a_i$ for every $i \in S$, $\sum_{i \in S} x_{ij} \in (1\pm \epsilon)b_j$ for every $j \in T$, 
    and $\sum_{i \in S, j \in T}\big(1_{x_{ij} > 0} \cdot f_{ij} + c_{ij} x_{ij}\big)$ is at most $O(1/\epsilon)$ times the optimal cost of the instance. 
\end{theorem}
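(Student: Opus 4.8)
The plan is to start from an optimal solution $x^\star$ to the natural LP relaxation — or, even simpler, from any min-cost solution to the transportation polytope $\calX$ under the \emph{linear part only}, i.e. $\min_{x\in\calX}\sum_{ij}c_{ij}x_{ij}$, whose optimum is a lower bound on $\OPT$. The key structural fact to exploit is that a basic feasible solution of the transportation polytope is supported on a forest in the bipartite graph $S\times T$; hence it has at most $n+m-1$ nonzero entries. This already bounds the number of "opened" edges, but $n+m-1$ edges is not by itself within $O(1/\epsilon)$ of $\OPT$ when $\OPT$ pays for few fixed costs. So the real work is to \emph{prune} small flows.

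The main step is a rounding/filtering argument on each sink. Fix a sink $j$ with demand $b_j$. In a basic solution the edges entering $j$ whose flow is positive form some set; sort them by flow value and discard those carrying a $\le \epsilon b_j / (\text{something})$ fraction of $b_j$, so that at most $O(1/\epsilon)$ edges remain at $j$, and the total discarded flow into $j$ is at most $\epsilon b_j$ — this is where we allow the $(1-\epsilon)$ violation of the demand. The discarded flow must be rerouted: since we drop flow at sinks we create surplus at the corresponding sources, which we push onto one of the surviving heavy edges of that source (or, if need be, onto an arbitrary edge, accepting the $(1+\epsilon)$ overshoot at its sink). One then argues that (i) the number of surviving edges, hence the fixed cost paid, is $O(1/\epsilon)$ per sink, i.e. $O(m/\epsilon)$ total — but we need it bounded by $O(1/\epsilon)\cdot\OPT$, so actually the charging must be against the \emph{edges $\OPT$ itself opens}; (ii) the linear cost is unchanged on surviving edges and only grows on a few rerouted units, controlled by comparing $c_{ij}$ of a dropped edge with that of the heavy edge it merges into.

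The cleanest way to make the counting work, which is what I would pursue, is: the fixed cost of any feasible solution (in particular $\OPT$) is at least the number of edges in its support, and by a forest/connectivity argument its support has at least (number of "clusters" needed to route all demand) edges; conversely our pruned solution opens $O(1/\epsilon)$ edges per connected component of $\OPT$'s support, giving the $O(1/\epsilon)$ ratio. Alternatively — and this is likely the intended short proof — scale so that $b_{\min}=1$, observe $\OPT \ge \max(\text{linear LP}, \#\text{support edges of any solution})$, take a basic optimum $x^\star$ of the full LP (fixed costs replaced by $f_{ij}/ (\epsilon \cdot (\text{flow on } ij))$-type linearization is awkward, so instead) run the transportation LP with a carefully chosen linearized objective $\sum_{ij}(c_{ij} + f_{ij}\cdot \mathbb{1}[\cdot])$ relaxed, extract a basic solution, and prune edges below an $\epsilon$-fraction threshold at each sink. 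The hard part — and the step I expect to be the crux — is the rerouting bookkeeping: guaranteeing simultaneously that (a) every source still ships exactly $a_i$, (b) every sink receives within $(1\pm\epsilon)b_j$, (c) the fixed cost of the final support is $O(1/\epsilon)\cdot\OPT$ rather than $O(1/\epsilon)\cdot(n+m)$, and (d) no new fixed costs are incurred by the rerouting except on edges already counted. Achieving (c) is the delicate point; it requires charging surviving light-but-retained edges to the structure of an optimal integral (or extreme-point LP) solution via the forest property, e.g. by showing that within each tree of $\OPT$'s support one can reroute everything onto $O(1/\epsilon)$ of the original edges. I would handle it by processing sinks in decreasing demand order and maintaining the invariant that the retained edges form a sub-forest, so that a global count of retained edges telescopes against $\sum_j (\text{edges }\OPT\text{ uses at }j) = |\supp(\OPT)| \le \OPT/f_{\min}$, after the usual normalization making fixed costs $\ge 1$; in the uniform-looking worst case this yields exactly the claimed $O(1/\epsilon)$ factor.
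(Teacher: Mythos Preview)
Your plan has a real gap at exactly the point you flag as ``delicate'': bounding the total \emph{fixed} cost of the surviving edges by $O(1/\epsilon)\cdot\OPT$ when the $f_{ij}$ are arbitrary. Your pruning criterion is purely flow-based (keep edges whose $x_{ij}$ is a large fraction of $b_j$), so nothing stops a surviving heavy-flow edge from having an enormous $f_{ij}$ that $\OPT$ never pays. Your two proposed charging schemes both collapse in the non-uniform case: charging to $|\supp(\OPT)|\le \OPT/f_{\min}$ throws away all information about which edges are expensive, and ``$O(1/\epsilon)$ edges per connected component of $\OPT$'s support'' counts edges, not weighted fixed costs. The linear-cost-only LP you start from simply contains no information about $f$, so no amount of structural accounting over its basic solution can recover a fixed-cost bound against $\OPT$.

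The fix --- which you brush against and then dismiss as ``awkward'' --- is to build $f$ into the LP from the start. The paper sets $p_{ij}=\min\{a_i,b_j\}$ (an a priori upper bound on any feasible $x_{ij}$), solves the linear program $\min\sum_{ij}(c_{ij}+f_{ij}/p_{ij})\,x_{ij}$ over $\calX$, and works with $y_{ij}:=x_{ij}/p_{ij}\in[0,1]$. Since $x_{ij}\le p_{ij}$ for every feasible $x$, this LP lower-bounds $\OPT$. On the support tree of a basic optimum, one rounds each $y_e<\epsilon$ to $0$ or $\epsilon$ by shifting mass between sibling child-edges so that the LP objective does not increase and the flow at each vertex changes by at most $\epsilon\cdot(a|b)_v$; at most one leftover edge per vertex is then zeroed out. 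Now every surviving edge has $y'_{ij}\ge\epsilon$, so $1_{x'_{ij}>0}\le y'_{ij}/\epsilon$ and the actual cost is at most $(1/\epsilon)\sum_{ij}(c_{ij}p_{ij}+f_{ij})y'_{ij}\le (1/\epsilon)\cdot\OPT$. The feasibility bookkeeping (your points (a),(b),(d)) falls out from the parent/child accounting on the rooted tree, with a final rescaling to make source constraints exact. The point you were missing is that the normalizer in the linearization should be the \emph{fixed} quantity $p_{ij}$, not the unknown flow; this is what makes the relaxation linear and valid simultaneously.
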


\noindent{\bf Organization}\ \ The rest of the paper is organized as follows. In Section~\ref{sec:prelim}, we introduce the problems used in our hardness reductions. We provide an overview of our techniques in Section~\ref{sec:overview}. In Section~\ref{sec:PFCT-S-2-approx}, we describe the $2$-approximation for PFCT-S, proving Theorem~\ref{thm:PFCT-S-2}. 
In Section~\ref{sec:PFCT-U-65-approx}, we prove Theorem~\ref{thm:PFCT-U-65} by describing the $(\frac65 + \epsilon)$-approximation for PFCT-U. We show the hardness of the PFCT and PFCT-U problems in Sections~\ref{sec:PFCT-hardness} and \ref{sec:PFCT-U-from-3DM-B}, which prove Theorems~\ref{thm:PFCT-DST-hard} and \ref{thm:PFCT-U-APX-hard} respectively. 
The Efficient Parameterized Approximation Scheme (EPAS) for PFCT, which prove Theorems~\ref{thm:PFCT-EPAS} is given in Sections~\ref{sec:EPAS}.  Due to the page limit, we give the simple $2$-approximation for FCT-U (proof of  Theorem~\ref{thm:FCT-U-2}), the Set Cover hardness of FCT-S (proof of Theorem~\ref{thm:FCT-S-Set-Cover-hard}), and the bi-criteria approximation for FCT (proof of Theorem~\ref{thm:FCT-bicriteria}) in the appendix.

    \section{Preliminaries}

\label{sec:prelim}


We define several problems that we use in our reductions for proving the hardness results. \smallskip

\noindent{\bf Directed Steiner Tree}\ \ In the Directed Steiner Tree (DST) problem, we are given a directed graph $G=(V, E)$ of $n = |V|$ vertices, with a root $r \in V$ and a set $T \subseteq V \setminus \{r\}$ of $k$ terminals. Every edge $e \in E$ has a cost $c_e \in \R_{\geq 0}$. The goal of the problem is to find a subgraph $H$ of $G$ with the minimum cost, that contains a path from $r$ to $t$ for every $t \in T$. A minimal subgraph satisfying this property is a directed tree rooted at $r$ (i.e, a out arborescence), hence the name Directed Steiner Tree (DST). 

A simple reduction from Set Cover to DST gives a $(1-\epsilon)\ln k$-hardness for the latter problem, which translates to a hardness of $\Omega(\log n)$ in terms of $n$. Stronger inapproximability results are known: it is NP-hard to approximate DST within a factor of $O(\log^{2-\epsilon}n)$ for any constant $\epsilon > 0$ \cite{HK03}. Moreover, under the stronger assumption that  $\textrm{NP}\not\subseteq$ $\textrm{BPPTime}(2^{\log^{O(1)}n})$, the problem does not admit an $o(\log^2k/\log\log k)$-approximation algorithm. On the positive side, DST admits a quasi-polynomial time $O(\log^{2}n/\log\log n)$-approximation \cite{GLL22, GN22, CCC99}, and a $n^{O(1/\epsilon)}$-time $O(n^{\epsilon})$-approximation algorithm for any constant $\epsilon>0$ \cite{Z1997}. Whether there is a polynomial-time poly-logarithmic approximation for DST is a notorious open problem.\smallskip


\noindent{\bf Maximum 3-Dimensional Matching}\ \ 
In the 3-Dimensional Matching (3DM) problem, we are given a set $M \subseteq X \times Y \times Z$ of triples, where $X$, $Y$ and $Z$ are disjoint sets. The goal of the problem is to find the largest subset $M'\subseteq M$ such that no two tuples in $M'$ have the same coordinate. This is one of the Karp's 21 NP-hard problems.  

In our reduction, it is important to ensure that the number of times an element in $X \cup Y \cup Z$ appears in $M$ is bounded by an absolute constant $B$. We call such the problem the 3-Dimensional Matching with bounded frequency (3DM-B) problem.  The 3DM-B problem was proven to be APX-hard by Kann \cite{K91}, and later was shown to be NP-hard to approximate within a factor of $\frac{98}{97} - \epsilon$ for any constant $\epsilon > 0$ \cite{BK}:
\begin{theorem}[\cite{BK}] 
    \label{thm:3DM-B}
	There is a large enough integer $B$ such that the following is true for any small constant $\epsilon > 0$. Given a 3DM instance $(X, Y, Z, E)$ where $|X| = |Y| = |Z| = n$ and every $v \in X \cup Y \cup Z$ appears in between $2$ and $B$ triples in $E$, it is NP-hard to distinguish between the following two cases:
    \begin{itemize}
    	\item $(X, Y, Z, E)$ is a $yes$-instance : there is a matching $M \subseteq E$ of size $n$,
    	\item $(X, Y, Z, E)$ is a $no$-instance : every matching $M \subseteq E$ has size at most $\big(\frac{97}{98} + \epsilon\big)n$.
    \end{itemize}
\end{theorem}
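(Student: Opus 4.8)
The plan is to prove Theorem~\ref{thm:3DM-B} by a gap-preserving reduction from a bounded-occurrence, APX-hard variant of \textsc{Max-3Sat} (equivalently one could start from bounded-occurrence \textsc{Max-E2-Lin-2} à la Berman--Karpinski, which tends to give cleaner constants). Fix a $3$-CNF formula $\phi$ with $p$ variables and $q$ clauses in which every variable occurs in at most a constant number $B'$ of clauses; such instances are NP-hard to approximate within some fixed factor bounded away from $1$ (PCP theorem plus expander-based degree reduction). I would transform $\phi$ into a 3DM instance whose maximum matching has size exactly $N$ when $\phi$ is satisfiable and at most $N-\Omega(q)$ when only a $(1-\Omega(1))$-fraction of clauses can be satisfied; since $N=\Theta(p)=\Theta(q)$, this is a multiplicative $1:(1-\Omega(1))$ gap, and after tuning the parameters it yields the $\tfrac{98}{97}-\epsilon$ hardness.

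The construction has three kinds of gadgets. (i) \emph{Truth-setting rings}: for each variable $x$ occurring $k_x$ times, introduce internal ground elements $a^x_1,b^x_1,\dots,a^x_{k_x},b^x_{k_x}$ that appear only inside this gadget (exactly twice each), together with a ``true tip'' and a ``false tip'' for each occurrence; the $2k_x$ triples of the ring are arranged so that, in any maximum matching, the internal elements can be covered in exactly two ways --- one leaving precisely the $k_x$ true tips free, one leaving precisely the $k_x$ false tips free --- which we read as $x=\mathrm{true}$ or $x=\mathrm{false}$. (ii) \emph{Clause cores}: for each clause $c=(\ell_1\vee\ell_2\vee\ell_3)$ add two fresh elements $s^c_1,s^c_2$ and three triples, the $i$-th of the form $\{s^c_1,s^c_2,\mathrm{tip}(\ell_i,c)\}$, with the ``true/false'' label of the tip chosen so that it is free exactly when $\ell_i$ is satisfied; at most one of these triples can be taken, and only when the corresponding literal is satisfied. (iii) \emph{Garbage collection}: add a pool of $\Theta(q)$ fresh elements and triples so that every still-free tip can be absorbed together with two garbage elements, with each tip and each garbage element lying in only $O(1)$ triples.

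For the analysis, if $\phi$ has a satisfying assignment, pick the matching ring option for each variable, one clause triple per clause using a satisfied literal's free tip, and garbage triples for the remaining free tips; this covers all internal, core, and tip elements and leaves only a fixed number of garbage elements uncovered --- the maximum possible, call it $N$. Conversely, from a matching of value $\geq N-\delta q$ one may assume (at a negligible loss, since half-covering a ring's internal cycle is strictly suboptimal) that each ring is in one of its two canonical states, hence defines an assignment; every clause whose core is fully covered has a satisfied literal, and the value deficit upper-bounds the number of clauses with an uncovered core, so the assignment satisfies $\geq (1-O(\delta))q$ clauses. Choosing $\delta$ small enough relative to the \textsc{Max-3Sat} gap gives inapproximability; finally, pad with dummy triples and elements so that $|X|=|Y|=|Z|=n$ and every element appears in between $2$ and $B$ triples, which does not affect the gap.

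The step I expect to be the real obstacle is not the reduction template but making the gap quantitatively tight enough to reach $\tfrac{98}{97}$: this requires carefully accounting for the total element count $N$ as a function of $p,q$ and the per-occurrence overhead, relating the matching-value deficit to unsatisfied clauses with sharp constants, and choosing a starting hard problem (and possibly an intermediate amplification or a Berman--Karpinski-style \textsc{Max-Lin} source with a good occurrence/gap trade-off) strong enough that the crude gadgets above do not dilute the gap below the target. Matching the exact constant of \cite{BK} is precisely where this bookkeeping becomes delicate.
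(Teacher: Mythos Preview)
The paper does not prove this theorem at all: it is stated with the citation \cite{BK} and used as a black box in Section~\ref{sec:PFCT-U-from-3DM-B}. There is no ``paper's own proof'' to compare your proposal against; the result is simply imported from Berman and Karpinski.

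Your sketch is a reasonable outline of the classical Garey--Johnson-style reduction from bounded-occurrence \textsc{Max-3Sat} to 3DM, and you correctly identify that the real difficulty is the bookkeeping needed to push the constant all the way to $98/97$. But since the present paper makes no attempt at this and treats the theorem as a citation, your proposal is solving a problem the paper explicitly outsources. If the intent is to make the paper self-contained, you would need to reproduce the Berman--Karpinski argument in full (starting from their \textsc{Max-E2-Lin-2} hardness with carefully controlled occurrence numbers), which is well beyond the scope of a short proof and not something the gadget template you describe will reach without their specific amplification machinery.
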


\noindent{\bf Maximum $k$-Set Packing}\ \  In the Maximum \( k \)-Set Packing problem, we are given a collection of \( n \) sets, each containing at most \( k \) elements from a given ground set. The objective is to find the largest collection of pairwise disjoint sets. This problem generalizes the \( k \)-Dimensional Matching problem, which involves finding a maximum matching in a \( k \)-partite \( k \)-uniform hypergraph. In particular, Maximum \( 3 \)-Set Packing generalizes 3-Dimensional Matching. The best-known approximation algorithm for the Maximum \( k \)-Set Packing problem achieves a \(\frac{k+1+\epsilon}{3}\)-approximation ratio, given by Cygan \cite{C13}, with subsequent improvements by Fürer and Yu \cite{FY} on the algorithm's running time dependence on \(\epsilon\).   \smallskip


\noindent{\bf Other Notations}\ \ 
Given a real vector $x \in \R^{\mathbb{D}}$ for some domain $\mathbb{D}$, we use $\supp(x):=\{i \in \mathbb{D}: x_i \neq 0\}$ to denote the set of its coordinates with non-zero values. For any subset $U \subseteq \mathbb{D}$, we define $x(U):= \sum_{i \in U} x_i$, unless specified otherwise. 
    \section{Overview of Techniques}
\label{sec:overview}

In this section, we provide a brief overview of the techniques used in our results. \smallskip

\noindent{\bf $2$-Approximation Algorithm for PFCT}\ \ We formulate an LP relaxation by considering a linear objective $\sum_{i \in S, j \in T} \frac{x_{ij}}{b_j} \cdot f_i$. This is upper bounded by its actual cost $\sum_{i \in S, j \in T} \ceil{\frac{x_{ij}}{b_j}} \cdot f_i$ of $x$. The LP can be solved using a greedy algorithm that processes sources in descending orders of $f_i$ values, and sinks in descending order of $b_j$ values. The solution obtained will not contain a ``crossing’’ pair of edges. Using this property, we show that the gap between the actual cost and its LP cost is small. \smallskip

\noindent {\bf $(\frac65+\epsilon)$-Approximation Algorithm for PFCT}\ \  Our goal is to partition $S \cup T$ into the maximum number of balanced sets, where a set is balanced if its total supply equals its total demand. The cost of a solution is $n + m$ minus the number of balanced sets. The maximization problem is a special case of the set-packing problem. Intuitively, the optimum solution in a bad instance of the PFCT-U problem should contain many small balanced sets. Fortunately, for the $k$-set-packing problem, the case where each given set has size at most $k$, a local search algorithm can give a $(k+1+\epsilon)/3$-approximation \cite{FY}. Then we restrict our attention to balanced sets of sizes $3, 4, 5, 6$. For each size $k$, we run the local search $(k+1 + \epsilon)/3$-approximation algorithm and output the best solution. The final approximation ratio $\frac65$ is then obtained by a small linear program.  \smallskip

\noindent{\bf Directed-Steiner-Tree-Hardness of PFCT}\ \ To prove the hardness of PFCT, we introduce an intermediate problem called the PFCT-Digraph problem, where we are given a directed graph and the flows can only be sent along the edges. The reduction from Directed Steiner Tree to PFCT-Digraph is immediate: we just treat the root as a source with supply $k$, and every terminal as a sink with demand $1$, and assign a fixed cost of $1$ to each edge. We then show that PFCT-Digraph is equivalent to the original PFCT problem, by splitting each vertex $v$ into an edge $(v_\rmout, v_\rmin)$. \smallskip

\noindent{\bf Set-Cover-Hardness of FCT-S}\ \ To prove the hardness of FCT-S, we only need to use the special case where each $c_{ij}$ is either $0$ or $\infty$. In this setting, the linear cost ratios determine which edges can be used. The reduction from Set Cover follows the similar vertex-splitting approach as in the previous reduction. Since the fixed costs depend only on the sources, we can only use Set Cover (rather than Directed Steiner Tree) as the basis for the reduction, resulting in a weaker hardness result. \smallskip

\noindent{\bf Reduction from 3DM-B to PFCT-U}\ \ For a given 3DM-B instance $(X, Y, Z, E)$, we randomly assign each element $v \in X \cup Y \cup Z$ a demand $b_v$, from the integer interval $(\Delta, 2\Delta]$. For every triple $ijk \in E$, we create a source $ijk$ with supply $a_{ijk} := b_i + b_j + b_k$. The balanced set $\{i, j, k, ijk\}$ will correspond to the triple $ijk \in E$. A dummy sink is added to ensure total supply equals total demand. Ideally, a solution to the 3DM-B instance should choose the maximum number of balanced sets of the form $\{i, j, k, ijk\}$, and put the remaining sources and sinks into a big balanced set.  To prevent ``cheating'', we choose a sufficiently large $\Delta$ so that all the $b_v$ values for $v \in X \cup Y \cup Z$ are sufficiently ``independent’’:  no two small disjoint multi-sets of $X \cup Y \cup Z$ have the same total $b$ value. This will make cheating expensive. Note that it is important to have a bounded frequency $B$, so that the total number of sources and sinks is not too big compared to the total number of balanced sets in the target solution, which is 1 plus the size of the maximum 3-dimensional matching. \smallskip

\noindent{\bf $2$-Approximation for FCT-S}\ \ The 2-approximation algorithm for FCT-S is straightforward. We first ignore the fixed costs to obtain a solution $x$. As the fixed costs are uniformly 1,  the total fixed cost is simply the number of edges we used.  As $\supp(x)$ is a forest, this yields a 2-approximation for the fixed-cost component, leading to an overall 2-approximation. \smallskip

\noindent{\bf EPAS for PFCT Parameterized by $n$}\ \ For the PFCT problem, a PTAS is easy to obtain when $n = O(1)$: we can guess the set $P$ of $n/\epsilon$ most expensive edges in the solution, and use the $2$-approximation greedy algorithm. The difference between the actual cost and the LP cost of the greedy solution can be bounded by $n$ times the cost of the most expensive undecided edge. This is small compared to the total cost of $P$.  To improve the running time to $g(n, \epsilon)\cdot \poly(m)$, we reduce the number of candidates for the $n/\epsilon$ most expensive edges.  This is achieved by discretizing the fixed costs, and partitioning the sinks into $\left(O\left(\frac{\log m}{\epsilon}\right)\right)^{n}$ classes such that all sinks in a class have the same incident fixed cost vector. Each class can be treated as a PFCT-S instance. In this case, we show there is an approximately optimum solution where the $n/\epsilon$ most expensive edges are incident to a few sinks with the largest demands. This reduces the size of the candidate edges for the set $P$. 
    \section{$2$-Approximation for Pure Fixed Charge Transportation with Sink-Independent Fixed Costs (PFCT-S)}
\label{sec:PFCT-S-2-approx}

In this section, we give our $2$-approximation for the Pure Fixed Charge Transportation with Sink-Independent Costs (PFCT-S) problem. Recall that we are given sources $S$ with $|S| = n$, sinks $T$ with $|T| = m$, supply vector $a \in \Z_{>0}^S$, demand vector $b \in \Z_{>0}^T$ with $a(S) = b(T)$ and fixed cost vector $(f_i \geq 0)_{i \in S}$.  

We rename the sources as $[n]$ and sort them in descending order of $f_i$ values. Also, we rename the sinks as $[m]$ and sort them in descending order of $b_j$ values. So we have \(f_1 \geq f_2 \geq \cdots \geq f_n\) and \(b_1 \geq b_2 \geq \cdots \geq b_m\).

We consider the following linear program for the problem.
\begin{equation}
    \min \sum_{i \in S, j \in T} \frac{x_{ij}}{b_j} \cdot f_i \qquad \text{s.t.} \quad x \in \calX \label{LP}
\end{equation}
Recall that $\calX$ is defined in Section~\ref{subsec:formal-definition}. The actual cost of the solution $x$ is $\sum_{i \in S, j \in T}\ceil{\frac{x_{ij}}{b_j}}\cdot f_i$, which is non-linear in $x$. In the LP, we use the linear function $\sum_{i \in S, j \in T}\frac{x_{ij}}{b_j}\cdot f_i$ as the objective. Clearly, this is upper bounded by the actual cost.  \medskip

We prove the following claim, which will lead to the greedy algorithm for solving the LP. 
\begin{claim}
    \label{claim:no-crossing}
    Focus on a solution $x$ to \eqref{LP}. Assume $x_{ij'} > 0, x_{i'j} > 0$ for some $i < i'$ and $j < j'$.  Let $\epsilon = \min\{x_{ij'}, x_{i'j}\}$. Then the following operation will maintain the validity of $x$ without increasing its cost to \eqref{LP}: decrease $x_{ij'}$ and $x_{i'j}$ by $\epsilon$, and increase $x_{ij}$ and $x_{i'j'}$ by $\epsilon$. 
\end{claim}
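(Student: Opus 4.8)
The plan is to verify the two required properties separately: feasibility of the updated vector, and monotonicity of the LP objective. Both are direct computations once we unwind the definitions, so this will be a short argument with no real technical machinery.

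First I would check feasibility, i.e. that the new vector still lies in $\calX$. For the supply constraints, note that row $i$ loses $\epsilon$ at column $j'$ and gains $\epsilon$ at column $j$, so $\sum_{j''} x_{ij''}$ is unchanged; symmetrically for row $i'$. For the demand constraints, column $j$ gains $\epsilon$ from row $i$ and loses $\epsilon$ from row $i'$, net zero, and likewise column $j'$; all other columns are untouched. For nonnegativity, the two entries $x_{ij}$ and $x_{i'j'}$ only increase, and the two entries $x_{ij'}, x_{i'j}$ each decrease by $\epsilon = \min\{x_{ij'}, x_{i'j}\}$, hence remain nonnegative. So $x$ stays in $\calX$.

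Next I would compute the change in the LP objective \eqref{LP}. Only the four entries $(i,j), (i,j'), (i',j), (i',j')$ are modified, so the change is
\[
\Delta = \epsilon\left(\frac{f_i}{b_j} + \frac{f_{i'}}{b_{j'}} - \frac{f_i}{b_{j'}} - \frac{f_{i'}}{b_j}\right) = \epsilon\,(f_i - f_{i'})\left(\frac{1}{b_j} - \frac{1}{b_{j'}}\right).
\]
Since the sources are sorted so that $i < i'$ implies $f_i \geq f_{i'}$, the first factor is nonnegative; since the sinks are sorted so that $j < j'$ implies $b_j \geq b_{j'}$, we have $\frac1{b_j} \leq \frac1{b_{j'}}$, so the second factor is nonpositive. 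Hence $\Delta \leq 0$, and the cost does not increase.

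There is no serious obstacle here; the only point requiring care is bookkeeping the signs in $\Delta$ and making sure the two sorting conventions (descending $f_i$, descending $b_j$) line up so that the product of the two factors is nonpositive. This is exactly why the algorithm processes sources in descending order of $f_i$ and sinks in descending order of $b_j$, and it is what guarantees the greedy "no-crossing" solution is optimal for the LP.
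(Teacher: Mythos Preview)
Your proof is correct and follows essentially the same approach as the paper: both verify feasibility (the paper dismisses this with ``clearly'' while you spell out the row/column bookkeeping), and both compute the cost change via the identical factorization $\epsilon(f_i - f_{i'})\big(\tfrac{1}{b_j} - \tfrac{1}{b_{j'}}\big) \leq 0$ using the descending orderings of $f$ and $b$.
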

\begin{proof}
    Clearly, the operation will maintain the feasibility of $x$. It increases the cost by
    \begin{align*}\displaystyle
        \epsilon\left(\frac{f_i}{b_j} + \frac{f_{i'}}{b_{j'}} - \frac{f_i}{b_{j'}} - \frac{f_{i'}}{b_j}\right) = \epsilon(f_ i - f_{i'})\left(\frac{1}{b_j} - \frac{1}{b_{j'}}\right) \leq 0.
    \end{align*}
    The last inequality holds as $f_i \geq f_{i'}$ and $b_j \geq b_{j'}$. 
\end{proof}

Therefore, there is an optimum solution $x$ to \eqref{LP}, whose support does not contain a ``crossing'' pair of edges as described in Claim~\ref{claim:no-crossing}.  Such a solution is unique, and can be obtained by the following greedy algorithm: 
\begin{algorithm}[H]
    \caption{Greedy Algorithm for PFCT-S}
    \label{alg:PFCT-S-greedy}
    \begin{algorithmic}[1]
        \State $x_{ij} \gets 0$ for all $i \in S, j \in T$
        \State $i \gets 1, j \gets 1$
        \While{$i \leq n$ and $j \leq m$}
            \State $x_{ij} \gets \min\{a_i, b_j\}, a_i \gets a_i - x_{ij}, b_j \gets b_j - x_{ij}$
            \If{$a_i = 0$} $i \gets i + 1$\EndIf
            \If{$b_j = 0$} $j \gets j + 1$\EndIf
        \EndWhile
    \end{algorithmic}
\end{algorithm}

Till the end of this section, we call the $x$ obtained by the greedy algorithm the greedy solution, and we fix this $x$.  We output $x$ as the solution to the PFCT-S instance. 



For real $t \in (0, b([m])]$, let $\pi(t)$ be the smallest $j$ such that $b([j]) \geq t$, i.e., the smallest $j$ such that the first $j$ sinks have total demand at least $t$.  Let $f_{n+1} = 0$ for convenience.

\begin{lemma}
    \label{lemma:PFCT-S-opt}
    The optimum solution of the PFCT instance has cost at least $\sum_{i = 1}^{n} (f_{i} - f_{i+1}) \pi(a([i]))$.
\end{lemma}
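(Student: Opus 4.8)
The plan is to establish the lower bound by an exchange/charging argument against an arbitrary optimum solution $x^*$ to the PFCT-S instance, comparing it to the greedy solution $x$ via the telescoping decomposition $f_i = \sum_{i' \geq i}(f_{i'} - f_{i'+1})$ (using $f_{n+1}=0$). The key observation is that $\sum_{i \in S, j \in T}\lceil x^*_{ij}/b_j\rceil f_i = \sum_{i'=1}^n (f_{i'} - f_{i'+1}) \cdot \big(\sum_{i \leq i', j}\lceil x^*_{ij}/b_j\rceil\big)$, so it suffices to show that for each threshold $i'$, the quantity $N_{i'} := \sum_{i \leq i', j \in T}\lceil x^*_{ij}/b_j\rceil$ (the total ``edge count'', counting an edge from source $i$ to sink $j$ with multiplicity $\lceil x^*_{ij}/b_j\rceil$, summed over the $i'$ most expensive sources) is at least $\pi(a([i']))$.

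For this, fix $i'$ and let $A := a([i']) = \sum_{i \leq i'} a_i$ be the total supply of the $i'$ most expensive sources. In $x^*$, all of this supply $A$ is routed into the sinks along edges from sources in $[i']$; if such an edge $ij$ carries $x^*_{ij} > 0$ units, it carries at most $b_j \cdot \lceil x^*_{ij}/b_j \rceil$ units, i.e.\ each ``unit of edge'' (counted by the ceiling) into sink $j$ absorbs at most $b_j$ units of flow. So if we sort the sinks in descending order of $b_j$ — which is exactly the order $b_1 \geq b_2 \geq \cdots \geq b_m$ — then $N_{i'}$ edge-units can absorb a total flow of at most (the sum of the $N_{i'}$ largest $b_j$ values), i.e.\ at most $b([N_{i'}])$. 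Since the flow absorbed is exactly $A$, we get $b([N_{i'}]) \geq A = a([i'])$, which by the definition of $\pi$ means $N_{i'} \geq \pi(a([i']))$. Summing the telescoped inequality over $i'$ and using that the LP/linear objective is a lower bound only secondarily — actually here we bound the true (ceiling) optimum directly — yields $\mathrm{opt} \geq \sum_{i'=1}^n (f_{i'} - f_{i'+1})\pi(a([i']))$.

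The main technical care needed is the claim that a collection of $N$ edge-units, distributed among sinks so that sink $j$ receives $c_j \in \Z_{\geq 0}$ of them with $\sum_j c_j = N$, can absorb at most $b([N])$ units of flow: this follows because the absorbed flow is at most $\sum_j c_j b_j$, and since $b_1 \geq b_2 \geq \cdots$, the quantity $\sum_j c_j b_j$ subject to $\sum_j c_j = N$, $c_j \geq 0$ integers, is maximized by setting $c_1 = \cdots = c_N = 1$ (a standard rearrangement fact — putting all weight on the largest $b_j$'s). I expect this rearrangement step and the careful handling of the ceiling (ensuring $x^*_{ij} > 0 \Rightarrow x^*_{ij} \leq b_j \lceil x^*_{ij}/b_j\rceil$, which is immediate, and that $\lceil x^*_{ij}/b_j \rceil \geq 1$ exactly on the support) to be the only real content; the telescoping is routine. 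One subtlety worth double-checking: $\pi(t)$ is defined for $t \in (0, b([m])]$, and $a([i']) \leq a([n]) = a(S) = b(T) = b([m])$, so $\pi(a([i']))$ is well-defined for every $i'$, and the argument goes through even when $N_{i'}$ exceeds $m$ (in which case $b([N_{i'}])$ should be read as $b([m])$, and the inequality $N_{i'} \geq \pi(a([i']))$ holds trivially).
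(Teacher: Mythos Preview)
Your telescoping framework is exactly the paper's approach, and the reduction to showing $N_{i'} \geq \pi(a([i']))$ for each threshold $i'$ is correct. However, the rearrangement step in your final paragraph is wrong: the linear form $\sum_j c_j b_j$ subject to $\sum_j c_j = N$ and $c_j \in \Z_{\geq 0}$ is maximized at $c_1 = N$ (giving $N b_1$), \emph{not} at $c_1 = \cdots = c_N = 1$; indeed your own phrase ``putting all weight on the largest $b_j$'s'' describes the former configuration. So the chain ``absorbed flow $\leq \sum_j c_j b_j \leq b([N])$'' breaks at the second inequality.

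The repair is one line and is what the paper does: the flow that sink $j$ receives from sources in $[i']$ is at most $b_j$ \emph{regardless of how many edges enter it}, so the total flow $a([i'])$ is bounded by $\sum_{j:\, c_j > 0} b_j$. Since at most $N_{i'}$ sinks have $c_j > 0$ (as $\sum_j c_j = N_{i'}$ with integer $c_j \geq 0$), this sum is at most the sum of the $N_{i'}$ largest demands, i.e.\ $b([N_{i'}])$, and $N_{i'} \geq \pi(a([i']))$ follows. Equivalently, just count the \emph{distinct} sinks adjacent to $[i']$ in $\supp(x^*)$; that count is at most $N_{i'}$ and already suffices. Your intermediate bound via $\sum_j c_j b_j$ is valid but too slack to close the argument---drop it and use the per-sink cap $b_j$ directly.
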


\begin{proof}
    We analyze the cost of the optimum solution as follows. When some sink is connected to a source $i$, we split the cost $f_i$ into segments $(f_i - f_{i+1}) + (f_{i+1} - f_{i+2}) + \cdots + (f_n - f_{n+1})$. The segment $f_i - f_{i+1}$ will be paid if we connect a sink to a source in $[i]$. The number of sinks connected to $[i]$ is at least $\pi(a[i])$ by the definition of $\pi$ and that $b_1 \geq b_2 \geq \cdots \geq b_m$. The lemma then follows.
\end{proof}

\begin{lemma}
    \label{lemma:PFCT-S-greedy}
    The cost of $x$ to the PFCT-S instance is at most $\sum_{i = 1}^{n} (f_{i} - f_{i+1}) \pi(a([i])) + \sum_{i = 2}^n f_i$. 
\end{lemma}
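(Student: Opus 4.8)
The plan is to bound the actual cost $\sum_{i,j}\ceil{x_{ij}/b_j}f_i$ of the greedy solution by its LP cost $\sum_{i,j}(x_{ij}/b_j)f_i$ plus an error term, and then to relate the LP cost to the lower bound $\sum_{i=1}^n (f_i-f_{i+1})\pi(a([i]))$ from Lemma~\ref{lemma:PFCT-S-opt} while absorbing the error into $\sum_{i=2}^n f_i$. The key structural fact is that the greedy algorithm produces a \emph{staircase} support: for each source $i$, the set of sinks $j$ with $x_{ij}>0$ is a contiguous interval $[\ell_i, r_i]$ of the sorted sinks, these intervals are nested in the natural left-to-right order, and consecutive intervals overlap in at most one sink. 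In particular, the number of edges incident to source $i$ is $r_i - \ell_i + 1$, and $\sum_i \ceil{x_{ij}/b_j}$ telescopes nicely: the total number of edges used is at most $n + m - 1$, and more precisely the ``extra'' edges beyond the fractional count are controlled.

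First I would make the staircase description precise by tracking the greedy pointers: at the moment source $i$ is first touched, the running sink pointer is some $\ell_i$, and source $i$ is exhausted at sink $r_i$, with $\ell_{i+1} \in \{r_i, r_i+1\}$. The crucial observation is that for a fixed source $i$, $\sum_{j} \ceil{x_{ij}/b_j} \le \sum_j x_{ij}/b_j + (\text{number of sinks } j \text{ with } 0 < x_{ij} < b_j)$, and in the staircase structure source $i$ has at most two such ``fractional'' sinks --- namely its first sink $\ell_i$ (shared with source $i-1$) and its last sink $r_i$ (shared with source $i+1$) --- while all interior sinks are fully served by source $i$ alone so contribute integrally. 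Summing over $i$, each sink $j$ is the shared boundary of at most one pair of consecutive sources, so the total number of (source, sink) incidences where $0<x_{ij}<b_j$ is at most $m-1$, but I actually want to charge it differently: charge each such fractional incidence at source $i$ (for $i\ge 2$) to the cost $f_i$. Since the first sink of source $i$ is shared with source $i-1 < i$, and $f$ is non-increasing, this is where the asymmetry giving $\sum_{i=2}^n f_i$ rather than $\sum_{i=1}^n$ comes from: source $1$'s only fractional sink can be charged to source $2$ (or handled directly), so the error term is $\sum_{i=2}^n f_i$.

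Next I would argue that the LP cost of the greedy solution equals $\sum_{i=1}^n (f_i - f_{i+1})\pi(a([i]))$, matching the lower bound exactly (so greedy is LP-optimal, as promised by Claim~\ref{claim:no-crossing}). This follows from the same telescoping: $\sum_{i,j}(x_{ij}/b_j)f_i = \sum_{i,j}(x_{ij}/b_j)\sum_{i'\ge i}(f_{i'}-f_{i'+1}) = \sum_{i'}(f_{i'}-f_{i'+1})\sum_{i\le i', j}(x_{ij}/b_j)$, and $\sum_{i\le i',j} x_{ij} = a([i'])$ while the greedy solution packs these $a([i'])$ units into the first $\pi(a([i']))$ sinks, each of which (except possibly the last) is filled to capacity $b_j$, so $\sum_{i\le i',j} x_{ij}/b_j$ is within the range $[\pi(a([i']))-1, \pi(a([i']))]$. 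Being a bit careful here: the LP cost is \emph{at most} $\sum_i (f_i-f_{i+1})\pi(a([i]))$, and since Lemma~\ref{lemma:PFCT-S-opt} gives this as a lower bound on OPT which is $\ge$ LP cost, in fact the LP cost is sandwiched and we just use the upper bound.

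Combining: the actual cost of $x$ is at most (LP cost of $x$) $+ \sum_{i=2}^n f_i \le \sum_{i=1}^n (f_i-f_{i+1})\pi(a([i])) + \sum_{i=2}^n f_i$, which is the claimed bound. The main obstacle I anticipate is the careful bookkeeping of boundary sinks: one must handle degenerate cases where an interval $[\ell_i, r_i]$ is a single sink, where $a_i = b_j$ exactly so a sink is ``shared'' only in a vacuous sense, or where the greedy pointers advance simultaneously --- getting the count of fractional incidences pinned down to exactly the right value so that only $\sum_{i=2}^n f_i$ (and not, say, $\sum_{i=1}^n f_i$ or $2\sum_i f_i$) is needed. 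A clean way to organize this is to define, for $i \ge 2$, an indicator for whether source $i$'s leftmost sink is only partially filled by $i$, bound $\ceil{x_{ij}/b_j} - x_{ij}/b_j$ summed over all $j$ for fixed $i$ by that indicator plus a ``rightmost sink'' term, and then observe the rightmost-sink term of source $i$ is the leftmost-sink term of source $i+1$, so shifting the index and dropping the final term yields exactly $\sum_{i=2}^n f_i$.
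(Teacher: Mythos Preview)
Your plan hinges on the inequality ``actual cost of $x$ $\le$ (LP cost of $x$) $+\sum_{i=2}^n f_i$'', and that inequality is false. Take $n=2$, $m=1$, $a_1=a_2=1$, $b_1=2$, with $f_1>f_2$. Greedy gives $x_{11}=x_{21}=1$; the actual cost is $f_1+f_2$, the LP cost is $\tfrac12(f_1+f_2)$, and the gap $\tfrac12(f_1+f_2)$ strictly exceeds $f_2=\sum_{i\ge 2}f_i$. The flaw is in the shifting step. The rightmost fractional sink of source $i$ does coincide (as a sink) with the leftmost fractional sink of source $i{+}1$, but those two incidences carry weights $f_i$ and $f_{i+1}$ respectively, and $f_i\ge f_{i+1}$: replacing a term at weight $f_i$ by one at weight $f_{i+1}$ moves the bound in the wrong direction. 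Nor are the two fractional amounts $1-x_{i,r_i}/b_{r_i}$ and $1-x_{i+1,\ell_{i+1}}/b_{\ell_{i+1}}$ equal (when only two sources share the sink they are complementary, summing to $1$), so they do not literally cancel either. In short, per-source accounting of fractional edges cannot avoid charging something to $f_1$, and no amount of bookkeeping of degenerate cases fixes this.

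The paper does not route through the LP cost at all. It uses the same telescoping $f_i=\sum_{i'\ge i}(f_{i'}-f_{i'+1})$, so that the actual cost equals $\sum_{i'}(f_{i'}-f_{i'+1})\cdot\big|\{\text{edges of }\supp(x)\text{ incident to }[i']\}\big|$. The point is then to bound this edge count for the \emph{prefix} $[i']$ in one stroke: greedy connects $[i']$ only to sinks in $[\pi(a([i']))]$, and $\supp(x)$ restricted to $[i']\cup[\pi(a([i']))]$ is a forest, hence has at most $\pi(a([i']))+i'-1$ edges. Summing $(f_{i'}-f_{i'+1})(i'-1)$ over $i'$ telescopes to exactly $\sum_{i=2}^n f_i$. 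The forest bound applied to the whole prefix is what replaces your boundary-sink accounting, and it never needs to pay $f_1$ for a shared edge --- which is precisely what your decomposition cannot achieve.
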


\begin{proof}
    Similar to the proof of Lemma~\ref{lemma:PFCT-S-opt}, when $x$ incurs a cost of $f_i$, we split it into segments $(f_i - f_{i+1}) + (f_{i+1} - f_{i+2}) + \cdots + (f_n - f_{n+1})$. So, the segment $f_{i} - f_{i + 1}$ is paid whenever we make a connection to sources in $[i]$.  Notice that the greedy algorithm will only connect $[i]$ to sinks in $[\pi(a[i])]$. As $\supp(x)$ is a forest, the number of edges incident to $[i]$ in $\supp(x)$ is at most $\pi(a[i]) + i - 1$. So, the cost of $x$ to the PFCT-S instance is at most 
    \begin{flalign*}
       && &\quad \sum_{i = 1}^n (f_i - f_{i+1}) \big(\pi(a[i]) + i - 1\big) = \sum_{i = 1}^n (f_i - f_{i+1}) \pi(a[i])  + \sum_{i, i' \in [n]: i' \leq i-1} (f_i - f_{i+1})&&\\
       &&  &= \sum_{i = 1}^n (f_i - f_{i+1}) \pi(a[i])  + \sum_{i' = 1}^n \sum_{i = i'+1}^n(f_i - f_{i+1}) = \sum_{i = 1}^n (f_i - f_{i+1}) \pi(a[i]) + \sum_{i' = 1}^n f_{i'+1}. && \qedhere
    \end{flalign*}
\end{proof}

Combining Lemmas \ref{lemma:PFCT-S-opt} and \ref{lemma:PFCT-S-greedy} and that the cost of the optimum solution is at least $\sum_{i = 2}^n f_i$, we conclude that the cost of $x$ is at most $2$ times optimum cost. This finishes the proof of Theorem~\ref{thm:PFCT-S-2}. \medskip

The following corollary is implied by Lemmas \ref{lemma:PFCT-S-opt} and \ref{lemma:PFCT-S-greedy}. It will be used in our EPAS for the PFCT problem. For convenience, we prove it here. 
\begin{coro}
    \label{coro:compare}
    Suppose we have two PFCT-S instances with the same sources $[n]$, same supplies $a_1, a_2, \cdots, a_{n}$ and same costs $f_1 \geq f_2 \geq \cdots \geq f_{n}$. One instance has $m$ sinks with demands $b_1, b_2, \cdots, b_m$, and the other has $m'$ sinks with demands $b'_1, b'_2, \cdots, b'_{m'}$.  For every $t \in (0, a([n])]$, let $\pi(t)$ be the smallest $j$ such that $b([j]) \geq t$ let $\pi'(t)$ be the smallest $j$ such that $b'([j]) \geq t$. Assume $\pi'(t) \leq \pi(t) + \Delta$ holds for some integer $\Delta \geq 0$ and every $t \in (0, a([n])]$. 
    
    Then, the cost of the greedy solution to the second instance is at most the cost of the optimum solution to the first instance plus $\Delta f_1 + \sum_{i = 2}^n f_i$.
\end{coro}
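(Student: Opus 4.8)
The plan is to chain Lemma~\ref{lemma:PFCT-S-greedy}, applied to the second instance, with Lemma~\ref{lemma:PFCT-S-opt}, applied to the first instance, using the hypothesis $\pi'(t) \le \pi(t) + \Delta$ and a telescoping identity. First, note that both instances are valid PFCT-S instances with the same supplies, so the balance constraint gives $b([m]) = a([n]) = b'([m'])$; hence $\pi$ and $\pi'$ are both well-defined on $(0, a([n])]$ and Lemmas~\ref{lemma:PFCT-S-opt} and \ref{lemma:PFCT-S-greedy} apply verbatim to the respective instances.

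Applying Lemma~\ref{lemma:PFCT-S-greedy} to the second instance, the cost of its greedy solution is at most
\[
\sum_{i=1}^n (f_i - f_{i+1})\,\pi'(a([i])) + \sum_{i=2}^n f_i.
\]
Since $f_1 \ge f_2 \ge \cdots \ge f_n \ge f_{n+1} = 0$, each coefficient $f_i - f_{i+1}$ is nonnegative, so the bound $\pi'(a([i])) \le \pi(a([i])) + \Delta$ can be substituted term by term, giving an upper bound of
\[
\sum_{i=1}^n (f_i - f_{i+1})\,\pi(a([i])) \;+\; \Delta\sum_{i=1}^n (f_i - f_{i+1}) \;+\; \sum_{i=2}^n f_i.
\]
The middle sum telescopes to $\Delta(f_1 - f_{n+1}) = \Delta f_1$, so the greedy cost of the second instance is at most $\sum_{i=1}^n (f_i - f_{i+1})\,\pi(a([i])) + \Delta f_1 + \sum_{i=2}^n f_i$. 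Finally, by Lemma~\ref{lemma:PFCT-S-opt} applied to the first instance, $\sum_{i=1}^n (f_i - f_{i+1})\,\pi(a([i]))$ is a lower bound on the optimum cost of the first instance, which yields exactly the claimed inequality.

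There is no real obstacle here: the statement is a direct recombination of the two lemmas, and the only points requiring a line of care are (i) observing that the balance constraint forces the total demands of the two instances to coincide so that $\pi, \pi'$ and both lemmas are in force on the right domain, and (ii) the sign check $f_i - f_{i+1} \ge 0$ that justifies the term-by-term substitution and the telescoping evaluation $\sum_{i=1}^n (f_i - f_{i+1}) = f_1$.
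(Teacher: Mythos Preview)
Your proof is correct and follows essentially the same approach as the paper's own proof: apply Lemma~\ref{lemma:PFCT-S-greedy} to the second instance, Lemma~\ref{lemma:PFCT-S-opt} to the first, subtract, bound the difference using $\pi'(t)\le\pi(t)+\Delta$ together with $f_i-f_{i+1}\ge 0$, and telescope $\sum_{i=1}^n(f_i-f_{i+1})=f_1$. Your added remarks on well-definedness of $\pi,\pi'$ and the sign check are welcome but do not change the argument.
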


\begin{proof}
    By Lemma~\ref{lemma:PFCT-S-greedy}, the cost of the greedy solution to the second instance is at most $\sum_{i = 1}^n (f_i - f_{i+1}) \pi'(a[i]) + \sum_{i=2}^n f_i$. By Lemma~\ref{lemma:PFCT-S-opt}, the optimum cost to the first instance is at least $\sum_{i = 1}^n (f_i - f_{i+1}) \pi(a[i])$.  The difference of the two quantities is 
    \begin{align*}
        \sum_{i=1}^n (f_i - f_{i+1}) (\pi'(a[i]) - \pi(a[i])) + \sum_{i = 2}^n f_i \leq \sum_{i = 1}^n (f_i - f_{i+1})\cdot \Delta  + \sum_{i = 2}^n f_i = \Delta f_1  +  \sum_{i = 2}^n f_i.
    \end{align*}
    The inequality used that $\pi'(t) \leq \pi(t) + \Delta$ for every $t$. 
\end{proof}

    \section{$(\frac{6}{5}+\epsilon)$-Approximation for Pure Fixed Charge Transportation with Uniform Fixed Costs (PFCT-U)}

\label{sec:PFCT-U-65-approx}

In this section, we consider the most restricted case of the FCT problem, PFCT-U; that is, the Pure Fixed Charge Transportation with Unit Fixed Costs problem. Our main result is a $\left(\frac65 + \epsilon\right)$-approximation for the problem. 

Notice that we can easily make the support of the solution a forest, without increasing the cost.  Since there are no linear costs and all edges have the same fixed cost, the problem is equivalent to partitioning $S \cup T$ into as many \emph{balanced sets} as possible: a subset $V \subseteq S \cup T$ is said to be balanced if $a(V \cap S) = b(V \cap T)$.  The cost of the partition is precisely $m + n$ minus the number of sets in the partition.  We remark that though the two problems are equivalent from the perspective of exact algorithms, the approximation ratio for one problem does not convert to the same ratio for the other. 

Whenever there is some $i \in S$ and $j \in T$ with $a_i = b_j$, we can WLOG let $\{i, j\}$ be a set in the partition. If any solution puts $i$ in some set $V$ and $j$ in a different set $V'$ in the partition, we can replace $V$ and $V'$ with $\{i, j\}$ and $V \cup V' \setminus \{i, j\}$ in the partition; this does not change the cost of the solution. Notice that $V \cup V' \setminus \{i, j\}$ is also balanced. By removing $i$ from $S$ and $j$ from $T$, we obtain a harder instance from the perspective of approximation algorithms.  Therefore, from now on, we can assume in the instance, we have no balanced sets of size $2$.  \medskip

With the connection, we can reduce our problem to the $k$-set packing problem. For every $k \in \{3, 4, 5\}$, we construct a $k$-set packing instance defined over the ground set $S \cup T$ and the family $\calS$ of all balanced sets $V \subseteq S \cup T$ of size at most $k$. We then run the algorithm of \cite{FY} to obtain a $\frac{k+1+\epsilon}{3}$-approximate solution $\calT \subseteq \calS$ for the instance, which naturally gives a solution of cost $m + n - |\calT|$ to the PFCT-U instance. Our final solution is the best one we obtained over all $k \in \{3, 4, 5\}$.  

%
%
\paragraph{Analysis of Approximation Ratio}
We then analyze the approximation ratio of the algorithm. We formulate the following factor revealing LP, with variables $r, z, x_3, x_4, x_5$ and $x_6$:
    \begin{equation}
        \max \qquad r \label{LP:revealing}
    \end{equation}\vspace*{-25pt}
    
\noindent\begin{minipage}{0.5\textwidth}
    \begin{align}  
        z - (x_3 + x_4 + x_5 + x_6) &= 1 \label{LPC:define-z} \\  
        (3x_3 + 4x_4 + 5x_5 + 6x_6) - z &\leq 0  \label{LPC:z-big} \\ 
        r - \Big(z - \frac{3}{4} x_3\Big) &\leq 0 \label{LPC:y3}
    \end{align}
\end{minipage}
\noindent\begin{minipage}{0.5\textwidth}
    \begin{align}
        r - \Big(z - \frac{3}{5} (x_3 + x_4)\Big) &\leq 0 \label{LPC:y4} \\ 
        r - \Big(z - \frac{3}{6} (x_3 + x_4 + x_5)\Big) &\leq 0 \label{LPC:y5} \\  
        x_3, x_4, x_5, x_6, z &\geq 0
    \end{align}
\end{minipage}\medskip

We prove the following lemma:
\begin{lemma}
    Let $r^*$ be the value of LP \eqref{LP:revealing}. The approximation ratio of the algorithm is at most $\big(1 + \frac\epsilon4\big) r^*$.
\end{lemma}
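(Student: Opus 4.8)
The plan is to show that the algorithm's cost is bounded by $(1+\epsilon/4)r^*$ times the optimum, by encoding the combinatorial guarantees of the $k$-set-packing local search into the constraints of the factor-revealing LP \eqref{LP:revealing} and reading off the ratio. Let $\OPT$ denote the optimum partition of $S\cup T$ into balanced sets, and for each size class $k'\in\{2,3,4,\dots\}$ let $x_{k'}$ (scaled appropriately) denote the number of balanced sets of size exactly $k'$ in $\OPT$; by the size-$2$ reduction we may assume $x_2=0$, and sets larger than $6$ will be grouped together (this is the reason the LP only carries $x_3,x_4,x_5,x_6$, with $x_6$ standing in for ``size $\ge 6$''). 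Set $z$ to be the total number of sets in $\OPT$ plus the number of ``singletons'' coming from vertices not covered cheaply — more precisely $z$ should be normalized so that \eqref{LPC:define-z} reads ``(number of $\OPT$ sets of size $\le 5$) minus (number of size-$\ge 6$ sets) $=1$'' after dividing through by the count of small sets; the exact bookkeeping here is what the first half of the formal proof must nail down. Constraint \eqref{LPC:z-big} is just the counting identity $\sum_{k'} k' x_{k'} = n+m \ge z$ in normalized form, reflecting that the total number of elements is at least the number of sets.

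The heart of the argument is the three constraints \eqref{LPC:y3}, \eqref{LPC:y4}, \eqref{LPC:y5}: each one says that when we run the $\frac{k+1+\epsilon}{3}$-approximation of \cite{FY} for a fixed $k\in\{3,4,5\}$, the number of balanced sets it returns, $|\calT_k|$, is at least a $\frac{3}{k+1}$ fraction (ignoring the $\epsilon$) of the optimum $k$-set-packing value restricted to sets of size $\le k$. I would argue that the optimum $k$-set-packing value using sets of size $\le k$ is at least $z$ minus the contribution of $\OPT$-sets of size $> k$ that must be ``broken up'' — a set of size $s>k$ in $\OPT$ can contribute at most its vertices but, crucially, we only lose its ``set credit'', and since each such set still contains disjoint sub-structures, the loss is controlled. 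Concretely, for $k=3$ the achievable packing has value at least $z - \tfrac34 x_3$ (the $\tfrac34$ absorbing the fact that size-$3$ sets are exactly at the threshold and the $\frac{3}{4}$ ratio is tight on them), hence $|\calT_3| \ge \frac{1}{1+\epsilon/3}(z-\tfrac34 x_3)$, which after clearing the $\epsilon$ and writing the PFCT-U cost as $n+m-|\calT_k|$ gives the bound $r \le z - \tfrac34 x_3$; similarly for $k=4,5$ with coefficients $\tfrac35$ and $\tfrac12$ on the appropriate partial sums. The algorithm outputs the best of the three, so its guarantee is the minimum of the three right-hand sides, i.e. the LP's feasible region, and the worst case over all instances is exactly $r^*$. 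Tracking the $(1+\epsilon/3)$ denominators and converting them uniformly to a multiplicative $(1+\epsilon/4)$ slack (after rescaling $\epsilon$) accounts for the stated $(1+\epsilon/4)r^*$.

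I would carry this out in the following order: (i) fix $\OPT$, define the normalized counts $x_3,x_4,x_5,x_6,z$ and verify \eqref{LPC:define-z} and \eqref{LPC:z-big} are satisfied; (ii) for each $k\in\{3,4,5\}$ lower-bound the optimum of the size-$\le k$ balanced-set-packing instance in terms of $z$ and the $x_{k'}$'s, which is the combinatorial crux; (iii) invoke \cite{FY} to turn that into a lower bound on $|\calT_k|$, absorbing the $\epsilon$; (iv) translate to an upper bound on the ratio $\frac{n+m-|\calT_k|}{n+m-z}$ (note both numerator and denominator involve $n+m$, so one must be careful that this ratio is monotone in $|\calT_k|$ and that $n+m-z$ is a valid lower bound on $\OPT$'s cost) and identify it with a feasible LP point; (v) conclude the ratio is at most $r^*$ up to the $(1+\epsilon/4)$ factor, and separately (this is done later in the paper) solve the small LP to get $r^* = 6/5$.

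\textbf{Main obstacle.} The delicate step is (ii) together with the ratio translation in (iv). For (ii), I expect the subtlety is that breaking up a large $\OPT$-set of size $s$ does not let us recover $\lfloor s/k\rfloor$ disjoint balanced sub-sets for free — balanced-ness is not preserved under taking subsets — so the ``credit'' accounting has to be done at the level of the set-packing LP relaxation rather than combinatorially, or one argues that $\OPT$ itself, truncated to its size-$\le k$ members, is already a feasible $k$-set-packing solution of value $\sum_{k'\le k} x_{k'} = z - \sum_{k'>k} x_{k'} \ge z - \tfrac34 x_3$ etc., where the coefficient comes not from breaking sets but from a cleverer charging that also uses size-$=k$ sets. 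Getting the precise coefficients $\tfrac34,\tfrac35,\tfrac12$ to match the local-search ratio $\frac{3}{k+1}$ on the worst distribution of set sizes is exactly the content that makes the factor-revealing LP tight, and is where I would spend the most care.
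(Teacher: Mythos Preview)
Your overall plan --- normalize the size-class counts of the optimum partition, verify they give a feasible point of LP~\eqref{LP:revealing}, and bound the ratio by $r^*$ --- is exactly what the paper does. But your execution has a concrete gap: you have misidentified where the coefficients $\tfrac34,\tfrac35,\tfrac12$ in \eqref{LPC:y3}--\eqref{LPC:y5} come from, and consequently your step~(ii) and your entire ``main obstacle'' are chasing a nonexistent difficulty.

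The correct normalization is direct. With $c_{k'}$ the number of size-$k'$ sets in the optimum partition $\calT^*$ (where $c_6$ counts size $\geq 6$) and $\opt = m+n - \sum_{k'} c_{k'}$ its cost, set $x_{k'} = c_{k'}/\opt$ and $z = (m+n)/\opt$. Then \eqref{LPC:define-z} is the identity $\opt = m+n - \sum c_{k'}$ divided by $\opt$, and \eqref{LPC:z-big} is $m+n \geq \sum k' c_{k'}$ divided by $\opt$; your reading of $z$ via ``singletons'' and ``small sets'' is off. For \eqref{LPC:y3}--\eqref{LPC:y5}, the lower bound on the $k$-set-packing optimum is trivially $\sum_{k'\leq k} c_{k'}$, since the size-$\leq k$ sets of $\calT^*$ are themselves a feasible packing --- no breaking of large balanced sets into smaller balanced pieces is needed or attempted. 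The F\"urer--Yu algorithm then returns at least $\tfrac{3}{k+1+\epsilon}\sum_{k'\leq k} c_{k'}$ sets, so the algorithm's ratio is at most $z - \tfrac{3}{k+1+\epsilon}\sum_{k'\leq k} x_{k'}$. The coefficients $\tfrac34,\tfrac35,\tfrac12$ in the LP are thus \emph{exactly} the approximation ratios $\tfrac{3}{k+1}$ for $k=3,4,5$, applied to the partial sums $x_3$, $x_3+x_4$, $x_3+x_4+x_5$; there is no combinatorial charging to discover, and your statement ``the achievable packing has value at least $z-\tfrac34 x_3$'' is not the right inequality in either direction. A short calculation (using \eqref{LPC:z-big} to get $\sum_{k'\leq k} x_{k'} \leq z/3$) then absorbs the $\epsilon$ in the denominator into the multiplicative $(1+\tfrac{\epsilon}{4})$.
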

\begin{proof}
    Focus on any instance of the PFCT-U problem $(S, T, (a_i)_{i \in S}, (b_j)_{j \in T})$ with $|S| = n$ and $|T| = m$. We let $\calT^*$ be optimum partition of $S \cup T$ into balanced sets, i.e, the one that maximizes the number of sets.  Let $c_3, c_4, c_5$ and $c_6$ respectively be the number of subsets of size $3$, $4$, $5$ and \emph{at least} $6$ in $\calT^*$.  Recall that we assumed there can not be balanced subsets of size $2$.  Then, the cost of the optimum solution is $\opt := m + n - (c_3  + c_4 + c_5 + c_6)$, and the cost of the solution given by our algorithm is at most
    \begin{align*}
        \min\left\{m + n - \frac{4+\epsilon}{3}\cdot c_3,\quad m + n - \frac{5+\epsilon}{3}\cdot (c_3 + c_4),\quad m + n - \frac{6+\epsilon}{3}\cdot (c_3 + c_4 + c_5)\right\}.
    \end{align*}
    This holds as the optimum solution for the $3$-set (resp.\ $4$-set and $5$-set) packing instance has value at least $c_3$ (resp.\ $c_3 + c_4$ and $c_3 + c_4 + c_5$). 

    We show that the approximation ratio for this instance is at most $\big(1+\frac\epsilon4\big)r^*$, by giving a feasible solution to LP~\eqref{LP:revealing}:
    \begin{align*}
        x_{k'} &= \frac{c_{k'}}{\opt}, \forall k' \in \{3, 4, 5\}, \quad z = \frac{m+n}{\opt}, \\
        \quad \text{and} \quad r &= \min\Big\{z-\frac34x_3, \quad z-\frac35(x_3+x_4), \quad z-\frac36(x_3+x_4+x_5)\Big\}.
    \end{align*}

    Notice that $m + n = \opt + \sum_{k'=3}^6 c_{k'}$. Dividing both sides by $\opt$ gives us $z = 1 + \sum_{k'=3}^6 x_{k'}$, which is \eqref{LPC:define-z}. Also $m+n \geq \sum_{k' = 3}^6 k' c_{k'}$ as $\calT^*$ is a partition of $S \cup T$. Dividing both sides of the inequality by $\opt$, we have $z \geq \sum_{k'=3}^6 k' x_{k'}$, which is \eqref{LPC:z-big}.  By the way we define $r$, \eqref{LPC:y3}, \eqref{LPC:y4} and \eqref{LPC:y5} are satisfied.  The non-negativity constraints hold trivially. 

    The approximation ratio of our algorithm for the instance is 
    \begin{flalign*}
        && &\quad\frac{1}{\opt}\cdot\min \left\{m+n - \frac{4+\epsilon}{3}\cdot c_3, \quad m+n - \frac{5+\epsilon}{3}\cdot (c_3 + c_4), \quad m+n - \frac{6+\epsilon}{3}\cdot (c_3 + c_4 + c_5)\right\}&&\\
        && &=\min\left\{z - \frac{4+\epsilon}{3} x_3,\quad
        z - \frac{5+\epsilon}{3} (x_3 + x_4),\quad
        z - \frac{6+\epsilon}{3} (x_3 + x_4 + x_5)
        \right\}&&\\
        && &\leq \big(1+\frac\epsilon4\big)\cdot \min\left\{z - \frac{4}{3} x_3,
        z - \frac{5}{3} (x_3 + x_4),
        z - \frac{6}{3} (x_3 + x_4 + x_5)
        \right\} \quad = \quad (1+\frac\epsilon4)r \quad\leq\quad (1+\frac\epsilon4)r^*.  &&\qedhere
    \end{flalign*} 
\end{proof}

We show that the value of LP~\eqref{LP:revealing} is $r^* = \frac65$, acheived by the following solution: 
\begin{align*}
    \left(x_2 = 0, x_3 = \frac4{15}, x_4 = \frac1{15}, x_5 = \frac1{15}, x_6 = 0, z = \frac7{5},r = \frac65\right).
\end{align*}
First, it is easy to check that the solution satisfies all the constraints in the LP. 

We show that this is optimal, by considering the dual LP. The optimal dual solution is $\alpha = \frac65, \beta = \frac15, y_3 = \frac4{15}, y_4 = \frac13, y_5 = \frac25$, where the 5 variables correspond to constraints \eqref{LPC:define-z}, \eqref{LPC:z-big}, \eqref{LPC:y3}, \eqref{LPC:y4} and \eqref{LPC:y5} respectively.  Formally, for any solution $(x_3, x_4, x_5, x_6, z, r)$ satisfying the constraints, we have 
\begin{align*}
    r &\leq \frac{4}{15} \Big(z - \frac34 x_3\Big) + \frac1{3}\Big(z - \frac35(x_3 + x_4)\Big) + \frac25\Big(z-\frac36(x_3 + x_4 + x_5)\Big)\\
    &= z - \big(\frac15 + \frac1{5} + \frac15\big) x_3 - \big(\frac1{5} + \frac15\big)x_4 - \frac15 x_5 = z - \frac35 x_3 - \frac25x_4 - \frac15 x_5\\
    &= \frac65 (z - (x_3 +x_4 + x_5 + x_6)) - \frac15(z - 3x_3 - 4x_4 - 5x_5  - 6x_6) \leq \frac65.
\end{align*}
Therefore, the value of the LP is exactly $\frac65$. This proves that our algorithm is an $(\frac65 + \epsilon)$-approximation. 

We remark that the approximation ratio remains $\frac65 + \epsilon$ even if we consider bigger values of $k$. The primal solution $(x_2 = 0, x_3 = \frac4{15}, x_4 = \frac1{15}, x_5 = \frac1{15}, x_k = 0,\forall k\geq 6, z = \frac7{5},r = \frac65)$ remains a valid solution of value $\frac65$.

    \section{Hardness of Pure Fixed Charge Transportation (PFCT) via Reduction from Directed Steiner Tree (DST)}

\label{sec:PFCT-hardness}

In this section, we prove Theorem~\ref{thm:PFCT-DST-hard} and Corollary~\ref{coro:PFCT-DST-hard} by giving the reduciton from Directed Steiner Tree (DST) to Pure Fixed Charge Transportation (PFCT).  In Section~\ref{subsec:PFCT-Digraph}, we introduce the Pure Fixed Charge Transportation on Directed Graphs (PFCT-Digraph) problem, which is seemingly more general than the PFCT problem. But in Section~\ref{subsec:digraph-equivalence}, we show that the two problems are indeed equivalent. In Section~\ref{subsec:PFCT-from-DST}, we give a reduction from DST problem to PFCT-Digraph, and thus PFCT. 

\begin{figure}
    \centering
    \includegraphics[width=0.8\linewidth]{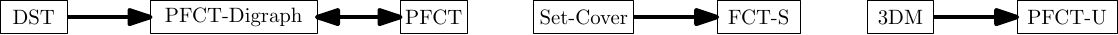}
    \caption{Reductions used in our hardness results.}
    \label{fig:reductions}
\end{figure}

\subsection{The Pure Fixed Charge Transportation on Digraphs (PFCT-Digraph) Problem}
\label{subsec:PFCT-Digraph}

We introduced the Pure Fixed Charge Transportation in a Directed Graph (PFCT-DAG) problem, which has been studied previously \cite{RW, OW}.  We are given a directed graph (DAG) $G = (V, E)$, a set $S \subseteq V$ of sources, and a set $T \subseteq V$ of sinks such that $S \cap T = \emptyset$. 

As in PFCT, each source $i$ has $a_i \in \Z_{>0}$ units of supply, each sink $j$ has $b_j \in \Z_{>0}$ units of demand, and $\sum_{i \in S}a_i = \sum_{j \in T}b_j$. Every edge $e \in E$ has a fixed cost $f_e \in \R_{\geq 0}$.

The goal of the problem is to send flows from $S$ to $T$ across $G$, so that each $i \in S$ sends exactly $a_i$ units of flow, and each $j \in T$ receives exactly $b_j$ units of flow.  If an edge $e \in E$ is carrying a non-zero units of flow, then we pay a cost of $f_e$. 
The goal of the problem is to minimize the total cost we pay.  Formally, we need to output a flow vector $x \in \R_{\geq 0}^E$ satisfying that $\sum_{e \in \delta^{\rmout}_v} x_e - \sum_{e \in \delta^{\rmin}_v} x_e$ is $a_v$ if $v \in S$, $-b_v$ if $v \in T$ and $0$ otherwise. Our goal is to minimize $\sum_{e \in E} \big(\mathbf{1}_{x_e > 0} \cdot f_e\big)$. 

Clearly,  PFCT is a special case of PFCT-Digraph where the graph is a directed bipartite graph from the sources $S$ to the sinks $T$. 

\subsection{Equivalence of PFCT and PFCT-Digraph} 
\label{subsec:digraph-equivalence}
In this section, we show a reduction from PFCT-Digraph to PFCT, establishing the equivalence of the two problems. We assume we are given a PFCT-Digraph instance $(G = (V, E), S, T, a \in \Z_{>0}^S, b \in \Z_{>0}^T, f \in \R_{\geq 0}^E)$.  WLOG we assume $S$ does not have incoming edges and $T$ does not have outgoing edges. For every vertex $v \in V \setminus (S \cup T)$, we apply the following splitting operation on $v$. See Figure~\ref{fig:splitting} for an illustration of the operation. 

\begin{figure}[h]
    \centering
    \includegraphics[width=0.4\linewidth]{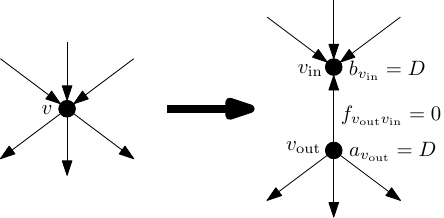}
    \caption{Splitting operation in reduction from PFCT-Digraph to PFCT.}
    \label{fig:splitting}
\end{figure}

We break $v$ into $v_{\rmin}$ and $v_{\rmout}$ and add an edge from $v_{\rmout}$ to $v_{\rmin}$ of cost 0. All the incoming edges of $v$ now go to $v_{\rmin}$, and all the outgoing edges of $v$ are from $v_{\rmout}$; we do not change the $f$ values of these edges. $v_{\rmin}$ now has a demand $D$ and $v_{\rmout}$ has a supply $D$, for a sufficient large integer $D$. ($D = \sum_{i \in S}a_i = \sum_{j \in T} b_j$ suffices.) After this conversion, the graph becomes a directed bipartite graph, where edges are from $S \cup \{v_{\rmout}: v \in V \setminus (S \cup T)\}$ to $T \cup \{v_{\rmin}: v \in V \setminus (S \cup T)\}$.

It is easy to see the equivalence between the two instances. Focus on a solution $x \in \R_{\geq 0}^E$ to the PFCT-Digraph instance. Then, in the solution for the PFCT instance, we keep the $x$ values of these edges unchanged. The $x$ value the edge $(v^{\rmout}, v^{\rmin})$ for $v \in V \setminus (S \cup T)$ is defined as $D - \sum_{e \in \delta^{\rmout}_v} x_e = D - \sum_{e \in \delta^{\rmin}_v} x_e$. This is non-negative if $D$ is large enough. The edges $(v^{\rmout}, v^{\rmin})$ have $f$ value being $0$, and thus the cost of the solution to the PFCT instance is the same as the cost of $x$ for the original PFCT-Digraph instance.  Similarly, we can convert a solution for the PFCT instance to one for the PFCT-Digraph instance with the same cost.

We also remark that the reduction naturally extends to the case where we have linear costs on the edges; we only need to set the $c$ value of $v_{\rmout\rmin}$ to $0$. This establishes the earlier claim that the single-commodity uncapacitated fixed-charge network problem is indeed equivalent to the FCT problem.

\subsection{Reduction from DST to PFCT-Digraph} 
\label{subsec:PFCT-from-DST}

In this section, we reduce the Directed Steiner Tree (DST) problem to the PFCT-Digraph problem with two sources. We are given a DST instance $(G = (V, E), c, r, T, k = |T|)$, and we assume each terminal $t \in T$ has only one incoming edge and no outgoing edges. We create a PFCT-Digraph instance as follows. Start from $G = (V, E)$ and let the costs $f_e$ in the PFCT-Digraph problem equal to the costs $c_e$ in the DST-O problem. We let the root $r$ be a source and the terminals $T$ be the sinks; the source $r$ has $a_r := k$ units of supply and each sink $t \in T$ has $b_t:=1$ unit of demand. 

To see the equivalence between the PFCT-Digraph instance and the DST instance, it suffices to notice that the following property: There is an optimal solution $x \in \R_{\geq 0}^E$ for the PFCT-Digraph instance such that the underlying undirected graph for $\supp(x)$ contains no cycles. If this is not the case, we can take any cycle in the undirected graph, choose a direction for the cycle, and increase or decrease $x$ values of the edges on the cycle depending on whether the edges have the same or opposite direction to the chosen direction, until the $x$ value of some edge on the cycle becomes $0$.  This operation does not increase the cost of $x$ to the PFCT-Digraph instance. So, $\supp(x)$ is precisely a directed Steiner tree with the root $r$ and the terminals $T$. 

We then finish the proof of Theorem~\ref{thm:PFCT-DST-hard}. Given a DST instance, we can construct an equivalent PFCT instance that preserves the optimum cost. If the DST instance has $n$ vertices, then PFCT instance has at most $n$ sources and at most $n$ sinks. 
Therefore, an $\alpha(\max\{n, m\})$-approximation for the PFCT instance leads to an $\alpha(n)$-approximation for the DST instance. Conversely, the $\Omega(\log^{2-\epsilon}(n))$-hardness for DST of \cite{HK03} implies an $\Omega(\log^{2-\epsilon}(\max\{n, m\}))$-hardness of approximation for PFCT, which proves Corollary~\ref{coro:PFCT-DST-hard}.

\section{APX-Hardness of Pure Fixed Charge Transportation with Uniform Fixed Costs (PFCT-U) from 3DM-B}
\label{sec:PFCT-U-from-3DM-B}
In this section, we prove the APX-hardness of PFCT-U (Theorem~\ref{thm:PFCT-U-APX-hard}) by a reduction from the 3DM-B problem. We let $\alpha = \frac{97}{98}$ and $B$ be the constants in Theorem~\ref{thm:3DM-B}, and let $\epsilon > 0$ from the theorem be small enough.  Let $B' = \ceil{3(3 + B - 4(\alpha + \epsilon))/\epsilon}$. Suppose we are given a 3DM instance $(X, Y, Z, E)$ as stated in the theorem. Let $n = |X| = |Y| = |Z|$, and $m = |E|$. We assume $n$ is large enough. As every element appears in at most $B$ triples, we have $m \leq Bn$.

Let $\Delta = 2(6n+1)^{B'}$. For each element $v \in X \cup Y \cup Z$, we let $b_v$ be an integer chosen from $(\Delta, 2\Delta]$ uniformly at random. 
\begin{lemma}
    \label{lemma:sampling-good}
    With probability at least $1/2$, the following event happens. For every $h \in \Z^{X \cup Y \cup Z}$ with $|h|_1 \in [1, B']$, we have $\sum_{v \in X \cup Y \cup Z}h_v b_v \neq 0$.
\end{lemma}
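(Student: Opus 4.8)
The plan is to fix a nonzero integer vector $h \in \Z^{X \cup Y \cup Z}$ with $|h|_1 \in [1, B']$, bound the probability that $\sum_v h_v b_v = 0$, and then take a union bound over all such $h$. First I would observe that since $h \neq 0$, there is a coordinate $u$ with $h_u \neq 0$. Condition on all the values $b_v$ for $v \neq u$; then $\sum_v h_v b_v = 0$ forces $b_u = -\big(\sum_{v \neq u} h_v b_v\big)/h_u$, which is a single fixed real number once the other $b_v$'s are revealed. Since $b_u$ is drawn uniformly from the $\Delta$ integers in $(\Delta, 2\Delta]$, the conditional probability that $b_u$ hits this value is at most $1/\Delta$; hence $\Pr[\sum_v h_v b_v = 0] \le 1/\Delta$ for each fixed $h$.

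Next I would count the number of relevant vectors $h$. Each such $h$ has at most $|h|_1 \le B'$ nonzero coordinates among $3n$ possible ones, and each coordinate has absolute value at most $B'$, so a crude bound is that the number of vectors $h$ with $|h|_1 \in [1, B']$ is at most $(3n \cdot 2B' + 1)^{B'}$, or even more simply at most $(6n+1)^{B'}$ after absorbing $B'$-dependent constants (since $B'$ is an absolute constant and $n$ is large); one just needs \emph{some} bound of the form $(6n+1)^{B'}$. By the union bound, the probability that $\sum_v h_v b_v = 0$ for \emph{some} such $h$ is at most $(6n+1)^{B'}/\Delta = (6n+1)^{B'}/\big(2(6n+1)^{B'}\big) = 1/2$. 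Therefore the complementary event — that $\sum_v h_v b_v \neq 0$ for all $h$ with $|h|_1 \in [1,B']$ — happens with probability at least $1/2$, as claimed.

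The only delicate point is getting the counting bound to come in under $\Delta = 2(6n+1)^{B'}$ with the right exponent; this is why $\Delta$ is defined with exponent exactly $B'$ (the $\ell_1$-bound), and why it is essential that $B'$ is an absolute constant so that the polynomial-in-$n$ base $(6n+1)$ is raised only to a constant power. I expect this step to be the main (and only real) obstacle: choosing a clean enumeration of the vectors $h$ so that the count is manifestly at most $(6n+1)^{B'}$ — e.g.\ bounding the number of ways to choose the support (at most $(3n)^{B'}$ ordered selections, since $|h|_1 \ge |\supp(h)|$) times the number of ways to choose the nonzero values (each in $[-B', B']\setminus\{0\}$, at most $(2B')^{B'}$ choices), and checking $(3n)^{B'}(2B')^{B'} \le (6nB')^{B'} \le (6n+1)^{B'}$ fails as stated but can be fixed by redefining $\Delta$ with a slightly larger base or by noting $6nB' \le (6n+1)^{?}$; the cleanest fix is to bound the count by $(6n+1)^{B'}$ directly, treating each of the $\le B'$ "units" of $\ell_1$-mass as independently choosing one of $3n$ coordinates and a sign, giving $(6n)^{B'} \le (6n+1)^{B'}$ choices, which suffices. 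The rest is routine.
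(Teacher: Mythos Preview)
Your proposal is correct and matches the paper's approach: for each fixed nonzero $h$, bound $\Pr[\sum_v h_v b_v = 0] \le 1/\Delta$ by conditioning on all but one coordinate, then take a union bound over at most $(6n+1)^{B'}$ such vectors. The paper's enumeration is exactly your final fix made precise---encode $h$ as a length-$B'$ sequence over the $6n+1$ operations $\{+e_v,\ -e_v : v \in X\cup Y\cup Z\} \cup \{\text{do nothing}\}$; note that your stated bound $(6n)^{B'}$ (omitting the ``do nothing'' option) only reaches vectors with $|h|_1 \equiv B' \pmod 2$, so the $+1$ in the base is actually needed.
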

\begin{proof}
    A vector $h \in Z^{X \cup Y \cup Z}$ with $|h|_1 \leq B'$ can be recorded as a sequence of $B'$ operations starting from the $0$ vector, where each operation is increasing $h_v$ by 1 for some $v \in X \cup Y \cup Z$, decreasing $h_v$ by 1 for some $v \in X \cup Y \cup Z$, or doing nothing. So, there are at most $(2\times 3n + 1)^{B'} = (6n + 1)^{B'}$ different vectors $h \in Z^{X \cup Y \cup Z}$ with $|h|_1 \leq B'$.
    
    For a fixed vector $h \in Z^{X \cup Y \cup Z}$ with $|h|_1 \in [1, B']$, the probability that $\sum_{v \in X \cup Y \cup Z}h_v b_v = 0$ happens is at most $\frac 1\Delta$. Then applying union bound gives the lemma. 
\end{proof}

We choose $(b_v)_{v \in X \cup Y \cup Z}$ values satisfying the condition of the lemma.  Then we construct a PFCT-U instance $(S, T, a, b)$ as follows. First, every $v \in X \cup Y \cup Z$ is a sink, with the chosen $b_v$ value.  Every triple $ijk \in E$ is a source in the instance, with $a_{ijk} := b_i + b_j + b_k$.  Finally, we add a ``dummy'' sink $t$ with $b_t := \sum_{ijk \in E}a_{ijk} - \sum_{v \in X \cup Y \cup Z}b_v$ to make the instance balanced.  So, in the PFCT-U instance, we have $S = E$ and $T = X \cup Y \cup Z \cup \{t\}$.

Till the end of this section, we say a set $U \subseteq S \cup T$ is balanced if $a(S \cap U) = b(T \cap U)$. 
\begin{claim}
    If the 3DM-B instance is a yes-instance, then the optimum cost of the PFCT-U instance is at most $2n + m$. 
\end{claim}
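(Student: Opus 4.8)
The plan is to exhibit an explicit partition of $S \cup T$ into balanced sets whose count is large enough to push the cost $m + n + 1 - (\text{number of sets})$ down to $2n + m$. Recall $|S| = m$ and $|T| = n + 1$, so the total ground set has size $m + n + 1$; hence a partition into $k$ balanced sets has cost $(m + n + 1) - k$, and we want $k \geq n + 1$.

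First I would invoke the yes-instance hypothesis of Theorem~\ref{thm:3DM-B}: there is a perfect matching $M \subseteq E$ with $|M| = n$, meaning the $n$ triples in $M$ partition $X \cup Y \cup Z$. For each triple $ijk \in M$, form the set $\{ijk, i, j, k\} \subseteq S \cup T$; this is balanced because $a_{ijk} = b_i + b_j + b_k$ by construction. These $n$ sets are pairwise disjoint (since $M$ is a matching, no element of $X \cup Y \cup Z$ is reused, and each source $ijk$ is used once), and together they cover all $n$ sinks in $X \cup Y \cup Z$ plus exactly $n$ of the $m$ sources. The remaining ground elements are the $m - n$ sources $E \setminus M$ together with the single dummy sink $t$; I would put all of these into one final set $V_0 = (E \setminus M) \cup \{t\}$.

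The key verification is that $V_0$ is balanced. We have $a(V_0 \cap S) = \sum_{ijk \in E \setminus M} a_{ijk} = \sum_{ijk \in E} a_{ijk} - \sum_{ijk \in M} a_{ijk}$. Since $M$ is a perfect matching, $\sum_{ijk \in M} a_{ijk} = \sum_{v \in X \cup Y \cup Z} b_v$. On the other hand, by the definition of the dummy demand, $b_t = \sum_{ijk \in E} a_{ijk} - \sum_{v \in X \cup Y \cup Z} b_v$, so $a(V_0 \cap S) = b_t = b(V_0 \cap T)$, as desired. Thus we have a partition into $n + 1$ balanced sets, giving cost at most $(m + n + 1) - (n + 1) = m + n$, wait --- we need cost $\le 2n + m$, and $m + n \le 2n + m$ trivially, so this in fact gives a slightly stronger bound; but to match the claim exactly it suffices to note $m + n \leq 2n + m$. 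Actually the cleanest statement is: the partition has $n + 1$ sets, so the optimum cost is at most $(m + n + 1) - (n+1) = m + n \le 2n + m$.

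I do not anticipate a serious obstacle here --- the construction is essentially dictated by the reduction, and the only thing to be careful about is bookkeeping of which ground elements (sources versus sinks, and the dummy $t$) land in which set, and confirming the balance of $V_0$ uses the perfect-matching property crucially. The random choice of the $b_v$ values and Lemma~\ref{lemma:sampling-good} play no role in this direction of the reduction (they will matter only for the no-instance / soundness analysis, to rule out ``cheating'' partitions); so I would explicitly note that this claim holds for any valid choice of the $b_v$'s.
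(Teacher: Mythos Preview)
Your approach is exactly the paper's: build $n$ balanced sets $\{ijk,i,j,k\}$ from a perfect matching $M$ and put everything else into one residual set $\{t\}\cup(E\setminus M)$. The balance check for the residual set is also the same.

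However, your bookkeeping of $|T|$ is off. Since $|X|=|Y|=|Z|=n$, we have $|X\cup Y\cup Z|=3n$, so $|T|=3n+1$, not $n+1$. (Indeed, each of your $n$ sets $\{ijk,i,j,k\}$ already contains three sinks, so together they cover $3n$ sinks, not $n$.) With the correct count, $|S\cup T|=m+3n+1$, and the cost of your partition is $(m+3n+1)-(n+1)=2n+m$ on the nose, which is exactly the claimed bound rather than a strict improvement. Your remark that the random $b_v$'s and Lemma~\ref{lemma:sampling-good} are irrelevant to this direction is correct.
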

\begin{proof}
    Let $M \subseteq E$ be a perfect matching for the 3DM-B instance. We can partition $S \cup T$ into $n + 1$ balanced sets naturally: For every $ijk \in M$, we select the balanced set $\{i, j, k, ijk\}$. Then we put the remaining elements in the final balanced set, which is $\{t\} \cup E \setminus M$. The cost of the solution to the PFCT-U instance is $3n + m + 1 - (n+1) = 2n + m$.
\end{proof}

Then we focus on the case where the 3DM-B instance is a no-instance. We say a balanced set $U \subseteq S \cup T$ is canonical if it is $\{i, j, k, ijk\}$ for some $ijk \in E$. 
\begin{lemma}
    \label{lemma:canonical-or-large}
    Let $U \subseteq S \cup T$ be a non-empty balanced set. Then $U$ is either the union of some disjoint canonical sets, or has size at least $B'/3$.
\end{lemma}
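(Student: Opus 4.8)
The plan is to argue that if a non-empty balanced set $U$ is \emph{not} a disjoint union of canonical sets, then it must be ``complicated'' in a way that forces $|U| \geq B'/3$. Write $U \cap S = \{i_1j_1k_1, \dots, i_pj_pk_p\}$ for the sources in $U$ (so $U$ contains $p$ triples of $E$), and let $W = U \cap T \setminus \{t\}$ be the non-dummy sinks in $U$. The total supply in $U$ is $a(U \cap S) = \sum_{\ell=1}^{p}(b_{i_\ell}+b_{j_\ell}+b_{k_\ell})$, which we can rewrite as $\sum_{v \in X \cup Y \cup Z} m_v b_v$, where $m_v$ is the number of times $v$ appears among the triples $i_1j_1k_1,\dots,i_pj_pk_p$ (counted with multiplicity across triples, though within a single triple each coordinate is distinct). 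The total demand in $U$ is $\sum_{v \in W} b_v$ (plus $b_t$ if $t \in U$, a case I handle separately below). First I would dispose of the case $t \notin U$: balancedness says $\sum_v m_v b_v = \sum_{v \in W} b_v$, i.e. $\sum_v (m_v - 1_{v \in W}) b_v = 0$. Let $h_v := m_v - 1_{v \in W} \in \Z$. If $|h|_1 \geq 1$ and $|h|_1 \leq B'$, this contradicts the defining property of our chosen $(b_v)$ from Lemma~\ref{lemma:sampling-good}. So either $|h|_1 = 0$ — meaning $m_v = 1_{v \in W}$ for every $v$, i.e. every element appearing in a triple of $U$ appears exactly once and lies in $W$, and no other element of $W$ exists; this exactly says $W = \{i_1,j_1,k_1\} \sqcup \cdots \sqcup \{i_p,j_p,k_p\}$ with all $3p$ elements distinct, so $U$ is the disjoint union of the $p$ canonical sets $\{i_\ell,j_\ell,k_\ell,i_\ell j_\ell k_\ell\}$ — or else $|h|_1 > B'$. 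In the latter case, since $|h|_1 \leq \sum_v m_v + |W| = 3p + |W| \leq 3(p + |W|) \le 3|U|$ (using $p \le |U|$ and $|W| \le |U|$, and in fact $p+|W|+1_{t\in U} = |U|$), we get $|U| \geq |h|_1/3 > B'/3$, as desired.

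Next I would handle the case $t \in U$. Here balancedness gives $\sum_v m_v b_v = \sum_{v \in W} b_v + b_t$, so $\sum_v (m_v - 1_{v \in W}) b_v = b_t$. The obstacle is that now the relation is not homogeneous, so Lemma~\ref{lemma:sampling-good} does not directly apply. My intended fix is to expand $b_t = \sum_{ijk \in E} a_{ijk} - \sum_{v \in X\cup Y\cup Z} b_v = \sum_{v} (\deg_E(v) - 1) b_v$, where $\deg_E(v)$ is the number of triples of $E$ containing $v$. Substituting, $\sum_v \big(m_v - 1_{v \in W} - \deg_E(v) + 1\big) b_v = 0$. Set $h'_v := m_v - 1_{v \in W} - \deg_E(v) + 1$. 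Again by Lemma~\ref{lemma:sampling-good}, either $|h'|_1 = 0$ or $|h'|_1 > B'$. But $|h'|_1 = 0$ forces $m_v = \deg_E(v) - 1 + 1_{v\in W}$ for all $v$; since the total supply then equals $b(X\cup Y\cup Z) + b_t = b(T)$ — i.e. $U$ contains all of $T$ — and balancedness forces $a(U \cap S) = b(T)$, meaning $U \cap S = E = S$, so $U = S \cup T$ and $p = m$. This is the ``big final set'' case. It is still a legitimate disjoint-union-of-canonical-sets case only if $\deg_E(v)=1$ for all $v$, which is ruled out by the frequency-$\geq 2$ hypothesis unless... — actually here I need to be careful: $U = S \cup T$ with $|U| = 3n + m + 1$, and whether this counts as ``union of disjoint canonical sets'' or as ``size $\geq B'/3$'' depends on magnitudes; since $m \geq 2n$ (every element in $\geq 2$ triples) and $n$ is large, $|U| \geq 3n \geq B'/3$ holds by choosing $n$ large relative to the fixed constant $B'$, so this case lands in the ``large'' alternative. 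And if $|h'|_1 > B'$, then since $|h'_v| \le m_v + 1 + \deg_E(v) \le m_v + 1 + B$, summing gives $|h'|_1 \le 3p + n(1+B)$; combined with $|h'|_1 > B'$ and the crude bound $|U| = p + |W| + 1 \ge p+1$, I would extract $|U| \ge B'/3$ — though this particular arithmetic is the one place I expect to need care, and it may be cleanest to instead note that $|U| \ge p$ and bound $|h'|_1 \le 3p + (\text{something } \le 3|U|)$ after absorbing the $\deg_E$ terms into the count of source-incidences, since $\sum_v \deg_E(v) = 3m$ is not what we want but $\sum_v (\deg_E(v) \cdot 1_{v \text{ relevant}})$ can be bounded by incidences inside $U$ plus a controlled slack.

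\textbf{Main obstacle.} The crux is the inhomogeneous case $t \in U$: the dummy sink breaks the clean ``$\sum h_v b_v = 0$'' form, and the rewrite $b_t = \sum_v(\deg_E(v)-1)b_v$ introduces coefficients as large as $B-1$, so the resulting $|h'|_1$ is no longer obviously bounded by a small multiple of $|U|$. I expect the real work is choosing the right accounting — probably keeping track of element-incidences \emph{restricted to triples actually in $U$} versus the global $\deg_E$ — so that the ``$|h'|_1 > B'$'' alternative genuinely yields $|U| > B'/3$ rather than a vacuous bound, while also confirming that the $|h'|_1 = 0$ alternative only produces the intended structured solutions (disjoint canonical sets, possibly together with the all-remaining set $\{t\} \cup (S \setminus \text{matched triples})$). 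Everything else — the homogeneous case and the union-bound over sparse integer vectors — is already packaged in Lemma~\ref{lemma:sampling-good} and is routine.
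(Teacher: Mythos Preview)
Your treatment of the case $t \notin U$ is correct and matches the paper's argument essentially line for line: the coefficient vector $h_v = m_v - 1_{v\in W}$ is either zero (forcing $U$ to be a disjoint union of canonical sets) or has $|h|_1 > B'$, and the bound $|h|_1 \le 3p + |W| \le 3|U|$ then gives $|U| > B'/3$.

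For the case $t \in U$, however, your approach diverges from the paper's and carries a genuine gap. The paper does \emph{not} expand $b_t$ as a linear combination of the $b_v$'s at all; instead it uses a direct magnitude argument: since every element has degree at least $2$ in $E$, one has $b_t = \sum_v(\deg_E(v)-1)b_v \ge \sum_v b_v > 3n\Delta$, while every source satisfies $a_{ijk} \le 6\Delta$. Balancing $b_t$ therefore forces at least $n/2$ sources in $U$, so $|U| \ge n/2 + 1 \ge B'/3$ for large $n$. This sidesteps the inhomogeneity entirely.

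Your rewrite via $h'_v = m_v - 1_{v\in W} - \deg_E(v) + 1$ runs into exactly the obstacle you flag but do not resolve: for every element $v$ with $m_v = 0$ and $v \notin W$ (i.e.\ $v$ is completely absent from $U$), one still has $|h'_v| = \deg_E(v) - 1 \ge 1$. There can be close to $3n$ such elements, so $|h'|_1$ is not controlled by any multiple of $|U|$, and the implication ``$|h'|_1 > B' \Rightarrow |U| > B'/3$'' simply fails. Your proposed fix of ``absorbing the $\deg_E$ terms into source-incidences inside $U$'' cannot work, because those terms come from elements with \emph{no} incidences inside $U$. Separately, your claim that $h' = 0$ forces $U = S \cup T$ is not correct as stated: $h' = 0$ only gives $m_v = \deg_E(v) - 1 + 1_{v\in W}$, which does not by itself pin down $U$. (It does, via $m_v \ge \deg_E(v)-1 \ge 1$ for all $v$, force $p \ge n$ and hence $|U|$ large --- but that is a different argument from the one you wrote.) The clean fix is to abandon the $h'$ route and use the paper's two-line magnitude bound.
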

\begin{proof}
    First, assume $t \notin U$.  As $U$ is balanced, we have $\sum_{ijk \in U \cap S} (b_i + b_j + b_k) - \sum_{v \in U \cap T}b_v  = 0$. We view each $b_v$ in the equation as variables and focus on the coefficients $h_v$'s of $b_v$'s. If $h \in \Z^{X \cup Y \cup Z}$ is the $0$-vector, then $U$ must be the union of some disjoint canonical sets.  If $h \neq 0$, then by Lemma~\ref{lemma:sampling-good}, we have $|h|_1 \geq B' + 1$ non-zero entries, implying that the size of $U$ is at least $B'/3$. 

    Consider the case $t \in U$. $b_t = \sum_{ijk \in E} (b_i + b_j + b_k) -\sum_{v \in X \cup Y \cup Z}b_v \geq \sum_{V \in X \cup Y \cup Z} b_v \geq 3n\Delta$.  Each $ijk \in E$ has $a_{ijk} \leq 6\Delta$. Thus $|U| \geq n/2 + 1$. As we assumed that $n$ is big enough, we have $|U| \geq B'/3$.
\end{proof}

\begin{lemma}
    If the 3DM-B instance is a no-instance, then the cost of the PFCT-U instance is at least $(3 - \alpha - 2\epsilon)n + m$. 
\end{lemma}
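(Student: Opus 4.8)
The plan is to convert the statement into an upper bound on the number of parts in a partition of $S \cup T$ into balanced sets. In this PFCT-U instance, $S = E$ has $m$ elements and $T = X \cup Y \cup Z \cup \{t\}$ has $3n+1$ elements, and by the equivalence described at the start of Section~\ref{sec:PFCT-U-65-approx}, a partition of $S \cup T$ into $p$ balanced sets corresponds to a solution of cost $(m + 3n + 1) - p$. So it suffices to prove that, when the 3DM-B instance is a no-instance, every partition of $S \cup T$ into balanced sets has at most $(\alpha + 2\epsilon)n + 1$ parts: this yields cost at least $(m + 3n + 1) - ((\alpha + 2\epsilon)n + 1) = m + (3 - \alpha - 2\epsilon)n$.

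First I would normalize an arbitrary balanced partition $\calP$. By Lemma~\ref{lemma:canonical-or-large}, every part of $\calP$ is either a union of disjoint canonical sets or has size at least $B'/3$. Replace every part of the first type by its canonical components; since each canonical set $\{i,j,k,ijk\}$ is itself balanced (as $a_{ijk} = b_i + b_j + b_k$), the result is still a balanced partition $\calP'$ with $|\calP'| \ge |\calP|$, and now every part of $\calP'$ is either a single canonical set or a part of size at least $B'/3$. Write $|\calP'| = q + \ell$, where $q$ is the number of canonical parts and $\ell$ the number of the remaining (``large'') parts.

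Next I would bound $q$ and $\ell$ separately. The canonical parts are pairwise disjoint, so the corresponding triples $\{ijk : \{i,j,k,ijk\} \in \calP'\}$ pairwise avoid each other's coordinates and thus form a matching in the 3DM-B instance; since we are in the no-instance case, Theorem~\ref{thm:3DM-B} (with $\alpha = 97/98$) gives $q \le (\alpha + \epsilon)n$. The $\ell$ large parts are disjoint and avoid the $4q$ elements used by canonical parts, so together they cover at most $m + 3n + 1 - 4q$ elements, each part having at least $B'/3$ of them; hence $\ell \le 3(m + 3n + 1 - 4q)/B'$. Combining,
\[
|\calP| \;\le\; |\calP'| \;=\; q + \ell \;\le\; q\Big(1 - \tfrac{12}{B'}\Big) + \frac{3(m + 3n + 1)}{B'}.
\]
Since $B$ is large and $\alpha + \epsilon < 1$, we have $B' \ge 3(B+3-4(\alpha+\epsilon))/\epsilon > 12$, so the coefficient of $q$ is positive and the right side is maximized at $q = (\alpha+\epsilon)n$; using also $m \le Bn$,
\[
|\calP| \;\le\; (\alpha + \epsilon)n + \frac{3\big((B + 3 - 4(\alpha + \epsilon))n + 1\big)}{B'} \;\le\; (\alpha + \epsilon)n + \epsilon n + \frac{3}{B'} \;\le\; (\alpha + 2\epsilon)n + 1,
\]
where the second inequality is exactly the choice $B' = \ceil{3(3 + B - 4(\alpha + \epsilon))/\epsilon}$ and the last uses $B' \ge 3$.

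The only genuinely delicate point is the numerology in the last display: one must retain the $-4q$ term (rather than crudely dropping it to $0$), observe that at the maximizing value $q = (\alpha+\epsilon)n$ one has $m - 4q \le (B - 4(\alpha+\epsilon))n$, and then let the definition of $B'$ absorb the leftover $\tfrac{3(B+3-4(\alpha+\epsilon))n}{B'}$ into a single extra $\epsilon n$. Everything else is bookkeeping, and the probabilistic ingredient enters only through Lemma~\ref{lemma:canonical-or-large}, which is already established.
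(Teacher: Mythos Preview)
Your proof is correct and follows essentially the same approach as the paper's: bound the number of canonical parts by the maximum matching size $(\alpha+\epsilon)n$, bound the remaining large parts via their minimum size $B'/3$, and let the choice of $B'$ absorb the residual into an extra $\epsilon n$. You are in fact slightly more careful than the paper, which substitutes $q=(\alpha+\epsilon)n$ directly without explicitly noting that $B'>12$ is needed for the bound $q(1-12/B') + 3(m+3n+1)/B'$ to be increasing in $q$.
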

\begin{proof}
    Focus on the optimum partition $\calP$ of $S \cup T$ into balanced sets.  If a balanced set $U \in \calP$ is the union of at least $2$ canonical sets, then $\calP$ is not optimum since breaking $U$ into canonical sets will make $\calP$ better. Therefore, every $U \in \calP$ is either a canonical set, or has size at least $B'/3$, by Lemma~\ref{lemma:canonical-or-large}.  As the maximum matching for the 3DM instance has size at most $(\alpha + \epsilon)n$, the number of canonical sets in $\calP$ is at most $(\alpha + \epsilon)n$. Therefore, we have 
    \begin{align*}
        |\calP| \leq (\alpha + \epsilon) n + \frac{3n + m + 1 - 4(\alpha + \epsilon) n}{B'/3}.
    \end{align*}
    The cost of the solution is at least
    \begin{align*}
        3n + m + 1 - |\calP| \geq (3 - \alpha - \epsilon)n + m - \frac{3(3n + m - 4(\alpha + \epsilon)n)}{B'}
    \end{align*}

    As $m \leq Bn$ and $B' \geq \frac{3 (3 + B - 4(\alpha + \epsilon))}{\epsilon}$, we have $\frac{3(3n + m - 4(\alpha+\epsilon) n)}{B'} \leq \epsilon n$. So, the cost is at least $(3 - \alpha - 2\epsilon)n + m$.
\end{proof}

   So, the multiplicative gap between the costs for the no-instance case and the yes-instance case is at least 
    \begin{align*}
        \frac{(3-\alpha-2\epsilon)n + m}{2n + m}.
    \end{align*}
    If $2\epsilon < 1 - \alpha$, then $3 - \alpha - 2\epsilon > 2$. The gap is at least $\frac{(3-\alpha-2\epsilon)n + Bn}{2n + Bn} = \frac{3-\alpha-2\epsilon + B}{2 + B}$, which is an absolute constant larger than $1$. This finishes the proof of the APX-hardness for the PFCT-U problem in Theorem~\ref{thm:PFCT-U-APX-hard}.

    \section{Efficient Parameterized Approximation Scheme (EPAS) for Pure Fixed Charge Transportation (PFCT) Problem}
\label{sec:EPAS}
In this section we give an Efficient Parameterized Approximation Scheme (EPAS) for the PFCT problem, parameterized by the number $n$ of sources. For the case $n = 2$ and $f_{ij} = 1$, the problem is equivalent to the subset sum problem: if the subset sum instance is feasible, the optimum cost of the PFCT instance is $m$; otherwise it is $m + 1$. This rules out FPTAS for the case, unless P $=$ NP.  As a starting point, we show a PTAS for the case where $n$ is a constant in Section~\ref{subsec:PTAS-for-PFCT}. This will serve as our building block for the EPAS. 

\subsection{$(1+\epsilon)$-Approximation in $(nm)^{O(n/\epsilon)}$ Time}
\label{subsec:PTAS-for-PFCT}
We first give a PTAS for the PFCT problem with running time $(nm)^{O(n/\epsilon)}$. Throughout, we assume $1/\epsilon$ is an integer. We guess the set $P$ of $2n/\epsilon$ most expensive edges in the optimum solution; there are at most $(nm)^{\frac{2n}{\epsilon}}$ possibilities for this set. We output the best solution over all guesses. So, from now on, 
%
we assume we correctly guessed the set $P$. Let $F:=\min_{ij\in P} f_{ij}$ be the cost of the cheapest edge in $P$. We then solve the following linear program to obtain $x$, and return $x$ as the solution for the PFCT instance. 

\begin{equation}
    \min \quad \sum_{ij\in P} f_{ij}+\sum_{ij\in (S\times T) \setminus P}\frac{x_{ij}}{b_j} \cdot f_{ij} \quad \text{s.t.} \quad 
\end{equation}
\begin{align*}
    x \in \calX, \quad
    x_{ij} = 0, \forall ij \in (S \times T) \setminus P \text{ with } f_{ij} > F
\end{align*}


The value of the LP is at most the cost of the optimum solution to the PFCT instance. Again, we can assume that $\supp(x)$ is a forest. For a leaf sink $j$ in the forest, and its unique incident edge $ij$, we must have $x_{ij} = b_j$. Therefore, there are at most $2n$ edges $ij \notin P$ with $x_{ij} \in (0, b_j)$. The maximum  $f_{ij}$ value among all these edges is at most $F$. So, the actual cost of $x$ to the PFCT instance is at most the LP value plus $2n F$. As $2nF$ is at most $\epsilon$ times the cost of $P$, which is at most $\epsilon$ times the optimum cost, we have that $x$ is a $(1+\epsilon)$-approximate solution. This finishes the description of the PTAS for the problem when $n = O(1)$.  \medskip

To improve the running time to $g(n, \epsilon) \cdot \poly(m)$, our goal is to reduce the number of candidates for the $n/\epsilon$ most expensive edges.  In Section~\ref{subsec:reduce-candidate-set-PFCT-S}, we show that there exists a $(1+\epsilon)$-approximate solution, where the $n/\epsilon$ most expensive edges are incident to the $n/\epsilon$ sinks with the largest demands, for the PFCT-S instance. This will greatly reduce the candidate set of edges and result in the desired running time.  For the PFCT problem, we partition the sinks into many classes, where each class of sinks has the same cost vector incident to the $n$ sources. Using a standard discretization step, we can bound the number of classes by $\left(O\left(\frac{\log m}{\epsilon}\right)\right)^n$. Then, the instance restricted to each class can be viewed as a PFCT-S instance. We bound the size of the candidate set for each class, leading to an overall bound for all classes. This algorithm is described in Section~\ref{subsec:EPAS-PFCT}.

\subsection{The Case of Sink-Independent Fixed Costs}
\label{subsec:reduce-candidate-set-PFCT-S}

In this section we show that we can reduce the size of the candidate set of the most expensive edges, with a loss of $1+\epsilon$ in the approximation ratio. We main theorem we prove is:
\begin{theorem}
    \label{thm:reduce-candidate-set}
    Suppose we are given a PFCT-S instance defined by the set $S = [n]$ of sources, the set $T = [m]$ of sinks, the supply vector $(a_i)_{i \in [n]}$, the demand vector $(b_j)_{j \in [m]}$ and the fixed cost vector $(f_i)_{i \in [n]}$.  Assume $f_1 \geq f_2 \geq \cdots \geq f_n$ and $b_1 \geq b_2 \geq \cdots \geq b_m$.  Let $x^*$ be the optimum solution to the instance. Assume $\supp(x^*)$ is the union of disjoint stars centered at sources; that is, every sink $j \in [m]$ is incident to exactly one source in $\supp(x^*)$.  
    
    Let $\epsilon > 0$ be a small enough constant. Then there is a $(1+\epsilon)$-approximate optimum solution $x$, such that the $n/\epsilon$ most expensive edges in $\supp(x)$ are incident to sinks in $[n/\epsilon]$. 
\end{theorem}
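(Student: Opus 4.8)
The plan is to start from the optimal star‑union solution $x^*$ and reroute it so that the most expensive sources are moved onto the sinks with the largest demands, losing only a $(1+\epsilon)$ factor. A useful first observation is that for a star union the actual cost equals the cost of the linear objective in \eqref{LP}, so $\OPT$ equals the LP‑value of $x^*$, which is at least the optimal value of \eqref{LP}. I would repeatedly apply the uncrossing step of Claim~\ref{claim:no-crossing}, but only to ``crossing'' pairs in which the more expensive source is incident to a sink of larger index (smaller demand) and the cheaper source is incident to a sink of smaller index; each such step does not increase the LP cost, and once no such crossing remains, the set of sources incident to the $n/\epsilon$ largest sinks is a prefix $\{1,\dots,\ell\}$ of $[n]$, with only source $\ell$ possibly also touching sinks outside $[n/\epsilon]$. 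I would then make the support a forest (pushing flow around cycles, which only decreases the actual cost and preserves the prefix structure) and reroute within the top $n/\epsilon$ sinks so that each prefix source touches only the $\pi(\cdot)+O(1)$ biggest of them.

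For the structural conclusion I would reuse the bookkeeping behind Algorithm~\ref{alg:PFCT-S-greedy}: process sources in order of decreasing $f_i$ and, within a source, its edges in order of increasing sink index; then the $k$‑th edge created is incident to a sink of index at most $k$. After the rerouting the edges into the largest sinks are created first (they come from the prefix $\{1,\dots,\ell\}$, which is processed first), and — since every one of the $n/\epsilon$ largest sinks receives all its demand from this prefix — there are at least $n/\epsilon$ such edges, each at least as expensive as any edge into a smaller sink (breaking ties at $f_\ell$ in favor of big‑sink edges). Hence the $n/\epsilon$ most expensive edges of the output $x$ may be taken to be the first $n/\epsilon$ edges created, all incident to sinks of index at most $n/\epsilon$.

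The hard part will be the cost analysis, i.e.\ showing $\mathrm{cost}(x)\le(1+\epsilon)\OPT$. I would split $\mathrm{cost}(x)$ into the contribution of edges into the $n/\epsilon$ largest sinks, estimated by a greedy‑type bound in the spirit of Lemma~\ref{lemma:PFCT-S-greedy}, and the contribution of the rest, bounded by the optimum of the residual instance via Lemma~\ref{lemma:PFCT-S-opt} and Corollary~\ref{coro:compare}. The danger is the usual ``forest overhead'' of order $\sum_{i\ge 2}f_i$, which only yields an $O(1)$ approximation. To upgrade this to $1+\epsilon$ one must argue that the rerouting creates only $O(n)$ new, fractionally‑used (hence split) edges compared with the $m$ edges of the star union $x^*$, and that every such new edge is cheap: its fixed cost is at most the cost of the $(n/\epsilon)$‑th most expensive edge of $x$. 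Since the $n/\epsilon$ most expensive edges each cost at least that amount, $\OPT$ is at least $(n/\epsilon)$ times that cost, so an $O(n)$ multiple of it is $O(\epsilon)\cdot\OPT$. Making this precise — controlling simultaneously the number and the individual cost of the newly split sinks, and in particular ruling out expensive new splits among the prefix sources — is the delicate step I expect to require the most work.
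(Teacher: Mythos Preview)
Your high-level plan matches the paper's: move the expensive sources onto the large-demand sinks and show the overhead is $O(n)$ times a single cheap edge cost, which is $O(\epsilon)\cdot\OPT$ because there are $n/\epsilon$ expensive edges. You correctly identify Corollary~\ref{coro:compare} as the right comparison tool and the ``forest overhead'' as the danger.

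However, there is a real gap in the execution. Repeatedly applying Claim~\ref{claim:no-crossing} to \emph{all} crossings simply produces the greedy solution of Algorithm~\ref{alg:PFCT-S-greedy}. That solution does satisfy the structural conclusion (its first $m'=n/\epsilon$ edges created are the $m'$ most expensive and lie over sinks $[m']$), but it is only $2$-approximate: its actual cost exceeds its LP cost by up to $\sum_{i\ge 2}f_i$, not by $O(n)f_{n'+1}$ for the threshold source $n'$. The problem is that uncrossing rewrites what the \emph{cheap} sources $(n',n]$ were doing in $x^*$; once you have replaced their assignment by greedy, there is no ``optimum of the residual instance'' to which you can appeal. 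Your assertion that every newly split edge is cheap does not follow from uncrossing either --- the split sinks of the greedy staircase can occur at sources $1,2,\dots$, not only past $n'$.

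The paper avoids this by not uncrossing at all. It works directly from $x^*$: let $E'$ be its $m'$ most expensive edges, $T'$ their sinks, and $n'$ the last source with $T'_{n'}\neq\emptyset$. For each $i\le n'$ it \emph{swaps} the bad sinks $T'_i\setminus[m']$ for a greedily packed set $U_i\subseteq[m']\setminus T'$ with $b(U_i)\ge b(T'_i\setminus[m'])$ but $|U_i|\le|T'_i\setminus[m']|$; these become the pre-selected edges of the new solution, all inside $[m']$. Only the residual --- sources $(n',n]$ on the remaining sinks --- is solved by greedy, and Corollary~\ref{coro:compare} compares that greedy to the \emph{first} residual (sinks $[m]\setminus T'$), for which $x^*$ itself is optimal. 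The net overhead comes out to $(n-1)f_{n'+1}$; since each of the $n/\epsilon$ edges of $E'$ costs at least $f_{n'}\ge f_{n'+1}$, this is at most $\epsilon\cdot\mathrm{cost}(E')\le\epsilon\cdot\OPT$. The missing ingredient in your plan is precisely this two-residual comparison anchored at $x^*$, with the sink swap $T'_i\setminus[m']\leftrightarrow U_i$ in place of global uncrossing.
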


    \begin{figure}
        \centering
        \includegraphics[width=0.3\linewidth, page=1]{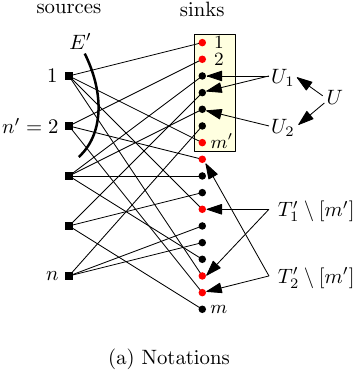} 
        \hfill
        \includegraphics[width=0.3\linewidth, page=2]{modify-x.pdf} 
        \hfill
        \includegraphics[width=0.3\linewidth, page=3]{modify-x.pdf} 
        \caption{Notations used in proof of     Theorem~\ref{thm:reduce-candidate-set}. In the three sub-figures, the $n$ sources are denoted by squares, the $m$ sinks are denoted by dots, the sinks $[m']$ are those in the yellow rectangle, and sinks $T'$ are represented by red dots.       The lines connecting the sources and the sinks in sub-figure (a) denote $\supp(x^*)$. The lines in sub-figures (b) and (c) denote the pre-selected edges for the first and second PFCT-S instances respectively. The blue thick lines show the difference between the two instances.}
        \label{fig:modify-x}
    \end{figure}
\begin{proof}
    For every $i \in [n]$, let $T_i$ be the set of sinks connected to $i$ in $\supp(x^*)$. So, $T_1, T_2, \cdots, T_n$ form a partition of $[m]$ as $\supp(x^*)$ is a forest of stars centered at sources. Let $m' = n/\epsilon$; $[m']$ is the set of the $m'$ sinks with the largest $b$ values. 

    Let $E' \subseteq \supp(x^*)$ be the $m'$ most expensive edges in $\supp(x^*)$, and $T'$ be the set of sinks incident to $E'$. Let $T'_i = T' \cap T_i$ for every $i \in [n]$. As $\supp(x^*)$ is a forest of stars and $f_1 \geq f_2 \geq \cdots \geq f_n$, there exists some $n' \in [n]$ such that $T'_i = T_i$ for every $i \in [n' - 1]$, $\emptyset \neq T'_i \subseteq T_i$ and $T'_i = \emptyset$ for every $i > n'$. 
    

    
    Notice that $|T'| = m'$, and thus $|T' \setminus [m']| = |[m'] \setminus T'|$. Moreover the $b$-value of any sink in $[m'] \setminus T'$ is at least the $b$-value of any sink in $T' \setminus [m']$. We can use a greedy algorithm to construct disjoint subsets $U_1, U_2, \cdots, U_{n'}$ of $[m'] \setminus T'$ such that the following happens for every $i \in [n']$.
    \begin{enumerate}[leftmargin=*, label=(P\arabic*)]
        \item $b(U_i) \geq b(T'_i\setminus [m'])$.
        \item $b(U_i \setminus j) < b(T'_i \setminus [m'])$ for every $j \in U_i$. This implies $|U_i| \leq |T'_i \setminus [m']|$.
    \end{enumerate}
    
Let $U = U_1 \cup U_2 \cup \cdots \cup U_{n'} \subseteq [m'] \setminus T'$; notice that it is possible that $U \subsetneq [m'] \setminus T'$.   See Figure~\ref{fig:modify-x}(a) for an illustration of definitions we made so far.\medskip

We define two PFCT-S instances, on which we apply Corollary~\ref{coro:compare}.  Both instances are residual instances of the given instance.  The first residual instance is obtained by assigning demands in $T'$, but not $[m] \setminus T'$, according to $x^*$. See Figure~\ref{fig:modify-x}(b) for the instance.  Therefore, in the first PFCT-S instance,
\begin{itemize}
    \item every sink $i > n'$ has $a_i$ units of supply, and 
    \item every $j \in [m] \setminus T'$ has $b_j$ units demand. 
\end{itemize}\medskip

The second residual instance is obtained from the first one by switching the flows sent to $T'_i \setminus [m']$ to $U_i$, for each $i \in [n']$. By (P1) and (P2), we can guarantee that at most one sink in $U_i$ has a positive demand remaining. See Figure~\ref{fig:modify-x}(c) for an illustration of the instance. Therefore, in the second PFCT-S instance,
\begin{itemize}
    \item every sink $i > n'$ has $a_i$ units of supply,
    \item every $j \in (m', m] \cup ([m'] \setminus T' \setminus U)$ has $b_j$ units demand, and
    \item for every $i \in [n']$, there exists a $j \in U_i$ with $b(U_i) - b(T'_i \setminus [m'])$ units demand, if the amount is positive.
\end{itemize} \medskip


We then compare the two PFCT-S instances using Corollary~\ref{coro:compare}. We define $\pi(t)$ and $\pi'(t)$ as in the corollary.  Let $D = |[m']\setminus T'| - |U| = |T' \setminus [m']| - |U| = \sum_{i \in [n']} (|T'_i \setminus [m'
]| - |U_i|)$. 

\begin{lemma}
    $\pi'(t) \leq \pi(t) + D + n'$ for every $t \geq (0, b(U)]$.
\end{lemma}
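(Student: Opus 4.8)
The plan is to compare the sorted demand sequences of the two residual instances directly. First I would make the difference between the two instances explicit. Both have the same ``common'' multiset of sink demands, namely $C := \{b_j : j \in ([m']\setminus T'\setminus U)\cup((m',m]\setminus T')\}$; the first instance additionally contains the demands $\{b_j : j\in U\}$, while the second instead contains the multiset $A$ consisting of $\{b_j : j\in T'\setminus[m']\}$ together with, for each $i\in[n']$ with $\beta_i := b(U_i)-b(T'_i\setminus[m']) > 0$, one residual demand equal to $\beta_i$. Two bookkeeping facts are crucial. First, since the switch producing the second instance only moves flow around, the two instances have the same total demand, so the total of $A$ equals $b(U)$ (equivalently $\sum_i\beta_i = b(U)-b(T'\setminus[m'])$, using that $U$ and $T'\setminus[m']$ are the disjoint unions of the $U_i$ and of the $T'_i\setminus[m']$). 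Second, by (P2) (which gives $D=|T'\setminus[m']|-|U|\ge 0$) we have $|A| \le |T'\setminus[m']| + n' = |U|+D+n'$, the ``$+n'$'' reflecting that there are at most $n'$ residual sinks.

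For a multiset $Q$ of positive reals and $0\le k\le|Q|$, write $\sigma_Q(k)$ for the sum of the $k$ largest elements of $Q$, and abbreviate $\sigma_U := \sigma_{\{b_j:j\in U\}}$. Fix $t\in(0,b(U)]$; since $b(U)$ is at most the common total demand, both $\pi(t)$ and $\pi'(t)$ are well defined. The $\pi(t)$ largest demands of the first instance split as the $p$ largest elements of $C$ together with the $q$ largest elements of $U$ for some $p,q\ge 0$ with $p+q=\pi(t)$, and by definition of $\pi$ this prefix has total at least $t$, i.e. $\sigma_C(p)+\sigma_U(q)\ge t$. In the second instance I would then point to the set formed by the $p$ largest elements of $C$ and the $a := \min\{q+D+n',\,|A|\}$ largest elements of $A$. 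This set has $p+a \le \pi(t)+D+n'$ elements, so it suffices to show its total demand is at least $t$, which (given $\sigma_C(p)+\sigma_U(q)\ge t$) reduces to the key inequality $\sigma_A(a) \ge \sigma_U(q)$; from it, $\pi'(t)\le p+a\le\pi(t)+D+n'$ follows, since the $\pi(t)+D+n'$ largest demands of the second instance have total at least that of any such set.

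The main obstacle is the key inequality $\sigma_A(a)\ge\sigma_U(q)$, in particular controlling the residual demands $\beta_i$, which could individually be large. If $a=|A|$ it is immediate, as $\sigma_A(|A|)$ is the total of $A$, which equals $b(U)\ge\sigma_U(q)$. If $a=q+D+n'<|A|$, I would rephrase the inequality, using that $A$ and $U$ have equal totals, as a comparison of smallest elements: it is enough that the sum of the $k := |A|-a = |A|-q-D-n'$ smallest elements of $A$ be at most the sum of the $|U|-q$ smallest elements of $U$ (note $k\le|U|-q$ since $|A|\le|U|+D+n'$). This I would prove by a short domination chain: (i) since $\{b_j:j\in T'\setminus[m']\}$ is a sub-multiset of $A$ and $k\le|T'\setminus[m']|$, the $k$ smallest elements of $A$ are, term by term, at most the $k$ smallest elements of $\{b_j:j\in T'\setminus[m']\}$; (ii) since $k\le|U|-q\le|T'\setminus[m']|$ and all demands are positive, that sum is at most the sum of the $|U|-q$ smallest elements of $\{b_j:j\in T'\setminus[m']\}$; and (iii) since $U\subseteq[m']\setminus T'$, every demand in $T'\setminus[m']$ is at most every demand in $U$, so the $i$-th smallest of $\{b_j:j\in T'\setminus[m']\}$ is at most the $i$-th smallest of $U$ for every $i$, and hence the $(|U|-q)$-prefix sums compare the right way. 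Chaining (i)--(iii) gives the key inequality and finishes the proof. The hypothesis $t\le b(U)$ is used only to keep $\pi(t),\pi'(t)$ well defined on the range of interest; tracking $|A|\le|U|+D+n'$ rather than the cruder $|A|\le m$ is exactly what makes the final count $\pi(t)+D+n'$ instead of something involving $|U|$.
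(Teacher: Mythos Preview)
Your proof is correct. Both you and the paper share the same opening reduction: separate out the common sinks $C$ present in both residual instances, so that the comparison reduces to showing that the ``extra'' multiset $A$ in the second instance dominates $\{b_j:j\in U\}$ in the first, up to $D+n'$ additional elements. Where you diverge is in how that domination is established.

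The paper argues per source $i\in[n']$: given $V^1_i\subseteq U_i$, it explicitly constructs $V^2_i\subseteq U_i\cup(T'_i\setminus[m'])$ by keeping the (at most one) residual sink in $U_i$ and discarding $|U_i\setminus V^1_i|$ sinks from $T'_i\setminus[m']$, then observes that each discarded sink has $b$-value at most that of any sink in $U_i\setminus V^1_i$; summing over $i$ gives $|V^2|\le|V^1|+D+n'$ with at least as much demand. You instead work globally with sorted prefix sums: using that $A$ and $\{b_j:j\in U\}$ have equal totals, you convert the desired inequality on largest-element prefixes into one on smallest-element prefixes, and bound the latter via the chain $\{b_j:j\in T'\setminus[m']\}\subseteq A$ (so smallest-$k$ sums only drop) together with $\max_{j\in T'\setminus[m']}b_j\le\min_{j\in U}b_j$. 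Your route sidesteps the per-$i$ bookkeeping and is arguably cleaner; the paper's route is more constructive and in fact proves the slightly stronger statement that \emph{every} $V^1$ (not only a top prefix of $U$) can be matched by some $V^2$. Both rest on the same key fact that demands indexed in $[m']$ dominate those indexed in $(m',m]$.
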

\begin{proof}
    It suffices for us to show that, for any $V^1 \subseteq [m] \setminus T'$, there exists a set $V^2 \subseteq [m]$ of size at most $|V^1| + D + n'$ such that the total demand in $V^1$ in the first PFCT-S instance is at most the total demand in $V^2$ in the second PFCT-S instance.  Notice that the sinks in $\big((m', m] \setminus T'\big) \cup \big([m'] \cap T'\big) \cup \big([m'] \setminus T' \setminus U\big)$ have the same demands in both instances. It suffices for us to focus on sets $V^1 \subseteq U$ and require $V^2 \subseteq U \cup (T' \setminus [m'])$.
    
    Consider any subset $V^1 \subseteq [m'] \setminus T'$; define $V^1_i = U_i \cap V^1$ for every $i \in [n']$. We shall construct a set $V^2$ of sinks as follows. For every $i \in [n']$, we add to $V^2$ the sink in $U_i$ with a positive supply in the second instance, if it exists. Then, we add to $V^2$ an arbitrary subset of $T'_i \setminus [m']$ of size $|T'_i \setminus [m']| - |U_i \setminus V^1_i|$. Clearly, the total demand of these sinks in the second instance is at least $b(V^1_i)$. The total demand of $W^2 := U_i \cup (T'_i \setminus [m'])$ in the second instance is equal to $b(U_i)$. The $b$-value of any sink in $T'_i \setminus [m']$ is at most the $b$-value of any sink in $U_i$. So, after we remove $|U_i \setminus V^1_i|$ sinks in $(T'_i \setminus [m'])$ from $W^2$, and $|U_i \setminus V^1_i|$ from $U_i$, the demand in the former set is at least the demand in the latter, w.r.t their respective PFCT-S instances. 
    
    Therefore, the total demand at $V^2$ in the second instance is at least $b(V^1)$. Moreover, $|V^2| \leq \sum_{i \in [n']}(|T'_i \setminus [m']| - |U_i \setminus V^1_i| + 1) = \sum_{i \in [n']}(|T'_i \setminus [m']| - |U_i|) + \sum_{i \in [n']} |V^1_i| + n' = D + |V^1| + n'$. The lemma then follows.
\end{proof}

Therefore, using the Corollary~\ref{coro:compare}, the greedy solution we construct for the second instance has cost at most that of the optimum solution for the first instance, plus $(D + n')f_{n'+1} + f_{n'+2} + f_{n'+3} + \cdots + f_n \leq (D + n - 1)f_{n'+1}$.  The cost for the pre-selected edges in the first instance, is that in the second instance, plus 
$$\displaystyle
    \sum_{i \in [n']} f_i (|U_i| - |T'_i \setminus [m']|) \geq f_{n'+1} \sum_{i \in [n']}(|U_i| - |T'_i \setminus [m']|) = D f_{n'+1}.$$

We let $x$ be the solution for the original instance, using the greedy solution for the second residual instance. That is, $x$ contains the pre-selected edges for the second instance, and the edges given by the greedy algorithm. $x^*$ contains the pre-selected edges for the first residual instance, the optimum solution for the instance. Therefore, the cost of $x$ minus the cost of $x^*$ is at most $(D + n - 1)f_{n'+1} - Df_{n'+1} = (n-1)f_{n'+1}$ more than the cost of $x^*$.  This is at most $\epsilon$ times the cost of $E'$, which is at most $\epsilon$ times cost of $x^*$. 

Finally, in the solution $x$, there is no ``crossing'' between sinks $[m']$ and $(m', m]$: there is no pair $ij, (i', j') \in \supp(x)$ such that $i > i', j \in [m']$ and $j' \in [m] \setminus [m']$.  This holds as pre-selected edges of the instance go from $[n']$ to $[m']$. The greedy algorithm only makes connections to sources $(n', n]$ and it does not create crossings. 
%
\end{proof}

Now, we shall remove the assumption in Theorem~\ref{thm:reduce-candidate-set} that $x^*$ is the union of stars centered at sources. This is summarized in the following corollary.
\begin{coro}
    \label{coro:reduce-candidate-set}
    Let $S, n, T, m, (a_i)_{i \in [n]}, (b_j)_{j \in [m]}$ and $(f_i)_{i \in [n]}$ be as in Theorem~\ref{thm:reduce-candidate-set}; they define a PFCT-S instance. Let $\epsilon > 0$ be a small enough constant. Then there is an $(1+\epsilon)$-approximate solution $x$, such that the $n/\epsilon - n$ most expensive edges in $\supp(x)$ are incident to $[n/\epsilon + n]$.
\end{coro}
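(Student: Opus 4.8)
The plan is to reduce Corollary~\ref{coro:reduce-candidate-set} to Theorem~\ref{thm:reduce-candidate-set} by a ``sink-splitting'' preprocessing step that turns an arbitrary optimal solution into one supported on a forest of stars centered at sources. Concretely, let $x^*$ be an optimum solution of the given PFCT-S instance; as in Theorem~\ref{thm:PFCT-S-2}, by cancelling cycles we may assume $\supp(x^*)$ is a forest. Every sink has positive demand, hence degree at least $1$ in $\supp(x^*)$; since a forest on $n+m$ vertices has at most $n+m-1$ edges, the total ``excess'' incidence $\sum_{j}(\deg_{\supp(x^*)}(j)-1)$ is at most $n-1$, so at most $n-1$ sinks have degree at least $2$. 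For each such sink $j$, I would replace it by one copy per incident source $i$, giving that copy demand $x^*_{ij}$; sinks of degree $1$ stay as they are. This yields a PFCT-S instance with the same sources $[n]$, the same cost vector $(f_i)$, and $m'\le m+n-1$ sinks, in which the natural image $\tilde x^*$ of $x^*$ is supported on disjoint stars centered at sources. Merging copies back only ever identifies two edges that share a source (hence with equal $f$-value), which never increases cost, while splitting preserves cost; thus the optimum of the split instance equals $\opt$ of the original one and $\tilde x^*$ is optimal for it.

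I would then apply Theorem~\ref{thm:reduce-candidate-set} to the split instance (with the same $\epsilon$), getting a $(1+\epsilon)$-approximate solution $y$ whose $n/\epsilon$ most expensive edges are incident to the $n/\epsilon$ split sinks of largest demand; in fact, by the ``no crossing'' property established in that proof, every edge of $\supp(y)$ incident to one of those $n/\epsilon$ split sinks is at least as expensive as every edge incident to any other sink. Let $x$ be the solution of the original instance obtained by merging copies in $y$. Since merging does not increase cost, $x$ is $(1+\epsilon)$-approximate. For the incidence property, let $W\subseteq\supp(y)$ be the set of edges incident to the top $n/\epsilon$ split sinks and let $\bar W\subseteq\supp(x)$ be its image; since merging only identifies edges that share a source, every edge of $\bar W$ is at least as expensive as every edge of $\supp(x)\setminus\bar W$, so $\bar W$ consists of exactly the most expensive $|\bar W|$ edges of $\supp(x)$. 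Moreover $|W|\ge n/\epsilon$ and merging removes at most $n-1$ edges from $\supp(y)$, so $|\bar W|\ge n/\epsilon-n+1$; hence the $n/\epsilon-n$ most expensive edges of $\supp(x)$ lie in $\bar W$, i.e., are incident to original sinks that are parents of the top $n/\epsilon$ split sinks. It then remains to check, by comparing the demand order of the original sinks with that of the split sinks, that this set of parent sinks is contained in $[n/\epsilon+n]$: a split sink of demand $d$ is a copy of an original sink of demand $\ge d$, a non-split original equals its unique copy, and at most $n-1$ originals are split, so the two demand orders agree up to a shift of $n-1$; and crucially no split original can own a copy of its own full demand (the other copies would then carry zero flow), which rules out the degenerate tie cases.

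The main obstacle is precisely this last bookkeeping step: carefully relating the two sink orderings and the two edge orderings while ties are present and while merging identifies some edges. One has to fix the tie-break of the split instance compatibly with the tie-break used for the original sinks (so that, restricted to unsplit sinks, the two orders literally coincide), quantify exactly how the $\le n-1$ split sinks perturb both orders, and lean on the structural fact that a source cannot carry two copies of the same original sink without the merged edge simply being cheaper. These are exactly the places where the clean ``$n/\epsilon$ / $[n/\epsilon]$'' guarantee of Theorem~\ref{thm:reduce-candidate-set} degrades to the ``$n/\epsilon-n$ / $[n/\epsilon+n]$'' guarantee of the corollary; everything else is routine.
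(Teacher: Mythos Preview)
Your proposal is correct and follows essentially the same approach as the paper: split each multiply-connected sink of an optimal forest solution into one copy per incident source, apply Theorem~\ref{thm:reduce-candidate-set} to the resulting star-supported instance, merge back, and absorb the at-most-$(n-1)$ slippage in both the edge count and the sink-rank bound. Your treatment is in fact more careful than the paper's on two points the paper leaves implicit---the tie-breaking between the two sink orderings, and the fact that merging removes at most $n-1$ edges (which holds because the solution produced in Theorem~\ref{thm:reduce-candidate-set} has forest support, so for each original sink with $o$ copies the induced bipartite subgraph on those copies has at most $o-1$ more edges than distinct incident sources).
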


\begin{proof}
    Let $x^*$ be the optimum solution to the PFCT-S instance. We modify the instance and $x^*$ as follows. Assume some sink $j$ is incident to $o \geq 2$ sources $i_1, i_2, \cdots, i_o$ in $\supp(x^*)$, and the flow sent by the $o$ sources are respectively $y_1 := x^*_{i_1j}, y_2 := x^*_{i_2j}, \cdots, y_o = x^*_{i_oj}$, with $y_1 + y_2 + \cdots + y_o = b_j$.  We can then modify the PFCT-S instance by splitting $j$ into $o$ sinks, with demands being $y_1, y_2, \cdots, y_o$ respectively. We modify the solution $x^*$ naturally: The $q$-th newly created sink receives $y_q$ units flow from $i_q$.  Notice that this operation can only make the instance harder: any solution for the new instance can be converted into one for the old instance with no greater cost. On the other hand, the cost of $x^*$ does not change.  We repeat the operation until every sink is incident to exactly one source in $\supp(x^*)$. Notice that the $x^*$ is a forest at the beginning, and thus the splitting operations will increase the number of sinks by at most $n - 1$. 
    
    We then apply Theorem~\ref{thm:reduce-candidate-set} on the new instance; so there is a $(1+\epsilon)$ approximate solution $x$, such that the $n/\epsilon$ most expensive edges $E'$ in $\supp(x)$ are incident to sinks in $[n/\epsilon]$. By merging sinks, $x$ can be transformed into a solution for the original instance with no greater cost. Some edges may be merged due to the merging of sinks. But the merging operation does not change the fact that $E'$ remains the most expensive edges after merging. $|E'|$ may be decreased by at most $n - 1$.  Due to the merging of sinks, the parent sinks of the $n/\epsilon$ sinks with the largest $b$ values in the new instance is a subset of the $n/\epsilon + n$ sinks with the largest $b$ values in the original instance. 
\end{proof}

\subsection{Efficient Parameterized Approximation Scheme (EPAS) for PFCT}
\label{subsec:EPAS-PFCT}

In this section, we use Corollary~\ref{coro:reduce-candidate-set} built in Section~\ref{subsec:reduce-candidate-set-PFCT-S} to design an Efficient Parameterized Approximation Scheme (EPAS) for the Pure Fixed Charge Transportation (PFCT) problem. Notice that now every pair $ij \in S \times T$ has a $f_{ij}$ value. We can use standard techniques to discretize the $f_{ij}$ values so that there are only $O(\log_{1+\epsilon}m)$ different $f_{ij}$ values. By guessing we assume we know the cost $F$ of the most expensive edge used in the optimum solution. If an edge has $f_{ij} > F$ we can change $f_{ij}$ to $\infty$. If an edge has $f_{ij} < \epsilon F/(n + m)$, we can change it to $0$. This will only incur a loss of $1+\epsilon$ in the approximation ratio. For the remaining edges, we round $f_{ij}$ up to its nearest integer power of $1+\epsilon$. Therefore, there are only $O(\log_{1+\epsilon}m) = O\left(\frac{\log m}{\epsilon}\right)$ different $f_{ij}$ values, after this processing. We only lose a factor of $(1+\epsilon)^2$ in the approximation ratio. 

We partition $T$ into classes, where two sinks $j$ and $j'$ are in the same class if $f_{ij} = f_{ij'}$ for every $i \in S$. So, there are at most $\left(O\left(\frac{\log m}{\epsilon}\right)\right)^n$ different classes of sinks. Let $M = \left(O\left(\frac{\log m}{\epsilon}\right)\right)^n$ be this number. 

\begin{lemma}
    We can efficiently construct a set $T' \subseteq T$ of at most $M(n/\epsilon + n)$ sinks such that the following holds. There is a $(1+\epsilon)$-approximate solution to the PFCT instance, where the $n/\epsilon - n$ most expensive edges in $\supp(x)$ are incident to $T'$.
\end{lemma}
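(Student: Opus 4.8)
The plan is to handle each of the $M$ sink-classes separately as a PFCT-S instance and then combine the candidate sets. Fix the optimum solution $x^*$ to the PFCT instance, and assume WLOG (by the usual argument) that $\supp(x^*)$ is a forest. For each class $C$ of sinks, consider the \emph{residual} PFCT-S instance $I_C$: its sources are $S=[n]$, each with the residual supply $\sum_{j\in C} x^*_{ij}$, its sinks are the sinks in $C$ with their original demands, and its (sink-independent!) fixed cost vector is $(f_{iC})_{i\in[n]}$, the common incident cost vector of the class $C$. The restriction $x^*|_C$ of the optimum solution to $C$ is a feasible solution to $I_C$, and its cost in $I_C$ is exactly the contribution of class-$C$ edges to the cost of $x^*$. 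Hence the optimum of $I_C$ is at most this contribution, and $\sum_C \opt(I_C) \le \opt$ of the PFCT instance.

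Next I would apply Corollary~\ref{coro:reduce-candidate-set} to each $I_C$: it yields a $(1+\epsilon)$-approximate solution $x^C$ to $I_C$ whose $n/\epsilon - n$ most expensive edges are incident to the $n/\epsilon + n$ sinks of $C$ with the largest demands. Let $T'_C$ be that set of $\le n/\epsilon + n$ sinks, and set $T' := \bigcup_C T'_C$, so $|T'| \le M(n/\epsilon + n)$ as required. Now glue the $x^C$'s together into a single vector $x$ for the PFCT instance: for each class $C$, route flow on the class-$C$ edges according to $x^C$. Since each $x^C$ respects the sink demands in $C$ and the residual source supplies were defined from $x^*$, the concatenation satisfies $\sum_j x_{ij} = a_i$ for every $i$ and $\sum_i x_{ij} = b_j$ for every $j$, i.e.\ $x \in \calX$. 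The cost of $x$ is $\sum_C (\text{cost of } x^C \text{ in } I_C) \le \sum_C (1+\epsilon)\opt(I_C) \le (1+\epsilon)\opt$, so $x$ is $(1+\epsilon)$-approximate (absorbing the earlier $(1+\epsilon)^2$ discretization loss into an overall $O(\epsilon)$, then rescaling $\epsilon$).

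It remains to argue the structural claim about the most expensive edges of $x$. Here I would use that all fixed costs have been discretized to $O(\log m/\epsilon)$ distinct values, and more importantly that edges incident to a sink in class $C$ carry cost from the vector $(f_{iC})_i$. The key point: across all classes, an edge not incident to $T'$ has, \emph{within its own class}, cost no larger than the cost of the cheapest edge of $x^C$ incident to $T'_C$; but to compare edges across different classes one needs that the $n/\epsilon-n$ globally most expensive edges of $x$ are captured. I expect the main obstacle to be precisely this cross-class bookkeeping: within each class we only control the top $n/\epsilon-n$ edges, and there are $M$ classes, so a naive union would only control the top $M(n/\epsilon-n)$ edges globally. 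The resolution I would pursue is that the edges incident to $T'$ already dominate — any edge of $x$ outside $T'$ lies in some class $C$ and is cheaper (in the discretized costs) than \emph{every} edge of $x^C$ incident to $T'_C \subseteq T'$, hence than at least one edge of $x$ incident to $T'$; so if more than $n/\epsilon - n$ of the top edges of $x$ were outside $T'$, we could within the relevant classes find a cheaper-or-equal edge incident to $T'$ to swap in, contradicting maximality of the chosen expensive set. Making this swap argument precise, while tracking the $\pm n$ slack that Corollary~\ref{coro:reduce-candidate-set} already builds in, is the delicate step; everything else is routine concatenation and cost accounting.
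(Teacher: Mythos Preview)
Your approach is exactly the paper's: decompose by class, apply Corollary~\ref{coro:reduce-candidate-set} to each residual PFCT-S instance, and take $T'$ to be the union of the per-class candidate sets. The construction of $x$ and the cost bound are correct as you wrote them.

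The one place you overcomplicate is the final structural claim, which you call the ``delicate step''. It is in fact immediate, and the cross-class bookkeeping you worry about never arises. The point is stronger than you state: for any edge $ij\in\supp(x)$ with $j\notin T'$, let $C$ be the class of $j$. By Corollary~\ref{coro:reduce-candidate-set}, $ij$ is \emph{not} among the $n/\epsilon-n$ most expensive edges of $\supp(x^C)$, so there are at least $n/\epsilon-n$ edges in $\supp(x^C)\subseteq\supp(x)$, all incident to $T'_C\subseteq T'$, each with cost $\ge f_{ij}$. Hence every edge outside $T'$ is (weakly) dominated by at least $n/\epsilon-n$ edges incident to $T'$, and therefore cannot be strictly among the global top $n/\epsilon-n$; with tie-breaking in favour of $T'$-incident edges, the top $n/\epsilon-n$ are all incident to $T'$. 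Your phrasing ``cheaper than \emph{every} edge of $x^C$ incident to $T'_C$'' is too strong (and unnecessary), and ``than at least one edge'' is too weak; the correct count is $n/\epsilon-n$, which is exactly what you need.
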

\begin{proof}
    Let $T^1, T^2, \cdots, T^M \subseteq T$ be the $M$ classes.  Focus on the optimum solution $x^*$ for the instance. We can partition $x^*$ into $M$ pieces, with the $p$-th piece being $x^*|_{S \times T^p}$. Then, we can view $x^*|_{S \times T^p}$ as a solution to the PFCT-S instance with sources $S$, sinks $T^p$, demand vector $(b_j)_{j \in T^p}$. \begin{itemize}
        \item The supply for a source $i \in S$ is $\sum_{j \in T^p} x^*_{ij}$,
        \item The cost $f_i$ for a source $i \in S$ is $f_{ij}$ for any $j \in T^p$. As all $f_{ij}$'s are the same for all $j \in T^p$, $f_i$ is uniquely defined.
    \end{itemize}
    By Corollary~\ref{coro:reduce-candidate-set}, there is a $1+\epsilon$-approximate solution $x^p$ to the PFCT-S instance, where the $n/\epsilon - n$ most expensive edges are incident to the $n/\epsilon + n$ sinks in $T'^p$ with the largest demands; call these sinks $T^p$. Putting $x^p$'s for all $p \in [M]$ together gives a solution $x$ for the whole PFCT instance. The $n/\epsilon - n$ most expensive edges in $\supp(x)$ are incident to $T':=\union_{p \in [M]}T'^p$. 
\end{proof}

Therefore, we can guess the $n/\epsilon - n$ most expensive edges in $x$. There are at most $(Mn(n/\epsilon + n))^{n/\epsilon} = \left(O\left(\frac {n^2}\epsilon\right)\right)^{n/\epsilon}\cdot \left(O\left(\frac{\log m}{\epsilon}\right)\right)^{n^2/\epsilon}$ different possibilities for the set. Once we obtain the set, we can use the greedy algorithm to obtain a solution with approximation ratio $1 + \frac{n-1}{n/\epsilon - n} \leq 1 + 2\epsilon$.  Overall, the approximation ratio we obtain is $1 + O(\epsilon)$.

The running time of the algorithm is $\left(O\left(\frac {n^2}\epsilon\right)\right)^{n/\epsilon}\cdot \left(O\left(\frac{\log m}{\epsilon}\right)\right)^{n^2/\epsilon} \cdot \poly(m)$.  We focus on the term $\left(O\left(\frac{\log m}{\epsilon}\right)\right)^{n^2/\epsilon}$.  When $n \leq \left(\epsilon \log m\right)^{1/3}$, this is at most $m^{O(1)}$. When $n > \left(\epsilon\log m\right)^{1/3}$, the bound is at most $\left(O\left(\frac{n^3}{\epsilon^2}\right)\right)^{n^2/\epsilon} = \left(\frac n\epsilon\right)^{O(n^2/\epsilon)}$.  Therefore, the running time of the algorithm is at most $\left(\frac{n}{\epsilon}\right)^{O(n^2/\epsilon)}\cdot \poly(m)$.

    \section{Discussion}

In this paper, we initiated a systematic study of variants of the Fixed Charge Transportation problem, and provided a complete characterization of the existence of $O(1)$-approximation algorithms for them.  Although our techniques are simple, we believe this work opens a new research direction on the problem.

There are several interesting open problems. For the PFCT-S and FCT-U problems, can we achieve an approximation ratio better than 2? For the FCT-S problem, is it possible to obtain a polylogaritmic approximation? For the PFCT problem,  can we show that it  is not significantly harder than the Directed Steiner Tree problem? For example, by allowing quasi-polynomial time algorithms, can we obtain a polylogarithmic approximation?  Finally, can we establish much stronger hardness of approximation results for the two problems, or the most general FCT problem?

    \newpage

    \appendix

    \section{$2$-Approximation for Fixed Charge Transportation with Uniform Fixed Costs (FCT-U)}

The Fixed Charge Transportation with Uniform Fixed Costs (FCT-U) problem has a simple $2$-approximation, and thus we include it in the preliminaries. We ignore the fixed costs, and consider the problem of minimizing the linear cost $\sum_{i \in S, j \in T}c_{ij}x_{ij}$.  This is simply a linear program and thus we can obtain an optimum solution efficiently.  WLOG, we can assume the support of the solution $x$ is a forest. If there is a cycle in the support, we can rotate the flow on the cycle in one direction until the flow on some edge becomes $0$.  There are two directions in which we can rotate the cycle, and we can choose the direction so that the operation does not increase the cost.  

As the solution $x$ we obtain is a forest, its fixed cost is at most $m + n - 1$. The fixed cost of any solution is at least $\max\{m, n\} \geq (m + n - 1)/2$. As $x$ minimizes the linear cost, it is a $2$-approximate solution to the FCT-U instance.  This proves Theorem~\ref{thm:FCT-U-2}.

\section{Set-Cover Hardness of Fixed Charge Transportation with Sink-Independent Costs (FCT-S)}
\label{subsec:FCT-S-set-cover}
In this section, we give the reduction from Set Cover to Fixed Charge Transportation with Sink-Independent Costs (FCT-S) problem; this proves Theorem~\ref{thm:FCT-S-Set-Cover-hard}. Indeed, our reduction is to the special case of FCT-S with $c_{ij} \in \{0, \infty\}$ for every $ij \in S \times T$. The mere role of the $c$-vector is to define the set of edges that can be used. 

\paragraph{Set Cover} In the Set Cover problem, we are given the ground set $[n]$, and $m$ sets $S_1, S_2, \cdots, S_m \subseteq [n]$. The goal of the problem is to choose the smallest number of subsets to cover the whole ground set $[n]$. That is, to find the smallest $I \subseteq [m]$ with $\union_{i \in I} S_i = [n]$.  It is well-known that a simple greedy algorithm yields a $(\ln n + 1)$-approximation for the problem \cite{J74, L75}. A long line of research \cite{LY94, F98, DS14} culminated in a tight lower bound by Dinur and Steurer \cite{DS14}, who proved  that it is NP-hard to approximate the Set Cover problem within a factor of $(1 - \epsilon) \ln n$, for any constant $\epsilon > 0$. 


Throughout, we shall view the Set Cover problem as following dominating set problem on a bipartite graph. We are given a bipartite graph $(V, U, E)$ with $|V| = m$ and $|U| = n$, and the goal is to choose the smallest subset $V' \subseteq V$ to \emph{dominate} $U$: $V'$ dominates $U$ if every $u \in U$ is adjacent to at least one vertex in $V'$. By viewing each $v \in V$ as a set, each $u \in U$ as an element, and $vu \in E$ indicates $u$ is in the set $v$, we can see that the problem is indeed the Set Cover problem. \medskip
 
At a high level, we reduce the Set Cover instance to a PFCT-Digraph instance with a single source $s^*$. We then apply the splitting operation in Section~\ref{subsec:PFCT-from-DST} to change this to a FCT-S instance, where we use $c_{ij}$ values to control which edges can be used. The outgoing edges of $s^*$ have fixed costs $1$; the other edges have fixed costs $0$. Thus the costs are sink-independent. See Figure~\ref{fig:set-cover-FCT-S} for an illustration. 
\begin{figure}[h]
    \centering
    \includegraphics[width=0.95\linewidth]{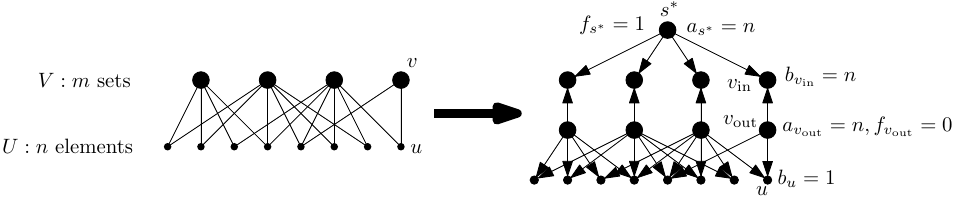}
    \caption{Reduction from Set Cover to FCT-S.}
    \label{fig:set-cover-FCT-S}
\end{figure}

Formally, we introduce a source $s^*$, and a source $v_\rmout$ for every set $v \in V$. So $S = \{s^*\} \cup \{v_\rmout: v \in V\}$. We introduce a sink $v_\rmin$ for every $v \in V$; each $u \in U$ is also a sink. So, $T = \{v_\rmin: v \in V\} \cup U$. As we mentioned, we use the vector $c \in \{0, \infty\}^{S \times T}$ to define the edges that can be used. There is an edge from $v_\rmout$ to $v_\rmin$ for every $v \in V$, an edge from $s^*$ to $v_\rmin$ for every $v \in V$, and an edge from $v_\rmout$ to $u$ for every $(v, u) \in E$, that is, for every $v, u$ such that the set $v$ contains the element $u$.

Then we define the $a, b$ and $f$ values. The source $s^*$ has $a_{s^*} = (m - 1)n$ and $f_{s^*} = 1$. A source $v_\rmout$ has $a_{v_\rmout} = n$ and $f_{v_\rmout} = 0$. A sink $v_\rmin$ has $b_{v_\rmin} = n$, and a sink $u \in U$ has $b_u = 1$. So, the total supply of all sources is equal to the total demand of all sinks. Only $s^*$ has a non-zero $f$-value, which is 1; the other sources have $f$-values being $0$. 

We now show a one-to-one correspondence between a solution $V'$ and a solution for the FCT-S instance. Assume we are given a set $V' \subseteq V$ that dominates $U$ in the bipartite graph $(V, U, E)$; we define a solution to the FCT-S instance with cost $|V'|$. For every $v \in V \setminus V'$, we sent $n$ units flow from $v_\rmout$ to $v_\rmin$; this will satisfy the supplies and demands on the two vertices. Every $u \in U$ is dominated by some $v \in V'$, and we sent one unit flow from $v_\rmout$ to $u$ for this $v$. Notice that every $v_\rmout$ has $n$ units supply, which is sufficient for the demands from $U$. For every $v \in V'$, the remaining unused supply will be sent to $v_\rmin$. After this, the unsatisfied demand at $v_\rmin$ will be satisfied by $s^*$. The cost of this solution is at most $|V'|$, as we only send positive flows from $s^*$ to $v_\rmin$ for $v \in V'$. 

Now focus on a feasible solution to the FCT-S instance. Let $V'$ be the set of vertices $v \in V$ with positive flow sent from $s^*$ to $v_\rmin$; so the cost of the solution is $|V'|$. Then for every $v \in V \setminus V'$, there is no flow sent from $s^*$ to $v_\rmin$, and thus $n$ units flow sent from $v_\rmout$ to $v_\rmin$. This implies that there is no flow sent from $v_\rmout$ to $U$. Therefore, all the demands at $U$ are satisfied by $\{v_\rmout: v \in V'\}$. Therefore, $V'$ dominates $U$.

We then finish the proof of Theorem~\ref{thm:FCT-S-Set-Cover-hard}. To avoid confusion, we use $N$ and $M$ to denote the numbers of sources and sinks in the FCT-S instance we construct, and $n$ and $m$ to denote the numbers of elements and sets in the given Set Cover instance. The hard Set-Cover instances of \cite{DS14} have $m = \poly(n)$. Given the Set Cover instance with $n$ elements and $m = \poly(n)$ sets, we constructed FCT-S instance with $N := n + 1$ sources and $M := m + n$ sinks. If $c$ is small enough, then $c \ln (\max\{N, M\}) = c\ln(m+n) \leq 0.9\ln n$. Then a $c \ln (\max\{N, M\})$-approximation for FCT-S implies a $0.9\ln n$-approximation for Set Cover, contradicting the $(1-\epsilon)\ln n$-hardness result of \cite{DS14}. 

\section{Bi-Criteria Approximation for Fixed Charge Transportation (FCT)}
\label{sec:bicriteria}
    In this section, we prove Theorem~\ref{thm:FCT-bicriteria} by giving the bicriteria-approximation for the FCT problem. Define $p_{ij} = \min\{a_i, b_j\}$ for every $i \in S, j \in T$. We solve the LP of minimizing $\sum_{i \in S, j \in T}\big(c_{ij} + \frac{f_{ij}}{p_{ij}}\big)x_{ij}$ subject to $x \in \calX$. Then for every $i \in S, j \in T$, we define $y_{ij} := \frac{x_{ij}}{p_{ij}} \in [0, 1]$. 

    The support of $y_{ij}$ is a forest. WLOG, we assume it is a tree $\bfT$; if this is not the case, we can consider each tree separately. We shall get a new solution $y'$ as follows. We root $\bfT$ at an arbitrary vertex in $S \cup T$. Focus on every $v$ and all its child edges $E'$. We define $(a|b)_v$ to be $a_v$ if $v \in S$, or $b_v$ if $v \in T$. The vector $y'|_{E'}$ satisfies the following properties:
    \begin{itemize}
        \item $y'_e = y_e$ if $y_e \geq \epsilon$,
        \item $y'_e \in \{0, \epsilon\}$ if $y_e < \epsilon$,
        \item $\sum_{e \in E'} p_e y'_e - \sum_{e \in E'} p_e y_e \in (-\epsilon \cdot (a|b)_v, 0]$, and
        \item $\sum_{e \in E'} (c_e p_e + f_e) y'_e \leq \sum_{e \in E'} (c_e p_e + f_e) y_e$.
    \end{itemize} 
    That is, if $y_e \in [\epsilon, 1]$, we then keep it unchanged. Otherwise, $y_e \in [0, \epsilon)$ and we need to change it to either $0$ or $\epsilon$. We can do this without changing $\sum_{e \in E'}p_e y_e$ or increasing $\sum_{e \in E'}(c_e p_e + f_e) y_e$, until when there is at most one edge $e\in E'$ with $y_e \in (0, \epsilon)$. Then we simply change $y_e$ to $0$. This will not increase $\sum_{e \in E'} (c_e p_e + f_e) y_e$, and $\sum_{e \in E'} p_e y'_e$ will be decreased by at most $\epsilon p_e \leq \epsilon (a|b)_v$. So the four properties can be satisfied.

    Once we obtain the vector $y'$, we can define $x'_{ij} = p_{ij} y'_{ij}$ for every $i \in S, j \in T$.  We output $x'$ as our solution. For every $i \in S$, we have $\sum_{j \in T} p_{ij} y'_{ij} - \sum_{j \in T} p_{ij} y_{ij}  \leq \epsilon a_i$, where the difference can only come from the parent edge of $i$. On the other hand,
    \begin{align*}
         \sum_{j \in T} p_{ij} y_{ij} - \sum_{j \in T} p_{ij} y'_{ij} \leq \epsilon\cdot a_i + \epsilon \cdot a_i = 2\epsilon \cdot a_i. 
    \end{align*}
    The first $\epsilon \cdot a_i$ comes from the parent edge of $i$, and the second one comes from the child edges.  Notice that $\sum_{j \in T}p_{ij}y_{ij} = \sum_{j \in T} x_{ij} = a_i$. Therefore, for every $i \in S$, we have
    \begin{align*}
        \sum_{j \in T}x'_{ij} = \sum_{j \in T}  p_{ij} y'_{ij} \in ((1-2\epsilon) a_i,(1 + \epsilon) a_i].
    \end{align*}
    Similarly, for every $j \in T$, we have 
    \begin{align*}
        \sum_{i \in S}x'_{ij} \in [(1-2\epsilon) b_j, (1 +\epsilon) b_j].
    \end{align*}
    
    Therefore, we have a flow $x' \in \R_{\geq 0}^{S \times T}$ where the amount of flow sent by each $i\in S$ is between $(1-2\epsilon)a_i$ and $(1+\epsilon)a_i$, and the flow received by each $j \in T$ is between $(1-2\epsilon)b_j$ and $(1+\epsilon)b_j$. We can scale the all $x'$ values within a factor between $\frac{1}{1+\epsilon}$ and $\frac{1}{1-2\epsilon}$ so that 
    \begin{itemize}
        \item every source $i$ sends exactly $a_i$ units flow
        \item every sink $j$ receives between $\frac{1-2\epsilon}{1+\epsilon}\cdot b_j$ and $\frac{1+\epsilon}{1-2\epsilon}\cdot b_j$ units of flow. 
    \end{itemize}
    We can scale down the $\epsilon$ value at the beginning so that in the end so that the flow received by each $j$ is within $(1\pm \epsilon)b_j$.  The cost of $x'$ to the FCT instance is at most $O(1/\epsilon)$ times its cost to the LP, which is at most $O(1/\epsilon)$ times the cost of optimum solution to the LP, which is at most $O(1/\epsilon)$ times the cost of the optimum solution to the FCT instance. This finishes the proof of Theorem~\ref{thm:FCT-bicriteria}. 


\begin{thebibliography}{50}

\bibitem{HD54} Hirsch WM, Dantzig GB. Notes on linear programming: Part
XIX—The ﬁxed charge problem. RAND Research Memorandum 1383, RAND Corporation, Santa Monica, CA, 1954.

\bibitem{HD68} Hirsch WM, Dantzig GB. The ﬁxed charge problem. Naval Research
Logistics Quarterly. 15(3):413-424, 1968. 



\bibitem{RW} Rardin RL, Wolsey LA. Valid inequalities and projecting the
multicommodity extended formulation for uncapacitated ﬁxed charge
network ﬂow problems. Eur. J. Oper. Res. 71(1):95–109, 1993.

\bibitem{OW} Ortega F, Wolsey LA. A branch-and-cut algorithm for the single commodity, uncapacitated, ﬁxed-charge network ﬂow problem. Networks.
41(3):143–158, 2003.

\bibitem{A} Agarwal Y. K-partition-based facets of the network design problem.
Networks. 47(3):123–139, 2006.

\bibitem{FG} Frangioni A, Gendron B. Reformulations of the multicommodity
capacitated network design problem. Discr. Appl. Math. 157(6):
1229–1241, 2009.

\bibitem{RKOW} Raack C, Koster AMCA, Orlowski S, Wessaly R. On cut-based
inequalities for capacitated network design polyhedra. Networks. 
57(2):141–156, 2011.










\bibitem{FM} Fisk J, McKeown PG. The pure ﬁxed charge transportation
problem. Naval Research Logistics Quarterly. 26(4):631–641, 1979.

\bibitem{GL} Göthe-Lundgren M, Larsson T. A set covering reformulation
of the pure ﬁxed charge transportation problem. Discrete Applied Mathematics. 48(3):245–259, 1994. 



\bibitem{M} Murty KG. Solving the ﬁxed charge problem by ranking the extreme points. Oper. Res. 16(2):268–279, 1968.

\bibitem{G} Gray P. Exact solution of the ﬁxed-charge transportation problem. Oper. Res. 19(6):1529–1538, 1971.

\bibitem{AA} Agarwal Y, Aneja Y. Fixed-charge transportation problem:
Facets of the projection polyhedron. Oper. Res. 60(3):638–654, 2012.

\bibitem{ZLRP} Zhao Y, Larsson T, Rönnberg E, Pardalos PM. The fixed charge transportation problem: a strong formulation based on Lagrangian decomposition and column generation.  Journal of Global Optimization. 72: 517–538, 2018. 

\bibitem{MR} Mingozzi A, Roberti R. An Exact Algorithm for the Fixed Charge Transportation Problem Based on Matching Source and Sink Patterns. Management Sci. 52(2):229-496, 2018.



\bibitem{KU} Kennigton J, Unger E. A new branch-and-bound algorithm for
the ﬁxed-charge transportation problem. Management Sci. 22(10): 1116–1126, 1976.

\bibitem{BGK} Barr RS, Glover F, Klingman D. A new optimization method for large scale ﬁxed charge transportation problems. Oper. Res. 29(3):448–463, 1981.

\bibitem{CE} Cabot AV, Erenguc SS. Some branch and bound procedures for ﬁxed
cost transportation problems. Naval Res. Logist. Quart. 31(1):145–154, 1984.

\bibitem{RBM} Roberti R, Bartolini E, Mingozzi A. The ﬁxed charge transportation problem: An exact algorithm based on a new integer programming
formulation. Management Sci. 61(6):1197-1471, 2015.



\bibitem{WH} Wright DD, Haehling von Lanzenauer C. Solving the ﬁxed charge
problem with Lagrangian relaxation and cost allocation heuristics. Eur.
J. Oper. Res. 42(3):305–312,1989. 

\bibitem{SAMD} Sun M, Aronson JE, McKeown PG, Drinka D. A tabu search heuristic procedure for the ﬁxed charge transportation problem. Eur. J. Oper.
Res. 106(2–3):441–456, 1998.


\bibitem{A09} Aguado JS. Fixed charge transportation problems: A new heuristic approach based on Lagrangean relaxation and the solving of core
problems. Ann. Oper. Res. 172(1):45–69, 2009.



\bibitem{BRT} Buson E, Roberti R, Toth P. A Reduced-Cost Iterated Local Search Heuristic for the Fixed-Charge Transportation Problem. Oper. Res. 62(5):973-1201, 2014. 


\bibitem{GP} Guisewite G, Pardalos P, Minimum concave-cost network flow problems: applications, complexity, and algorithms, Annals of Operations Research. 25(1): 75–99, 1990.

\bibitem{K05} Kowalski K. On the structure of the fixed charge transportation problem, International Journal of Mathematical Education in Science and Technology.36(8):879–888, 2005.

\bibitem{SFC} Schrenk S, Finke G, Cung VD. Two classical transportation problems revisited: Pure constant fixed charges and the paradox. Mathematical and Computer Modelling. 54:2306-2315, 2011. 









\bibitem{K91} Kann k. Maximum bounded 3-dimensional matching is MAX SNP-complete. Information Processing Letters, 37:27-35, 1991. 

\bibitem{BK} Berman P, Karpinski M. Improved approximation lower bounds on small occurrence optimization. ECCC, 10(008), 2003. 

\bibitem{C13} Cygan M. Improved approximation for 3-dimensional matching via bounded pathwidth local search. In SODA, pages 509–518, 2013. 

\bibitem{FY} Furer M, Yu H. Approximating the k-set packing problem by local improvements. In ISCO, pages 408–420, 2014. 

\bibitem{J74} Johnson DS. Approximation algorithms for combinatorial problems. J. Comput. Syst. Sci., 9(3):256–278, 1974.

\bibitem{L75} Lovász L. On the ratio of optimal integral and fractional covers. Discrete Mathematics, pages 383–390, 1975.


\bibitem{LY94} Lund C and Yannakakis M. On the hardness of approximating minimization problems. J. ACM, 41(5):960–981, 1994. 

\bibitem{F98} Feige U. A threshold of ln n for approximating set cover. Journal of the ACM, 45(4):634–652, 1998.


\bibitem{DS14} Dinur I and Steurer D. Analytical approach to parallel repetition. In STOC, pages 624-633, 2014.


\bibitem{GLL22} Grandoni F, Laekhanukit B, and Li S. $O(log^{2}k/loglog k)$-approximation algorithm for directed steiner tree: A tight quasi-polynomial time algorithm. SIAM Journal on Computing, 52(2):298–322, 2022.

\bibitem{HK03} Halperin E and Krauthgamer R. Polylogarithmic inapproximability. In Lawrence L. Larmore and Michel X. Goemans, editors, Proceedings of the 35th Annual ACM Symposium on Theory of Computing, June 9-11, 2003, San Diego, CA, USA, pages 585–594. ACM, 2003. 

\bibitem{GN22} Ghuge R and Nagarajan V. Quasi-polynomial algorithms for submodular tree orienteering and directed network design problems. Math. Oper. Res., 47(2):1612–1630, 2022.

\bibitem{CCC99} Charikar M, Chekuri C, Cheung T, Dai Z, Goel A, Guha S, and Li M. Approximation algorithms for directed steiner problems. J. Algorithms, 33(1):73–91, 1999.

\bibitem{Z1997} Zelikovsky A. A series of approximation algorithms for the acyclic directed steiner tree
problem. Algorithmica, 18(1):99–110, 1997.
    
\end{thebibliography}
\end{document}